\documentclass{article}
\usepackage[dvipsnames]{xcolor}
\usepackage{gastex}
\usepackage{amsmath}
\usepackage{amssymb}
\usepackage{multirow}
\usepackage{multicol}
\usepackage{makeidx}
\makeindex
\usepackage{theorem}
\newtheorem{theorem}{Theorem}
\newtheorem{lemma}[theorem]{Lemma}
\newtheorem{proposition}[theorem]{Proposition}
\newtheorem{corollary}[theorem]{Corollary}
{\theorembodyfont{\rmfamily}%
  \newtheorem{example}[theorem]{Example}
   }
\newenvironment{proof}{\noindent\textit{Proof.}}
{\QED\vskip\theorempostskipamount} 
\newenvironment{proofof}[1]{\noindent\textit{Proof
    \protect{#1}.}}
                       {\QED\vskip\theorempostskipamount}
\def\petitcarre{\vrule height4pt width 4pt depth0pt}
\def\QED{\relax\ifmmode\eqno{\hbox{\petitcarre}}\else{%
  \unskip\nobreak\hfil\penalty50\hskip2em\hbox{}\nobreak\hfil
  \petitcarre
  \parfillskip=0pt \finalhyphendemerits=0\par\smallskip}
  \fi}

\newcommand\D{\mathcal{D}}
\def\F{\mathcal{F}}

\newcommand\RR{\mathcal{R}}

\newcommand{\Z}{\mathbb{Z}}
\newcommand{\R}{\mathbb{R}}
\newcommand{\Q}{\mathbb{Q}}

\let\edge\xrightarrow
\def\un(#1){\underline{#1}\,}
\DeclareMathOperator{\Card}{Card}

\DeclareMathOperator{\Div}{Div}
\DeclareMathOperator{\Sep}{Sep}

\definecolor{ivoire}{rgb}{0.99,0.99,0.8}
\usepackage{calc}
\newcounter{hours}\newcounter{minutes}
\newcommand\computetime{\setcounter{hours}{\time/60}%
  \setcounter{minutes}{\time-\value{hours}*60}%
  \thehours\,h\,\theminutes}
\newcommand\dateandtime{\today\quad\computetime}
\usepackage[hypertex,hyperindex,
final]{hyperref}
\numberwithin{theorem}{section}
\numberwithin{equation}{section}
\numberwithin{figure}{section}
\numberwithin{table}{section}
\title{Interval exchanges, admissibility and branching Rauzy induction}
\author{Francesco Dolce$^1$, Dominique Perrin$^1$\\\\
$^1$Universit\'e Paris Est, LIGM}
\date{\dateandtime}
\begin{document}
\makeatletter
\def\@listI{%
  \leftmargin\leftmargini
  \setlength{\parsep}{0pt plus 1pt minus 1pt}
  \setlength{\topsep}{2pt plus 1pt minus 1pt}
  \setlength{\itemsep}{0pt}
}
\let\@listi\@listI
\@listi
\def\@listii {%
  \leftmargin\leftmarginii
  \labelwidth\leftmarginii
  \advance\labelwidth-\labelsep
  \setlength{\topsep}{0pt plus 1pt minus 1pt}
}
\def\@listiii{%
  \leftmargin\leftmarginiii
  \labelwidth\leftmarginiii
  \advance\labelwidth-\labelsep
  \setlength{\topsep}{0pt plus 1pt minus 1pt}
  \setlength{\parsep}{0pt} 
  \setlength{\partopsep}{1pt plus 0pt minus 1pt}
}
\makeatother
\maketitle

\begin{abstract}
We introduce a definition of admissibility for subintervals in interval exchange transformations.
Using this notion, we prove a property of the natural codings of interval exchange transformations, namely that any derived set of a regular interval exchange set is a regular interval exchange set with the same number of intervals.
Derivation is taken here with respect to return words.
We characterize the admissible intervals using a branching version of the Rauzy induction.
We also study the case of regular interval exchange transformations defined over a quadratic field and show that the set of factors of such a transformation is primitive morphic.
The proof uses an extension of a result of Boshernitzan and Carroll.
\end{abstract}

\tableofcontents

\section{Introduction}
Interval exchange transformations were introduced by Oseledec~\cite{Oseledec1966} following an earlier idea of Arnold~\cite{Arnold1963}.
Interval exchange transformations have been generalized to transformations called linear involutions by Danthony and Nogueira in~\cite{DanthonyNogueira1990} (for other generalizations, see~\cite{Skripchenko2012}).

The natural coding of interval exchange produces sequences of linear complexity, including Sturmian sequences, and this has been widely studied (see, for example~\cite{FerencziZamboni2008} or~\cite{balazimasakovapelantova2008} for small alphabets).

Rauzy has introduced in~\cite{Rauzy1979} a transformation, now called Rauzy induction (or Rauzy-Veech induction), which operates on interval exchange transformations.
It actually transforms an interval exchange transformation into another one operating on a smaller interval.
Its iteration can be viewed as a generalization of the continued fraction expansion.
The induction consists in taking the first return map of the  transformation with respect to a subinterval of the interval on which the exchange is defined.
The induced map of an interval exchange on $s$ intervals is still an interval exchange with at most $s+2$ intervals.

Rauzy introduced in~\cite{Rauzy1979} the definition of right-admissibility for an interval and characterized the right-admissible intervals as those which can be reached by the Rauzy induction.

Interval exchange transformations defined over quadratic fields have been studied by Boshernitzan and Carroll (\cite{Boshernitzan1988} and \cite{BoshernitzanCarroll1997}).
Under this hypothesis, they showed that, using iteratively the first return map on one of the intervals exchanged by the transformation, one obtains only a finite number of different new transformations up to rescaling, extending the classical Lagrange's theorem that quadratic irrationals have a periodic continued fraction expansion.

In this paper, we generize both the notion of admissible intervals and of Rauzy induction to a two-sided version.

Our main result is a characterization of the admissible intervals (Theorem~\ref{theo:birauzy2}).
We show that, in particular, intervals associated with factors of the natural coding of an interval exchange transformation are admissible (Proposition~\ref{pro:jadm}).

Our motivation is the study of the natural coding of these transformations by words, in the spirit of the research initiated in~\cite{BerstelDeFelicePerrinReutenauerRindone2012} and containing a series of other papers (for instance \cite{BertheDeFeliceLeroyPerrinReutenauerRindone2014} and \cite{BertheDeFeliceLeroyPerrinReutenauerRindone2015}).

We prove a property of the natural codings of regular interval exchange transformations (Theorem~\ref{theo:returns}) saying that the family of these sets of words is closed by derivation, an operation consisting in taking the first return words to a given word as a new alphabet.

We pay special attention to the case of interval exchange transformations defined over a quadratic field.
We prove that the family of transformations obtained from a regular interval exchange transformation by two-sided Rauzy induction is finite up to rescaling.
Moreover, we show that the related interval exchange set is obtained as the set of factors of a primitive morphic word.

The paper is organized as follows.

In Section~\ref{sec:prel} we recall some basic definitions concerning words and sets.
Return words and first return words are also introduced.

In Section~\ref{sec:ie}, we give some notions concerning interval exchange transformations.
We recall the result of Keane~\cite{Keane1975} which proves that regularity is a sufficient condition for the minimality of such a transformation (Theorem~\ref{theo:keane}).
We also introduce the natural codings of interval exchange transformations.
We define the derivate of an interval exchange set with respect to a coding morphism and we show a closure property of these sets.

In Section~\ref{sec:rauzy}, we first recall the notion of \emph{Rauzy induction} introduced in~\cite{Rauzy1979}.
We introduce a branching version of Rauzy induction. 
We prove the generalization of Rauzy's theorems to the two-sided case (Theorems~\ref{theo:birauzy1} and~\ref{theo:birauzy2}).

In Section~\ref{sec:quadratic} we generalize the result of Boshernitzan and Carroll~\cite{BoshernitzanCarroll1997}, enlarging the family of transformations obtained using induction on every admissible semi-interval.
This contains the results of \cite{BoshernitzanCarroll1997} because every semi-interval exchanged by a transformation is admissible, while for $n>2$ there are admissible semi-intervals that we cannot obtain using the induction only on the exchanged ones.
We conclude the Section showing that regular quadratic interval exchange sets are primitive morphic (Theorem~\ref{theo:morphic}).

\paragraph{Acknowledgement} This work was supported by grants from R\'egion \^Ile-de-France and ANR project Eqinocs.

\section{Preliminaries}
\label{sec:prel}

In this section, we first recall some definitions concerning words.
We recall the definition of recurrent and uniformly recurrent sets of words (see~\cite{Lothaire2002} for a more detailed presentation).
We introduce the notion of fist return words and derived words.
Derived words have been widely studied, in particular in the context of substitutive dynamics (see~\cite{Durand1998} for example) and are intimately connected with induction.

\subsection{Words and recurrent sets}
Let $A$ be a finite nonempty alphabet. All words considered below, unless stated explicitly, are supposed to be on the alphabet $A$.
We denote by $A^*$ the set of all words on $A$.
We denote by $1$  or $\varepsilon$ the empty word.
We denote by $|w|$ the length of a word $w$.
A set of words is said to be \emph{factorial} if it contains the factors of its elements.

A \emph{morphism} $f : A^*\rightarrow B^*$ is a monoid morphism from $A^*$ into $B^*$.
If $a \in A$ is such that the word $f(a)$ begins with $a$ and if $|f^n(a)|$ tends to infinity with $n$, there is a unique infinite word denoted $f^\omega(a)$ which has all words $f^n(a)$ as prefixes.
It is called a \emph{fixed point} of the morphism $f$.

A morphism $f : A^*\rightarrow A^*$ is called \emph{primitive} if there is an integer $k$ such that for all $a, b \in A$, the letter $b$ appears in $f^k(a)$.
If $f$ is a primitive morphism, the set of factors of any fixed point of $f$ is uniformly recurrent (see~\cite{PytheasFogg2002} Proposition 1.2.3 for example).

An infinite word $y$ over an alphabet $B$ is called \emph{morphic} if there exists a morphism $f$ on an alphabet $A$, a fixed point $x = f^\omega(a)$ of $f$ and a morphism
$\sigma : A^* \to B^*$ such that $y = \sigma(x)$.
If $A = B$ and $\sigma$ is the identity map, we call $y$ \emph{purely morphic}.
If $f$ is primitive we say that the word is \emph{primitive morphic}.

A  factorial set of words $F \neq \{\varepsilon\}$ is \emph{recurrent} if  for every $u, w \in F$ there is a $v \in F$ such that $uvw \in F$.
For an infinite word $x$, we denote $F(x)$ the set of factors of $x$.
An infinite word $x$ is \emph{recurrent} if for any $u \in F(x)$ there is a $v \in F(x)$ such that $uvu \in F(x)$.
As well known, for any recurrent set $F$ there is a recurrent infinite word $x$ such that $F = F(x)$ and conversely, for any recurrent infinite word $x$, the set $F(x)$ is recurrent.

Extending the definition, we say that a set $F(x)$ is \emph{morphic} (resp. \emph{purely morphic}, \emph{primitive morphic}) if the infinite word $x$ is morphic (resp. purely morphic, primitive morphic).

A recurrent set of words $F$ is said to be \emph{uniformly recurrent} if, for any word $u \in F$, there exists an integer $n \geq 1$ such that $u$ is a factor of every word of $F$ of length $n$.

\subsection{Return words and derived sets}
Let $F$ be a recurrent set.
For $w \in F$, let
$$
\Gamma_F(w) = \{x \in F \mid wx \in F \cap A^+w\}
\quad \mbox{ and } \quad
\Gamma_F(w)' = \{x \in F \mid xw \in F \cap wA^+\}
$$
be respectively the set of \emph{right return words} and \emph{left return words} to $w$.
Since $F$ is recurrent, the sets $\Gamma_S(w)$ and $\Gamma_S(w)'$ are nonempty.
Let
$$
\RR_F(w) = \Gamma_F(w) \setminus \Gamma_F(w) A^+
\quad \mbox{ and } \quad
\RR_F(w)' = \Gamma_F(w)' \setminus A^+ \Gamma_F(w)'
$$
be respectively the set of \emph{first right return words} and the set of \emph{first left return words} to $w$.
Note that $w\RR_F(w) = \RR_F(w)'w$.

Clearly, a recurrent set $F$ is uniformly recurrent if and only if the set $\RR_F(w)$ (resp. $\RR_F(w)'$) is finite for any $w \in F$.

\begin{example}
\label{ex:returns}
Let $F$ be the set defined in Example~\ref{ex:t} and whose factors of length at most 6 are represented in Figure~\ref{fig:setf}.
We have
\begin{eqnarray*}
\RR_F(a)	& = &	\{cbba,ccba,ccbba\}, \\
\RR_F(b)	& = &	\{acb,accb,b\}, \\
\RR_F(c)	& = &	\{bac,bbac,c\}.
\end{eqnarray*}
\end{example}

For a set of words $X$ and a word $u$, we denote $u^{-1}X = \{ x \in A^* \mid uv \in X \}$.
Let $F$ be a  recurrent set and let $w \in F$.
A \emph{coding morphism} for the set $\RR_F(w)$ is a morphism $f : B^*\rightarrow A^*$ which maps bijectively the (possibly infinite) alphabet $B$ onto $\RR_F(w)$.
The set $f^{-1}(w^{-1}F)$, denoted $\D_f(F)$, is called the \emph{derived set} of $F$ with respect to $f$.
The following result is proved in~\cite[Proposition 4.3]{BertheDeFeliceLeroyPerrinReutenauerRindone2015}.
\begin{proposition}
\label{pro:recurrent}
Let $F$ be a recurrent set.
For $w \in F$, let $f$ be a coding morphism for the set $\RR_F(w)$.
Then 
$$
\D_f(F) = f^{-1}(\Gamma_F(w)) \cup \{1\}.
$$
\end{proposition}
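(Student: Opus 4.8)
The plan is to unwind the definitions so that the statement becomes a purely combinatorial claim about concatenations of return words. Since $u^{-1}X=\{x\in A^*\mid ux\in X\}$, we have $\D_f(F)=f^{-1}(w^{-1}F)=\{v\in B^*\mid wf(v)\in F\}$. Because $f(\varepsilon)=\varepsilon$ and $w\in F$, the empty word lies in both $\D_f(F)$ and $f^{-1}(\Gamma_F(w))\cup\{1\}$, so it suffices to prove that for every nonempty $v\in B^+$ one has $wf(v)\in F$ if and only if $f(v)\in\Gamma_F(w)$.

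One inclusion is immediate from the definition of $\Gamma_F(w)$: if $f(v)\in\Gamma_F(w)$ then $wf(v)\in F\cap A^+w\subseteq F$, so $v\in\D_f(F)$. This already gives $f^{-1}(\Gamma_F(w))\cup\{1\}\subseteq\D_f(F)$.

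For the reverse inclusion, I would take $v=b_1\cdots b_n\in B^+$ with $wf(v)\in F$. Each generator satisfies $f(b_i)\in\RR_F(w)\subseteq\Gamma_F(w)$, hence $wf(b_i)\in A^+w$, i.e.\ $wf(b_i)$ ends with $w$. The key step is a telescoping argument: writing $wf(b_i)=u_iw$ with $u_i\in A^+$, one shows by induction on $k$ that $wf(b_1)\cdots f(b_k)=u_1u_2\cdots u_k\,w$ (the inductive step rewrites $U w f(b_{k+1})=U(wf(b_{k+1}))=Uu_{k+1}w$), so that $wf(v)$ again ends with $w$. Since $f(v)\neq\varepsilon$, this word is even in $A^+w$, and since $F$ is factorial and $wf(v)\in F$ we also get $f(v)\in F$. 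Therefore $wf(v)\in F\cap A^+w$ with $f(v)\in F$, which is exactly $f(v)\in\Gamma_F(w)$.

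The only real content is the telescoping observation that a product of right return words, each turning into a word ending in $w$ after prepending $w$, is again such a word; this is what keeps products of (first) return words inside $\Gamma_F(w)$. The remaining points are bookkeeping: nonemptiness of $f(v)$ to land in $A^+w$ rather than merely ending in $w$, and factoriality of $F$ to guarantee $f(v)\in F$. It is worth noting that neither the minimality of the generators as \emph{first} return words nor the recurrence of $F$ is used in this particular identity; recurrence serves only to ensure that $\RR_F(w)$ is nonempty so that a coding morphism exists at all.
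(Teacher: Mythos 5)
Your proof is correct and complete. There is in fact no in-paper argument to compare it with: the authors state Proposition~\ref{pro:recurrent} without proof, citing \cite[Proposition 4.3]{BertheDeFeliceLeroyPerrinReutenauerRindone2015}. Your route --- the telescoping identity $wf(b_1)\cdots f(b_k)=u_1u_2\cdots u_k\,w$, which shows that $w\,\RR_F(w)^+\subseteq A^+w$, combined with factoriality of $F$ to conclude $f(v)\in F$ --- is precisely the natural (and standard) proof of the cited result, with all the bookkeeping (the empty word, nonemptiness of each $u_i$, membership of $f(v)$ in $F$) handled correctly. Your closing observation is also accurate: the identity uses only $f(B)\subseteq\Gamma_F(w)$ and factoriality, not the minimality of first return words, and recurrence of $F$ plays no role beyond making the hypotheses nonvacuous.
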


Let $F$ be a recurrent set and $x$ be an infinite word such that $F = F(x)$.
Let $w \in F$ and let $f$ be a coding morphism for the set $\RR_F(w)$.
Since $w$ appears infinitely often in $x$, there is a unique factorization $x = vwy$ with $y \in \RR_F(w)^\omega$ and $v$ such that $vw$ has no proper prefix ending with $w$.
The infinite word $f^{-1}(y)$ is called the \emph{derived word} of $x$ relative to $f$, denoted $\D_f(x)$.

Since the set of factors of a recurrent infinite word is recurrent, the following result, proved in~\cite[Proposition 4.4]{BertheDeFeliceLeroyPerrinReutenauerRindone2015}, shows in particular that the derived set of a recurrent set is recurrent.

\begin{proposition}
\label{pro:derived}
Let $F$ be a recurrent set and let $x$ be an infinite word such that $F = F(x)$.
Let $w \in F$ and let $f$ be a coding morphism for the set $\RR_F(w)$.
The derived set of $F$ with respect to $f$ is the set of factors of the derived word of $x$ with respect to $f$, that is $\D_f(F) = F(\D_f(x))$.
\end{proposition}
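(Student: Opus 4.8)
The plan is to prove the two inclusions of the equality $\D_f(F)=F(\D_f(x))$, rewriting the left-hand side via Proposition~\ref{pro:recurrent} as $f^{-1}(\Gamma_F(w))\cup\{1\}$. Write $x=vwy$ for the factorization defining $\D_f(x)=f^{-1}(y)$, so that $y\in\RR_F(w)^\omega$ and the condition on $v$ means that the first occurrence of $w$ in $x$ is the suffix of $vw$, at position $|v|$. Decomposing $y=r_1r_2\cdots$ with $r_j\in\RR_F(w)$, the central technical fact I would isolate first is the \emph{alignment lemma}: the occurrences of $w$ in $wy$ are exactly the boundaries of this decomposition, namely the positions $0,|r_1|,|r_1r_2|,\dots$. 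Consequently all occurrences of $w$ in $x$ sit at these boundaries, and $\D_f(x)=f^{-1}(y)=f^{-1}(r_1)f^{-1}(r_2)\cdots$.

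For the inclusion $F(\D_f(x))\subseteq\D_f(F)$, I would take a nonempty factor $u$ of the infinite word $f^{-1}(y)$ over $B$; being a factor of a single infinite word, $u$ is a block of consecutive letters, so $f(u)=r_i r_{i+1}\cdots r_{i+k-1}$ is a product of consecutive first return words. Using that each $wr_j$ ends with $w$, an easy induction shows that $wf(u)$ ends with $w$, i.e.\ $wf(u)\in A^+w$; moreover $wf(u)$ is a factor of $wy$, since by the alignment lemma the block $r_i\cdots r_{i+k-1}$ of $y$ is immediately preceded by the occurrence of $w$ at its left boundary, so $wf(u)\in F$. Hence $f(u)\in\Gamma_F(w)$ and $u\in f^{-1}(\Gamma_F(w))\subseteq\D_f(F)$, while the empty word lies in $\D_f(F)$ by Proposition~\ref{pro:recurrent}.

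For the reverse inclusion, I would take $t\in\Gamma_F(w)$, so $wt\in F\cap A^+w$. The occurrences of $w$ in $wt$ cut $t$ into a product $t_1\cdots t_m$ of first return words, minimality of each block forcing $t_i\in\RR_F(w)$. Since $wt\in F=F(x)$ it has an occurrence in $x$; by the alignment lemma its initial $w$, its internal occurrences of $w$, and its final $w$ all fall at boundaries of the decomposition of $y$, so $t_1,\dots,t_m$ coincide with $m$ consecutive factors in $y=r_1r_2\cdots$. Therefore $f^{-1}(t)=f^{-1}(t_1)\cdots f^{-1}(t_m)$ is a factor of $\D_f(x)$, which gives $f^{-1}(\Gamma_F(w))\subseteq F(\D_f(x))$ and, together with the empty word, the inclusion $\D_f(F)\subseteq F(\D_f(x))$.

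The main obstacle is the alignment lemma, and this is precisely where the \emph{first} (minimal) return condition $\RR_F(w)=\Gamma_F(w)\setminus\Gamma_F(w)A^+$ is essential: an unexpected occurrence of $w$ strictly inside some $r_k$ would exhibit a proper nonempty prefix of $r_k$ lying in $\Gamma_F(w)$, contradicting minimality. Once this lemma is established, both inclusions reduce to bookkeeping about where occurrences of $w$ fall, the only further ingredients being the injectivity of the coding morphism $f$ and Proposition~\ref{pro:recurrent}.
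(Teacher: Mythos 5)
Your proof is correct. A preliminary remark on the comparison: this paper does not actually prove Proposition~\ref{pro:derived} --- it imports it from Proposition 4.4 of the cited reference of Berth\'e et al., so there is no internal proof to measure against; your argument is, however, essentially the standard one used there. The structure is right: reducing $\D_f(F)$ to $f^{-1}(\Gamma_F(w))\cup\{1\}$ via Proposition~\ref{pro:recurrent}, and isolating the alignment lemma (occurrences of $w$ in $wy$ are exactly the boundaries $0,|r_1|,|r_1r_2|,\dots$ of the $\RR_F(w)$-decomposition) is exactly the key step, and your justification of it is sound: an occurrence of $w$ starting strictly between two boundaries produces a proper nonempty prefix of some $r_k$ lying in $\Gamma_F(w)$, contradicting $r_k\notin\Gamma_F(w)A^+$. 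Two points you pass over quickly but which do hold: (i) in the reverse inclusion, to apply the alignment lemma to an occurrence of $wt$ in $x$ rather than in $wy$, one needs that no occurrence of $w$ in $x$ starts before position $|v|$; this is precisely the defining condition on $v$ in the factorization $x=vwy$, which you record at the outset, and it also guarantees that consecutive occurrences of $w$ inside $wt$ land on \emph{consecutive} boundaries of $y$ (any intermediate boundary would be an extra occurrence of $w$ inside $wt$). (ii) The injectivity of $f$ on $B^*$, which you invoke to identify $f^{-1}(t)$ with $f^{-1}(t_1)\cdots f^{-1}(t_m)$, requires $\RR_F(w)$ to be a code; this is immediate since $\RR_F(w)$ is a prefix code: if $r$ were a proper prefix of $r'$ with both in $\RR_F(w)$, then $r'\in\Gamma_F(w)A^+$, a contradiction. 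With these observations made explicit, the proof is complete and self-contained, which is arguably more than the paper itself provides.
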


\begin{example}
Let $F$ be the uniformly recurrent set of Example~\ref{ex:t} and whose factors of length at most 6 are represented in Figure~\ref{fig:setf}.
Let $f$ be the coding morphism for the set $\RR_F(c)$ given by $f(a) = bac$, $f(b) = bbac$, $f(c) = c$.
The derived set of $F$ with respect to $f$ is represented in Figure~\ref{fig:derived}.

\begin{figure}[hbt]
\centering
\gasset{AHnb=0,Nadjust=wh}
\begin{picture}(50,35)(0,5)
\node(1)(0,20){}

\node(a)(10,30){}
\node(b)(10,20){}
\node(c)(10,10){}

\node(ac)(20,30){}
\node(bb)(20,25){}
\node(bc)(20,15){}
\node(ca)(20,10){}
\node(cb)(20,5){}

\node(aca)(30,35){}
\node(acb)(30,30){}
\node(bbb)(30,25){}
\node(bbc)(30,20){}
\node(bcb)(30,15){}
\node(cac)(30,10){}
\node(cbb)(30,5){}

\drawedge(1,a){$a$}
\drawedge[ELpos=60](1,b){$b$}
\drawedge[ELpos=60,ELside=r](1,c){$c$}
\drawedge(a,ac){$c$}
\drawedge(b,bb){$b$}
\drawedge[ELpos=60](b,bc){$\ c$}
\drawedge(c,ca){$a$}
\drawedge[ELpos=60,ELside=r](c,cb){$b$}
\drawedge(ac,aca){$a$}
\drawedge[ELpos=60](ac,acb){$\ \ \ b$}
\drawedge(bb,bbb){$b$}
\drawedge[ELside=r](bb,bbc){$c$}
\drawedge(bc,bcb){$b$}
\drawedge(ca,cac){$c$}
\drawedge[ELside=r](cb,cbb){$b$}
\end{picture}
\caption{The words of length $ \leq 3$ of $\D_f(F)$.}
\label{fig:derived}
\end{figure}
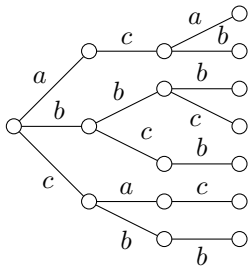
\end{example}

\section{Interval exchanges}
\label{sec:ie}
In this section we recall the basic definitions of interval exchange transformations, along with minimality and regularity of interval exchanges.
We also define the natural coding of an interval exchange and the associated interval exchange set, that is the language of all possible natural codings of a transformation.
We then consider the notion of admissibility of a semi-interval.
Finally we prove a closure property using the derivation defined in the previous section.

\subsection{Interval exchange transformations}
Let us recall the definition of an interval exchange transformation (see~\cite{CornfeldFominSinai1982}, \cite{Yoccoz2010} or \cite{Viana2006} for a more detailed presentation).

Let $A$ be a finite, nonempty and ordered alphabet.
All words considered below, unless stated explicitly, are supposed to be on the alphabet $A$.
We denote by $A^*$ the set of all words on $A$ and by $1$ or $\varepsilon$ the empty word.

A \emph{semi-interval} is a nonempty subset of the real line of the form $[\alpha, \beta[ = \{ z \in \R \mid \alpha \leq z < \beta \}$.
Thus it is a left-closed and right-open interval.
For two semi-intervals $\Delta, \Gamma$, we denote $\Delta < \Gamma$ if $x < y$ for any $x\in\Delta$ and $y\in\Gamma$. 

Given an order $<$ on $A$, a partition $(I_a)_{a \in A}$ of a semi-interval $[\ell,r[$ in semi-intervals is \emph{ordered} if $a < b$ implies $I_a < I_b$.

Let now $<_1$ and $<_2$ be two total orders on $A$.
Let $(I_a)_{a \in A}$ be  a partition of $[\ell, r[$ in semi-intervals ordered for $<_1$.
Let $\lambda_a$ be the length of $I_a$.
Let $\mu_a = \sum_{b \leq_1 a} \lambda_b$ and $\nu_a = \sum_{b \leq_2 a} \lambda_b$.
Set $\alpha_a = \nu_a - \mu_a$.
The \emph{interval exchange transformation} relative to $(I_a)_{a \in A}$ is the map $T:[\ell, r[ \rightarrow [\ell, r[$ defined by
$$
T(z) = z + \alpha_a \quad \text{ if } z \in I_a.
$$
Observe that the restriction of $T$ to $I_a$ is a translation onto $J_a = T(I_a)$, that $\mu_a$ is the right boundary of $I_a$ and that $\nu_a$ is the right boundary of $J_a$.
We additionally denote by $\gamma_a$ the left boundary of $I_a$ and by $\delta_a$ the left boundary of $J_a$.
Thus
$$
I_a = [\gamma_a, \mu_a[, \quad J_a =[\delta_a, \nu_a[.
$$

Since $a<_2 b$ implies $J_a<_2J_b$, the family $(J_a)_{a \in A}$ is a partition of $[\ell, r[$ ordered for $<_2$.
In particular, the transformation $T$ defines a bijection from $[\ell, r[$ onto itself.

An interval exchange transformation relative to $(I_a)_{a \in A}$ is also said to be on the alphabet $A$.
The values $(\alpha_a)_{a \in A}$ are called the \emph{translation values} of the transformation $T$.

\begin{example}
\label{ex:rotation}
Let $R$ be the interval exchange transformation corresponding to $A = \{a, b\}$, $ a<_1 b$, $b <_2 a$, $I_a = [0, 1-\alpha[$, $I_b = [1-\alpha, 1[$.
The transformation $R$ is the rotation of angle $\alpha$ on the semi-interval $[0,1[$ defined by $R(z) = z + \alpha \bmod 1$.
\end{example}

Since $<_1$ and $<_2$ are total orders, there exists a unique permutation $\pi$ of $A$ such that $a <_ 1b$ if and only if $\pi(a) <_2 \pi(b)$.
Conversely, $<_2$ is determined by $<_1$ and $\pi$ and $<_1$ is determined by $<_2$ and $\pi$.
The permutation $\pi$ is said to be \emph{associated} to $T$.

Set $A = \{a_1, a_2, \ldots, a_s\}$ with $a_1 <_1 a_2 <_1 \cdots <_1 a_s$.
The pair $(\lambda, \pi)$ formed by the family $\lambda = (\lambda_a)_{a \in A}$ and the permutation $\pi$ determines the map $T$.
We will also denote $T$ as $T_{\lambda, \pi}$.
The transformation $T$ is also said to be an $s$-interval exchange transformation.

It is easy to verify that the family of $s$-interval exchange transformations is closed by  taking inverses.

\begin{example}
\label{ex:3iet}
Let $T=R^2$ where $R$ is the rotation of Example~\ref{ex:rotation}.
The transformation $T$, represented in Figure~\ref{fig:3iet} is a $3$-interval exchange transformation.
One has $A = \{ a, b, c \}$ with $a <_1 b <_1 c$ and $b <_2 c <_2 a$.
The associated permutation is the cycle $\pi = (abc)$.

\begin{figure}[hbt]
\centering
\gasset{Nh=2,Nw=2,ExtNL=y,NLdist=2,AHnb=0,ELside=r}
\begin{picture}(100,15)
\node[fillcolor=red](0h)(0,10){$0$}
\node[fillcolor=blue](1-2alpha)(23.6,10){$1-2\alpha$}
\node[fillcolor=green](1-alpha)(61.8,10){$1-\alpha$}
\node(1h)(100,10){$1$}
\drawedge[linecolor=red,linewidth=1](0h,1-2alpha){$a$}
\drawedge[linecolor=blue,linewidth=1](1-2alpha,1-alpha){$b$}
\drawedge[linecolor=green,linewidth=1](1-alpha,1h){$c$}

\node[fillcolor=blue](0b)(0,0){$0$}
\node[fillcolor=green](alpha)(38.2,0){$\alpha$}
\node[fillcolor=red](2alpha)(76.4,0){$2\alpha$}\node(1b)(100,0){$1$}
\drawedge[linecolor=blue,linewidth=1](0b,alpha){$b$}
\drawedge[linecolor=green,linewidth=1](alpha,2alpha){$c$}
\drawedge[linecolor=red,linewidth=1](2alpha,1b){$a$}
\end{picture}
\caption{A $3$-interval exchange transformation.}
\label{fig:3iet}
\end{figure}
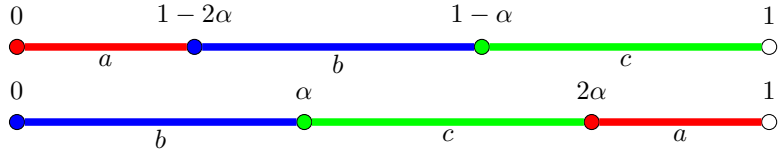
\end{example}

The \emph{orbit} of a point $z \in [\ell,r[$ is the set $\mathcal{O}(z) = \{ T^n(z) \mid n \in \mathbb{Z} \}$.
The transformation $T$ is said to be \emph{minimal} if for any $z \in [\ell,r[$, the orbit of $z$ is dense in $[\ell,r[$.

From now on, set $\gamma_i = \gamma_{a_i}$, $\delta_i = \delta_{a_i}$, $\mu_i = \mu_{a_i}$ and $\nu_i = \nu_{a_i}$.
The points $0 = \gamma_1, \mu_1 = \gamma_2, \ldots, \mu_{s-1} = \gamma_s$ form the set of \emph{separation points} of $T$, denoted $\Sep(T)$.
Note that the transformation $T$ has at most $s-1$ \emph{singularities} (that is points at which it is not continuous), which are among the nonzero separation points $\gamma_2, \ldots, \gamma_s$.

An interval exchange transformation $T_{\lambda, \pi}$ is called \emph{regular} if the orbits of the nonzero separation points $\gamma_2, \ldots,\gamma_s$ are infinite and disjoint.
Note that the orbit of $0$ cannot be disjoint from the others since one has $T(\gamma_i)=0$ for some $i$ with $2 \leq i \leq s$.

A regular interval exchange transformation is also said to satisfy the \emph{idoc} condition (where idoc stands for infinite disjoint orbit condition).
It is also said to have the Keane property or to be without \emph{connection} (see~\cite{BoissyLanneau2009}).
As an example, the $2$-interval exchange transformation of Example~\ref{ex:rotation} which is the rotation of angle $\alpha$ is regular if and only if $\alpha$ is irrational.

The following result is due to Keane~\cite{Keane1975}.

\begin{theorem}[Keane]
\label{theo:keane}
A regular interval exchange transformation is minimal.
\end{theorem}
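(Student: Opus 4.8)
The plan is to argue directly by contradiction, exploiting that $T$ is a bijection of $[\ell,r[$ preserving Lebesgue measure while the idoc hypothesis forbids any finite orbit among the separation points $\gamma_2,\ldots,\gamma_s$. Recall that minimality means that every orbit $\mathcal{O}(z)$ is dense; equivalently (taking the usual care with the finitely many discontinuities), that $[\ell,r[$ has no nonempty proper closed subset invariant under both $T$ and $T^{-1}$. So I would assume such a subset exists and, using that a decreasing chain of nonempty closed invariant sets has nonempty closed invariant intersection (finite intersection property on the compact closure $[\ell,r]$), extract by Zorn's lemma a \emph{minimal} nonempty closed invariant set $X\subsetneq[\ell,r[$. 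The whole point is to contradict the regularity of $T$.

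Next I would analyse the open complement $U=[\ell,r[\,\setminus X$, a disjoint union of maximal semi-intervals, the \emph{gaps} of $X$. Since the gaps are disjoint and contained in $[\ell,r[$, their total length is at most $r-\ell$, so for every $\varepsilon>0$ only finitely many are longer than $\varepsilon$; as $U\neq\emptyset$, there is a gap of maximal length $\rho>0$, and only finitely many gaps attain this length. Because $T$ is a measure-preserving bijection with $X=T(X)$, the map $T$ permutes the gaps, except that a gap meeting the finite set $\Sep(T)$ of separation points may be cut into pieces; a gap disjoint from $\Sep(T)$ is carried by a translation onto another gap of the same length, hence a maximal gap onto a maximal gap.

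I would then follow the forward orbit of a maximal gap $g$. As long as the successive images stay disjoint from $\Sep(T)$, they remain maximal gaps, and since there are only finitely many of these and $T$ is injective, the orbit is eventually periodic: some $T^p$ maps a maximal gap onto itself by a translation. A translation carrying a semi-interval exactly onto itself is the identity, so $T^p$ is the identity on a semi-interval; taking the maximal such semi-interval, its left endpoint $a$ satisfies $T^p(a)=a$ and is a discontinuity of $T^p$, whence $T^k(a)=\gamma_i$ for some singularity $\gamma_i$ and some $0\le k<p$. Then $T^p(\gamma_i)=T^k(T^p(a))=T^k(a)=\gamma_i$, so the orbit of the nonzero separation point $\gamma_i$ is finite, contradicting regularity.

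The main obstacle is the remaining case, in which the forward orbit of $g$ eventually meets $\Sep(T)$: then a singularity lies in the interior of a translate of $g$, so the backward orbit of that singularity enters $g$. Here I would run the same bookkeeping for both $T$ and $T^{-1}$, using the minimality of $X$ (so that $X$ is the orbit closure of each of its points and the gaps are permuted coherently under the group generated by $T$), to show that the finitely many singularities cannot repeatedly cut the propagating maximal gap without either closing up into a finite orbit or forcing the orbit of one separation point to meet that of another; both possibilities violate the hypothesis that the orbits of $\gamma_2,\ldots,\gamma_s$ are infinite and pairwise disjoint. Controlling this interaction between the finite singularity set and the invariant gap structure is precisely the crux of Keane's theorem and the step where the no-connection hypothesis is indispensable; the measure-preservation of $T$ enters only to guarantee that gaps are genuinely permuted rather than shrunk.
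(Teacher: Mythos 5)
The paper itself gives no proof of this theorem (it is quoted from Keane~\cite{Keane1975}), so your argument has to stand entirely on its own, and it does not: your last paragraph is a declaration of intent, not a proof. The part you do carry out is sound. Passing to a proper closed invariant set, observing that its complement has finitely many gaps of maximal length, that $T$ sends a maximal gap whose interior avoids the discontinuities onto a maximal gap, and that finiteness plus injectivity then forces some $T^p$ to fix a maximal gap, hence to be the identity on an interval, hence to make some discontinuity $\gamma_i$ periodic --- this correctly disposes of the ``easy'' case (up to the unaddressed possibility that the left endpoint of the maximal periodicity interval is $\ell$ itself, which can be repaired using the paper's remark that $\gamma_1=0$ lies on the orbit of some $\gamma_i$ with $i\geq 2$). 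The genuine gap is exactly the case you defer: once some image $T^k(g)$ of a maximal gap contains a discontinuity in its interior, the forward images split into pieces that are merely \emph{contained in} gaps, not equal to maximal gaps, so the finiteness argument collapses; and if you instead pull that gap backwards, you either reach periodicity (fine) or reach a maximal gap whose interior contains both a backward-orbit point $T^{-j}(\gamma_i)$ and a point $T(\gamma_{i'})$ on the forward orbit of another discontinuity. Unless those two interior points happen to coincide, the idoc hypothesis is not violated by anything you have exhibited, and the next pullback is again cut at $T(\gamma_{i'})$, so the induction stalls. Producing a contradiction from this stalled configuration is the real content of Keane's theorem; ``run the same bookkeeping for both $T$ and $T^{-1}$\dots to show that the singularities cannot repeatedly cut the propagating gap'' names the goal but supplies no mechanism for reaching it.

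A secondary weakness is the reduction at the start. For a non-continuous map, the closure of an orbit need not be invariant: right-continuity of $T$ gives $T(X)\subset X$ only when points of $X$ are approximated from the right, so the equivalence between ``not minimal'' and ``there exists a proper nonempty closed set invariant under $T$ and $T^{-1}$'', as well as the Zorn extraction of a minimal such set, require explicit care at the finitely many discontinuities (for instance by passing to the symbolic model, or by phrasing the whole argument in terms of the open complement directly). This part is repairable, but together with the unproved main case it means the proposal cannot be accepted as a proof of the theorem.
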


The converse is not true.
Indeed, consider the rotation of angle $\alpha$ with $\alpha$ irrational, as a $3$-interval exchange transformation with $\lambda = (1-2\alpha, \alpha, \alpha)$ and
$\pi = (132)$.
The transformation is minimal as any rotation of irrational angle but it is not regular since $\mu_1 = 1-2\alpha$, $\mu_2 = 1-\alpha$ and thus $\mu_2 = T(\mu_1)$.

\begin{example}
\label{ex:2alpha}
Let $T$ be the $3$-interval exchange transformation of Example~\ref{ex:3iet} with $\alpha=(3-\sqrt{5})/2$.
The transformation $T$ is regular since $\alpha$ is irrational.
Note that $1-\alpha$ is a separation point which is not a singularity since $T$ is also a $2$-interval exchange transformation.
\end{example}

The following necessary condition for minimality of an interval exchange transformation is useful.
A permutation $\pi$ of an ordered set $A$ is called \emph{decomposable} if there exists an element $b \in A$ such that the set $B$ of elements strictly less than $b$ is nonempty and such that $\pi(B) = B$.
Otherwise it is called \emph{indecomposable}.
If an interval exchange transformation $T=T_{\lambda, \pi}$ is minimal, the permutation $\pi$ is indecomposable.
Indeed, if $B$ is a set as above, the set of orbits of the points in the set $S = \cup_{a \in B}I_a$ is closed and strictly included in $[\ell,r[$.
The following example shows that the indecomposability of $\pi$ is not sufficient for $T$ to be minimal.

\begin{example}
Let $A = \{a, b, c\}$ and $\lambda$ be such that $\lambda_a = \lambda_c$.
Let $\pi$ be the transposition $(ac)$.
Then $\pi$ is indecomposable but $T_{\lambda, \pi}$ is not minimal since it is the identity on $I_b$.
\end{example}

The iteration of an $s$-interval exchange transformation is, in general, an interval exchange transformation operating on a larger number of semi-interval.

\begin{proposition}
\label{pro:tn}
Let $T$ be a regular $s$-interval exchange transformation.
Then, for any $n \geq 1$, $T^n$ is a regular $n(s-1)+1$-interval exchange transformation.
\end{proposition}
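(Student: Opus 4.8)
The plan is to describe the separation points of $T^n$ explicitly as the iterated $T$-preimages of the separation points of $T$, and then count them using the disjoint-orbit condition of regularity. Concretely, a point $z \in [\ell,r[$ is the left endpoint of an interval of the canonical partition of $T^n$ (the partition according to the itinerary $z, T(z), \ldots, T^{n-1}(z)$) precisely when $T^k(z) \in \Sep(T)$ for some $0 \le k \le n-1$; equivalently the separation points of $T^n$ form the set $\bigcup_{k=0}^{n-1} T^{-k}(\Sep(T))$. On each resulting interval $T^n$ is a composition of translations, hence a translation, and since $T^n$ is a bijection of $[\ell,r[$ this exhibits $T^n$ as an interval exchange transformation whose number of intervals equals the cardinality of this set.

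The heart of the matter is therefore to show $\bigl|\bigcup_{k=0}^{n-1} T^{-k}(\Sep(T))\bigr| = n(s-1)+1$. Writing $\Sep(T) = \{\gamma_1, \ldots, \gamma_s\}$ with $\gamma_1 = 0$, I would first dispose of the preimages of $\gamma_1 = 0$: since $T(\gamma_i) = 0$ for some $i \ge 2$, one has $T^{-k}(0) = T^{-(k-1)}(\gamma_i)$ for $k \ge 1$, so every such preimage with $1 \le k \le n-1$ already occurs among the preimages of $\gamma_i$. Hence the set reduces to $\{0\} \cup \{\, T^{-k}(\gamma_j) : 0 \le k \le n-1,\ 2 \le j \le s \,\}$. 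Regularity now does the counting: if $T^{-k}(\gamma_j) = T^{-k'}(\gamma_{j'})$ then $\gamma_j$ and $\gamma_{j'}$ share an orbit, so $j = j'$ by disjointness, and then $k = k'$ because the orbit is infinite (no periodicity); the same argument rules out $0 = T^{-k}(\gamma_j)$, which would force $\gamma_j = T^{k+1}(\gamma_i)$ with $k+1 \ge 1$, again contradicting disjoint infinite orbits. Thus the $n(s-1)$ points $T^{-k}(\gamma_j)$ are pairwise distinct and distinct from $0$, giving exactly $n(s-1)+1$ separation points and hence $n(s-1)+1$ intervals.

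It then remains to verify that $T^n$ is itself regular. Its nonzero separation points are the $T^{-k}(\gamma_j)$, whose $T^n$-orbit is $\{\, T^{nm-k}(\gamma_j) : m \in \Z \,\}$. This orbit is infinite, since a finite $T^n$-orbit would make $\gamma_j$ periodic under $T$, contradicting the infiniteness of its $T$-orbit. For disjointness, an intersection of the $T^n$-orbits of $T^{-k}(\gamma_j)$ and $T^{-k'}(\gamma_{j'})$ would force $j = j'$ as before and an equality $n(m-m') = k - k'$; since $|k-k'| < n$ this yields $k = k'$, so two such orbits coincide only for identical indices and are otherwise disjoint. This gives regularity of $T^n$.

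I expect the main obstacle to be the careful bookkeeping around the separation point $0 = \gamma_1$: one must check both that its backward images are absorbed by those of $\gamma_i$ and that $0$ is genuinely distinct from every $T^{-k}(\gamma_j)$, so that no interval is over- or under-counted, and separately that no spurious coincidence collapses two intervals into one. Each of these distinctness checks ultimately rests on the disjoint-infinite-orbit condition, which is precisely why regularity cannot be dropped. An alternative route is induction on $n$ via $T^{n+1} = T \circ T^n$, showing that composition with $T$ adjoins exactly the $s-1$ new breakpoints $T^{-n}(\gamma_2), \ldots, T^{-n}(\gamma_s)$; the required bookkeeping is essentially the same.
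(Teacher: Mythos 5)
Your proposal is correct and follows essentially the same route as the paper: the paper's proof likewise observes that, by regularity, the $n(s-1)$ points $\bigcup_{k=0}^{n-1}T^{-k}(\{\gamma_2,\ldots,\gamma_s\})$ are distinct and partition $[\ell,r[$ into $n(s-1)+1$ semi-intervals on which $T^n$ is a translation. Your extra bookkeeping (absorbing the backward images of $\gamma_1=0$ into those of the $\gamma_i$ with $T(\gamma_i)=0$, and verifying that the $T^n$-orbits of the new separation points are infinite and disjoint) simply fills in details, including the regularity of $T^n$, that the paper's two-line proof leaves implicit.
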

\begin{proof}
Since $T$ is regular, the set $\cup_{i=0}^{n-1} T^{-i}(\mu)$ where $\mu$ runs over the set of $s-1$ nonzero separation points of $T$ has $n(s-1)$ elements.
These points partition the interval $[\ell,r[$ in $n(s-1)+1$ semi-intervals on which $T$ is a translation.
\end{proof}

We close this subsection with a lemma that will be useful in Section~\ref{sec:ie}.

\begin{lemma}
\label{lem:distance}
Let $T$ be a minimal interval exchange transformation. For every $N > 0$ there exists an $\varepsilon > 0$ such that for every $z \in D(T)$ and for every $n>0$, one has
$$ \left| T^n(z) - z \right| < \varepsilon \quad \Longrightarrow \quad n \geq N.$$
\end{lemma}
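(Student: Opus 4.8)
The plan is to argue by contraposition: I will produce an $\varepsilon > 0$ such that $|T^n(z) - z| \geq \varepsilon$ for every $z \in D(T)$ and every integer $n$ with $1 \leq n < N$, which is exactly the contrapositive of the asserted implication. The case $N \leq 1$ is vacuous, since then $n \geq 1 \geq N$ holds automatically and any positive $\varepsilon$ works, so I may assume $N \geq 2$.

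The key observation is that a minimal interval exchange transformation has no periodic points. Indeed, if some $z$ were periodic, its orbit $\mathcal{O}(z) = \{T^k(z) \mid k \in \Z\}$ would be a finite set, hence not dense in $[\ell,r[$, contradicting minimality. In particular $T^n(z) \neq z$ for every $z$ and every $n \geq 1$.

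Next I will exploit the structure of the iterates. For each fixed $n \geq 1$ the map $T^n$ is again a piecewise translation with finitely many continuity intervals: their boundaries are obtained from the separation points of $T$ by pulling them back under $T^{-i}$ for $0 \leq i < n$, exactly as in the proof of Proposition~\ref{pro:tn}, so there are only finitely many. On each such interval $I$ the map $T^n$ acts as a translation $z \mapsto z + c_I$, so the displacement $T^n(z) - z$ equals the constant $c_I$ throughout $I$. If some $c_I$ were zero, every point of $I$ would be a fixed point of $T^n$, contradicting the absence of periodic points; hence $c_I \neq 0$ for every piece. Since the pieces are finite in number, the quantity
$$\varepsilon_n = \min_I |c_I|$$
is a strictly positive real, and $|T^n(z) - z| \geq \varepsilon_n$ for all $z$.

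Finally I set $\varepsilon = \min_{1 \leq n < N} \varepsilon_n$, a minimum of finitely many strictly positive numbers, so $\varepsilon > 0$; by construction $|T^n(z) - z| \geq \varepsilon$ for all $z$ and all $1 \leq n < N$, as required. The only steps demanding genuine care are the no-periodic-points argument and the fact that the displacement function of $T^n$ takes only finitely many (hence nonzero, hence bounded-away-from-zero) values; once these are established the conclusion follows by minimizing over a finite set, so I expect this finiteness claim to be the main point to pin down.
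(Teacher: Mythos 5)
Your proof is correct, and it rests on the same finiteness mechanism as the paper's: for $n < N$ the displacement $T^n(z) - z$ can take only finitely many values, each a sum of at most $N$ translation values, so one takes a minimum. The difference is in how positivity of that minimum is secured, and here your version is actually tighter than the paper's. The paper simply sets $\varepsilon = \min\bigl\{\bigl|\sum_{j} \alpha_{i_j}\bigr|\bigr\}$ over \emph{all formal} sums of at most $N$ translation values; read literally, nothing prevents such a formal combination from vanishing (only combinations realized as actual orbit displacements are constrained by the dynamics), so the paper implicitly uses, without stating or proving it, that the relevant sums are nonzero. You instead minimize only over \emph{realized} values --- the translation constants $c_I$ of the finitely many continuity pieces of $T^n$, $1 \leq n < N$ --- and prove each $c_I \neq 0$ by observing that a minimal interval exchange transformation has no periodic points, since a periodic orbit would be finite and hence not dense. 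This closes exactly the step the paper leaves implicit. Two further points in your write-up are handled correctly: the case $N \leq 1$ is rightly dismissed as vacuous, and although you cite the pull-back construction from Proposition~\ref{pro:tn}, you only use finiteness of the pieces of $T^n$, which holds for any interval exchange transformation and does not require the regularity hypothesis of that proposition.
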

\begin{proof}
Let $\alpha_1, \alpha_2, \ldots, \alpha_s$ be the translation values of $T$.
For every $N>0$ it is sufficient to choose
$$\varepsilon = \min \left\{ \left| \textstyle{\sum_{i_j = 1}^M \alpha_{i_j}} \right|  \; \mid \;\;\; 1 \leq i_j \leq s \; \mbox{ and } \; M \leq N \right\}.$$
\end{proof}

\subsection{Natural coding}
\label{subsec:nc}
Let $T$ be an interval exchange transformation relative to $(I_a)_{a\in A}$.
For a given real number $z \in [\ell,r[$, the \emph{natural coding} of $T$ relative to $z$ is the infinite word $\Sigma_T(z) = a_0 a_1 \cdots$ on the alphabet $A$ defined by
$$
a_n = a \quad \text{ if } \quad T^n(z) \in I_{a}.
$$

\begin{example}
\label{ex:fibocoding}
Let $\alpha = (3-\sqrt{5})/2$ and let $R$ be the rotation of angle $\alpha$ on $[0,1[$ as in Example~\ref{ex:rotation}.
The natural coding of $R$ relative to $\alpha$ is the Fibonacci word which is the fixed point $t = abaab\cdots$ of the morphism $f$ from $\{a,b\}^*$ into itself defined by $f(a) = ab$ and $f(b) = a$ (see, for example,~\cite[Chapter 2]{Lothaire2002}).
\end{example}

For a word $w = b_0 b_1 \cdots b_{m-1}$, let $I_w$ be the set
\begin{equation}
I_w = I_{b_0} \cap T^{-1}(I_{b_1}) \cap \ldots \cap T^{-m+1}(I_{b_{m-1}}).
\label{eq:iu}
\end{equation}

Note that each $I_w$ is a semi-interval.
Indeed, this is true if $w$ is a letter.
Next, assume that $I_w$ is a semi-interval.
Then for any $a\in A$, $T(I_{aw}) = T(I_a) \cap I_w$ is a semi-interval since $T(I_a)$ is a semi-interval by definition of an interval exchange transformation.
Since $I_{aw} \subset I_a$, $T(I_{aw})$ is a translate of $I_{aw}$, which is therefore also a semi-interval.
This proves the property by induction on the length.
The semi-interval $I_w$ is the set of points $z$ such that the natural coding of the transformation relative to $z$ has $w$ as a prefix.

Set $J_w = T^m(I_w)$.
Thus
\begin{equation}
J_w = T^{m}(I_{b_0}) \cap T^{m-1}(I_{b_1}) \cap \ldots \cap T(I_{b_{m-1}}).
\label{eq:ju}
\end{equation}
In particular, we have $J_a = T(I_a)$ for $a \in A$.
Note that each $J_w$ is a semi-interval.
Indeed, this is true if $w$ is a letter.
Next, for any $a \in A$, we have $T^{-1}(J_{wa}) = J_w \cap I_a$.
This implies as above that $J_{wa}$ is a semi-interval and proves the property by induction.
We set by convention $I_\varepsilon = J_\varepsilon=[0,1[$.
Then one has for any $n \geq 0$
\begin{equation}
a_n a_{n+1} \cdots a_{n+m-1} =w \Longleftrightarrow T^n(z) \in I_w
\label{eq:iw}
\end{equation}
and
\begin{equation}
a_{n-m} a_{n-m+1} \cdots a_{n-1} = w \Longleftrightarrow T^n(z) \in J_w
\label{eq:jw}
\end{equation}
Let $(\alpha_a)_{a \in A}$ be the translation values of $T$.
Note that for any word $w$,
\begin{equation}
J_w = I_w + \alpha_w
\label{eq:jw2}
\end{equation}
with $\alpha_w = \sum_{j=0}^{m-1}\alpha_{b_j}$ as one may verify by induction on $|w| = m$.
Indeed it is true for $m = 1$.
For $m \geq 2$, set $w = ua$ with $a = b_{m-1}$.
One has $T^m(I_w) = T^{m-1}(I_w) + \alpha_{a}$ and $T^{m-1}(I_w) = I_w + \alpha_u$ by the induction hypothesis and the fact that $I_w$ is included in $I_u$.
Thus $J_w = T^m(I_w) = I_w + \alpha_u + \alpha_a = I_w + \alpha_w$.
Equation~\eqref{eq:jw2} shows in particular that the restriction of $T^{|w|}$ to $I_w$ is a translation.

Note that the semi-interval $J_w$ is the set of points $z$ such that the natural coding of $T^{-|w|}(z)$ has $w$ as a prefix.

If $T$ is minimal, one has $w \in F(\Sigma_T(z))$ if and only if $I_w \neq \emptyset$.
Thus the set $F(\Sigma_T(z))$ does not depend on $z$ (as for Sturmian words, see~\cite{Lothaire2002}).
Since it depends only on $T$, we denote it by $F(T)$.
When $T$ is regular (resp. minimal), such a set is called a \emph{regular interval exchange set} (resp. a minimal interval exchange set).

Let $X$ be the closure of the set of all $\Sigma_T(z)$ for $z \in [\ell,r[$ and let $S$ be the shift on $X$ defined by $S(x) = y$ with $y_n = x_{n+1}$ for $n \geq 0$. The pair $(X,S)$ is a \emph{symbolic dynamical system}, formed of a topological space $X$ and a continuous transformation $S$.
Such a system is said to be minimal if the only closed subsets invariant by $S$ are $\emptyset$ or $X$.
It is well-known that $(X,S)$ is minimal if and only if $F(S)$ is uniformly recurrent (see for example~\cite{Lothaire2002} Theorem 1.5.9).

Then we have the following commutative diagram of Figure~\ref{fig:diagram}.

\begin{figure}[hbt]
\centering
\gasset{Nframe=n}
\begin{picture}(20,20)
\node(T1)(0,20){$[\ell,r[$}
\node(T2)(20,20){$[\ell,r[$}
\node(S1)(0,0){$X$}
\node(S2)(20,0){$X$}
\drawedge(T1,T2){$T$}
\drawedge(T1,S1){$\Sigma_T$}
\drawedge(S1,S2){$S$}
\drawedge(T2,S2){$\Sigma_T$}
\end{picture}
\caption{A commutative diagram.}
\label{fig:diagram}
\end{figure}

The map $\Sigma_T$ is neither continuous nor surjective.
This can be corrected by embedding the interval $[\ell,r[$ into a larger space on which $T$ is a homeomorphism (see~\cite{Keane1975} or~\cite{BertheRigo2010} page 349).
However, if the transformation $T$ is minimal, the symbolic dynamical system $(X,S)$ is minimal (see~\cite{BertheRigo2010} page 392).
Thus, we obtain the following statement.

\begin{proposition}
\label{pro:regularur}
For any minimal interval exchange transformation $T$, the set $F(T)$ is uniformly recurrent.
\end{proposition}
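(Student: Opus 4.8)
The plan is to combine two ingredients that have already been assembled in the text: that minimality of $T$ lifts to minimality of the symbolic dynamical system $(X,S)$, and that minimality of a symbolic dynamical system is equivalent to uniform recurrence of its language. So the argument is essentially a bridge between the dynamical and the combinatorial vocabularies rather than a computation.

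First I would verify that the set of finite factors carried by the elements of $X$ coincides with $F(T)$. Recall that $X$ is the closure of $\{\Sigma_T(z) \mid z \in [\ell,r[\}$ in the Cantor topology, so every $x \in X$ is a limit of natural codings. Any fixed finite factor of such a limit already occurs in one of the approximating codings (agreement on a long enough prefix forces agreement on the factor), and conversely each $\Sigma_T(z)$ itself lies in $X$. Hence the factors of the elements of $X$ are exactly the factors $F(\Sigma_T(z))$, which equal $F(T)$ because $T$ is minimal and $F(\Sigma_T(z))$ was shown to be independent of $z$.

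Next I would invoke the cited fact (\cite{BertheRigo2010} page 392) that, for minimal $T$, the system $(X,S)$ is minimal. Finally I would apply the classical characterization (\cite{Lothaire2002} Theorem 1.5.9) that $(X,S)$ is minimal if and only if its language is uniformly recurrent. Feeding in the identification of that language with $F(T)$ from the first step, I conclude that $F(T)$ is uniformly recurrent, as claimed.

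All of this is bookkeeping layered on top of two black-box results, so I do not expect a genuine obstacle. The only step that warrants care is the first one, namely the claim that passing to the closure $X$ does not enlarge the set of finite factors; it is precisely this identification of $F(T)$ with the language of $(X,S)$ that lets the dynamical statement of minimality be read back as the desired combinatorial statement about $F(T)$.
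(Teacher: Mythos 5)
Your proof is correct and follows exactly the paper's route: the paper likewise derives the proposition by combining the cited fact that minimality of $T$ implies minimality of the symbolic system $(X,S)$ (\cite{BertheRigo2010}, p.~392) with the classical equivalence between minimality of $(X,S)$ and uniform recurrence of its language (\cite{Lothaire2002}, Theorem 1.5.9). Your extra verification that taking the closure $X$ does not enlarge the set of finite factors is a point the paper leaves implicit, but it is the same argument.
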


Note that for a regular interval exchange transformation $T$, the map $\Sigma_T$ is injective (see \cite{Keane1975} page 30).

\begin{example}
\label{ex:t}
Let $T$ be the transformation of Example~\ref{ex:2alpha}.
Since $T$ is minimal, the set $F(T)$ is uniformly recurrent.
In Subsection~\ref{subsec:morphic} we will show that the set $F = F(T)$ is the set of factors of the fixed point of a primitive morphism. 
The words of length at most $6$ of the set $F$ are represented in Figure~\ref{fig:setf}.

\begin{figure}[hbt] 
\centering
\gasset{Nadjust=wh,AHnb=0,ELpos=60,ELdist=.6}
\begin{picture}(60,50)(0,5)
\node(1)(0,30){}
\node(a)(10,40){}
\node(b)(10,30){}
\node(c)(10,20){}

\node(ac)(20,45){}
\node(ba)(20,35){}
\node(bb)(20,25){}
\node(cb)(20,20){}
\node(cc)(20,10){}

\node(acb)(30,50){}
\node(acc)(30,45){}
\node(bac)(30,35){}
\node(bba)(30,25){}
\node(cba)(30,20){}
\node(cbb)(30,15){}
\node(ccb)(30,10){}

\node(acbb)(40,55){}
\node(accb)(40,45){}
\node(bacb)(40,40){}
\node(bacc)(40,35){}
\node(bbac)(40,25){}
\node(cbac)(40,20){}
\node(cbba)(40,15){}
\node(ccba)(40,10){}
\node(ccbb)(40,5){}

\node(acbba)(50,55){}
\node(accba)(50,50){}
\node(accbb)(50,45){}
\node(bacbb)(50,40){}
\node(baccb)(50,35){}
\node(bbacb)(50,30){}
\node(bbacc)(50,25){}
\node(cbacc)(50,20){}
\node(cbbac)(50,15){}
\node(ccbac)(50,10){}
\node(ccbba)(50,5){}

\node(acbbac)(60,55){}
\node(accbac)(60,50){}
\node(accbba)(60,45){}
\node(bacbba)(60,40){}
\node(baccba)(60,37){}
\node(baccbb)(60,32){}
\node(bbacbb)(60,30){}
\node(bbaccb)(60,25){}
\node(cbaccb)(60,20){}
\node(cbbacb)(60,17){}
\node(cbbacc)(60,13){}
\node(ccbacc)(60,10){}
\node(ccbbac)(60,5){}

\drawedge(1,a){$a$}
\drawedge(1,b){$b$}
\drawedge(1,c){$c$}
\drawedge(a,ac){$c$}
\drawedge(b,ba){$a$}
\drawedge(b,bb){$b$}
\drawedge(c,cb){$b$}
\drawedge(c,cc){$c$}
\drawedge(ac,acb){$b$}
\drawedge(ac,acc){$c$}
\drawedge(ba,bac){$c$}
\drawedge(bb,bba){$a$}
\drawedge(cb,cba){$a$}
\drawedge(cb,cbb){$b$}
\drawedge(cc,ccb){$b$}
\drawedge(acb,acbb){$b$}
\drawedge(acc,accb){$b$}
\drawedge(bac,bacb){$b$}
\drawedge(bac,bacc){$c$}
\drawedge(bba,bbac){$c$}
\drawedge(cba,cbac){$c$}
\drawedge(cbb,cbba){$a$}
\drawedge(cbba,cbbac){$c$}
\drawedge(ccb,ccba){$a$}
\drawedge(ccb,ccbb){$b$}
\drawedge(acbb,acbba){$a$}
\drawedge(accb,accba){$a$}
\drawedge(accb,accbb){$b$}
\drawedge(bacb,bacbb){$b$}
\drawedge(bacc,baccb){$b$}
\drawedge(bbac,bbacb){$b$}
\drawedge(bbac,bbacc){$c$}
\drawedge(cbac,cbacc){$c$}
\drawedge(cbba,cbbac){$c$}
\drawedge(ccba,ccbac){$c$}
\drawedge(ccbb,ccbba){$a$}
\drawedge(acbba,acbbac){$c$}
\drawedge(accba,accbac){$c$}
\drawedge(accbb,accbba){$a$}
\drawedge(bacbb,bacbba){$a$}
\drawedge(baccb,baccba){$a$}
\drawedge(baccb,baccbb){$b$}
\drawedge(bbacb,bbacbb){$b$}
\drawedge(bbacc,bbaccb){$b$}
\drawedge(cbacc,cbaccb){$b$}
\drawedge(cbbac,cbbacb){$b$}
\drawedge(cbbac,cbbacc){$c$}
\drawedge(ccbac,ccbacc){$c$}
\drawedge(ccbba,ccbbac){$c$}
\end{picture}
\caption{The words of length $ \leq 6$ of the set $F$.}
\label{fig:setf}
\end{figure}
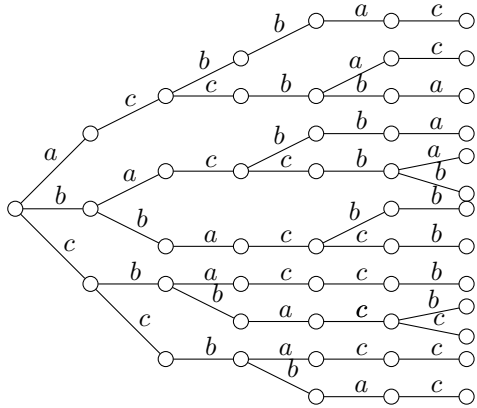
\end{example}

In Section~\ref{sec:quadratic} we will give a sufficient condition for an interval exchange set to be primitive morphic (Theorem~\ref{theo:morphic}).

\subsection{Induced transformations and admissible semi-intervals}
Let $T$ be a minimal interval exchange transformation.
Let $I\subset [\ell, r[$ be a  semi-interval.
Since $T$ is minimal, for each $z \in [\ell, r[$ there is an integer $n>0$ such that $T^n(z) \in I$.

The \emph{transformation induced} by $T$ on $I$ is the transformation $S : I\rightarrow I$ defined for $z \in I$ by $S(z) = T^n(z)$ with $n = \min\{ n>0 \mid T^n(z) \in I\}$. We also say that $S$ is the \emph{first return map} (of $T$) on $I$.
The semi-interval $I$ is called the \emph{domain} of $S$, denoted $D(S)$.

\begin{example}
Let $T$ be the transformation of Example~\ref{ex:2alpha}.
Let $I = [0, 2\alpha[$.
The transformation induced by $T$ on $I$ is
$$
S(z) =
\begin{cases}
T^2(z)	&	\text{ if $0 \leq z< 1-2\alpha$} \\
T(z)	&	\text{ otherwise}.
\end{cases}
$$
\end{example}

Let $T = T_{\lambda, \pi}$ be an interval exchange transformation relative to  $(I_a)_{a \in A}$.
For  $\ell < t < r$, the semi-interval $[\ell, t[$  is \emph{right admissible} for $T$ if there is a $k \in \Z$ such that $t = T^k(\gamma_a)$ for some $a\in A$ and
\begin{enumerate}
\item[(i)] if $k > 0$, then $t < T^h(\gamma_a)$ for all $h$ such that $0 < h < k$,
\item[(ii)] if $k \leq 0$, then $t < T^h(\gamma_a)$ for all $h$ such that $k < h \leq 0$.
\end{enumerate}
We also say that $t$ itself is right admissible.
Note that all semi-intervals $[\ell, \gamma_a[$ with $\ell < \gamma_a$ are right admissible.
Similarly, all semi-intervals $[\ell, \delta_a[$ with $\ell < \delta_a$ are right admissible.

\begin{example}
\label{ex:division}
Let $T$ be the interval exchange transformation of Example~\ref{ex:2alpha}.
The semi-interval $[0, t[$ for $t = 1-2\alpha$ or $t = 1-\alpha$ is right admissible since $1-2\alpha = \gamma_b$ and $1-\alpha = \gamma_c$.
On the contrary, for $t = 2-3\alpha$, it is not right admissible because $t = T^{-1}(\gamma_c)$ but $\gamma_c < t$ contradicting (ii).
\end{example}

The following result is Theorem 14 in~\cite{Rauzy1979}.

\begin{theorem}[Rauzy]
\label{theo:rauzy1}
Let $T$ be a regular $s$-interval exchange transformation and let $I$ be a right admissible interval for $T$.
The transformation induced by $T$ on $I$ is a regular $s$-interval exchange
transformation.
\end{theorem}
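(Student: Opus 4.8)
The plan is to describe the first return map $S$ on $I=[\ell,t[$ via the excursion (tower) decomposition of $T$ over $I$ and to count the pieces of its natural partition. Since $T$ is regular it is minimal by Theorem~\ref{theo:keane}, so every point of $[\ell,r[$ returns to $I$ after finitely many steps and $S$ is everywhere defined; its inverse is the first return map of $T^{-1}$, so $S$ is a bijection of $I$. Each point of $[\ell,r[$ lies in a unique \emph{excursion}, a maximal orbit segment $z_0,Tz_0,\dots,T^{n-1}z_0$ with $z_0\in I$ and $Tz_0,\dots,T^{n-1}z_0\notin I$, $n$ being the return time of $z_0$. For $z_0\in I$ the return word is the itinerary $b_0b_1\cdots b_{n-1}$ read off the partition $(I_a)_{a\in A}$; it is constant on each interval of the partition of $I$ that we must count, and on such an interval $S=T^{n}$ is a single translation. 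Thus $S$ is an interval exchange transformation once we check that this partition is finite, which the count below provides.

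First I would locate the cut points of $S$, i.e.\ the interior points of $I$ at which the return word changes. The return word changes at $z_0$ for one of two reasons: either some iterate $T^{j}(z_0)$, $0\le j<n$, crosses a separation point $\gamma_i$ of $T$ (so the letter $b_j$ flips), or the return time itself jumps, which happens exactly when the excursion grazes the right endpoint, that is $T^{j}(z_0)=t$ for some $0<j\le n$. The a priori possibility that the excursion grazes the left endpoint reduces to the first case, since $T^{j}(z_0)=\ell=0$ forces $T^{j-1}(z_0)=T^{-1}(0)$, which is a nonzero separation point of $T$. Now for a fixed separation point $\gamma_i$ ($2\le i\le s$) the excursion through $\gamma_i$ is unique, so $\gamma_i$ yields exactly one cut, namely the starting point $z_i\in I$ of that excursion. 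Regularity of $T$ makes the orbits of $\gamma_2,\dots,\gamma_s$ infinite and pairwise disjoint; this forces the $z_i$ to be pairwise distinct (two equal would place two separation points on a common orbit) and different from $\ell$ (if $z_i=\ell$ then $\gamma_i$ would lie on the forward orbit of $0=T(\gamma_{i_0})$, putting $\gamma_i$ and $\gamma_{i_0}$ on one orbit and hence making it periodic). So the separation points of $T$ already contribute exactly $s-1$ distinct interior cuts.

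The heart of the matter, and where right admissibility enters, is to show that the right endpoint $t$ produces no further cut. Writing $t=T^{k}(\gamma_a)$, conditions (i)--(ii) say precisely that every intermediate iterate between $\gamma_a$ and $t$ along the orbit is larger than $t$, hence lies outside $I=[0,t[$. Therefore $\gamma_a$ and $t$ sit in one and the same excursion: for $k>0$ the segment $\gamma_a,T\gamma_a,\dots,T^{k-1}\gamma_a,t$ never re-enters $I$, and symmetrically for $k\le 0$ the segment $t,Tt,\dots,T^{-k}t=\gamma_a$ never re-enters $I$. Consequently the cut that the grazing of $t$ would create has the \emph{same} starting point as the cut created by $\gamma_a$: if $\gamma_a$ is one of $\gamma_2,\dots,\gamma_s$ it has already been counted, while if $\gamma_a=\ell$ the common starting point is $\ell$ itself and is not interior. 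Either way no new interior cut is produced, so $S$ has exactly $s-1$ interior cut points and is an $s$-interval exchange transformation.

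It remains to verify that $S$ is regular, namely that the orbits under $S$ of its nonzero separation points $z_2,\dots,z_s$ are infinite and disjoint. For this I would only use that each $S$-orbit is contained in the ambient $T$-orbit, since $S$ is a power of $T$ at every point. Infiniteness is then immediate, as an $S$-periodic point would be $T$-periodic, contradicting regularity of $T$; and disjointness follows likewise, for if the $S$-orbits of $z_i$ and $z_j$ met, then $z_i$ and $z_j$, hence $\gamma_i$ and $\gamma_j$, would share a $T$-orbit, forcing $i=j$. I expect the genuine obstacle to be the bookkeeping of the third paragraph: proving rigorously that grazing the boundary never contributes a cut beyond those coming from the separation points of $T$. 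Right admissibility is exactly the hypothesis that pins the endpoint $t$ into the same excursion as a separation point of $T$, and regularity is what keeps the $s-1$ resulting cuts distinct and interior.
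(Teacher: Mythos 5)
Your proof is correct, but the comparison is slightly indirect: the paper never proves Theorem~\ref{theo:rauzy1} itself (it is quoted from Rauzy's paper); what the paper proves is the two-sided generalization, Theorem~\ref{theo:birauzy1}, and your argument is in substance the one-sided specialization of that proof in different clothing. Your cut points $z_i$ (the starting points of the excursions through the $\gamma_i$) are exactly the paper's points $x_i$, the origins of the paths $P_{I,T}(\gamma_i)$; your key step, that right admissibility places $t$ on the same excursion as some $\gamma_a$ so that the grazing of $t$ duplicates an already-counted cut (or falls on $\ell$), is precisely the paper's use of the hypothesis that the endpoint $v$ lies on the path from $x_d$ to $y_d$; and your regularity argument, that the $S$-orbits of the $z_i$ are contained in the $T$-orbits of the $\gamma_i$, is word for word the paper's. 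The main presentational difference is that you count discontinuities of the return word, whereas the paper constructs the partition $(I_j)$ directly and proves by induction on $h$ that the intermediate images $T^h(I'_j)$ avoid $u$, $v$ and the $x_i$; that induction is exactly the bookkeeping you defer in your third paragraph (finiteness of the partition, $S=T^n$ being a single translation on each piece, and completeness of your two-case classification of cuts), so be aware that what you flag as the ``genuine obstacle'' is indeed where the paper spends most of its effort. What your route buys is a more elementary, coding-flavoured proof of the one-sided statement, including the neat reduction of left-endpoint grazing to the separation point $T^{-1}(\ell)=\gamma_{\pi(1)}$, a case which the two-sided setting of the paper absorbs into the admissibility of $u$; what the paper's route buys is the two-sided statement together with the identification $\Sep(S)=\Div(I,T)\cap I$, which is needed later for Theorem~\ref{theo:birauzy2}. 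One small wording slip: in your distinctness argument, being on one orbit does not by itself make the orbit periodic; the correct dichotomy is that $z_i=\ell$ puts $\gamma_i$ and $\gamma_{\pi(1)}$ on a common orbit, which contradicts disjointness when $i\neq\pi(1)$ and forces periodicity (contradicting infiniteness) when $i=\pi(1)$.
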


Note that the transformation induced by an $s$-interval exchange transformation on $[\ell, r[$ on any semi-interval included in $[\ell, r[$ is always an interval exchange transformation on at most $s+2$ intervals (see~\cite{CornfeldFominSinai1982}, Chapter 5 p. 128).

\begin{example}
\label{ex:induced1}
Consider again the transformation of Example~\ref{ex:2alpha}.
The transformation induced by $T$ on the semi-interval $I = [0,2\alpha[$ is the $3$-interval exchange transformation represented in Figure~\ref{fig:3ietind}.
\begin{figure}[hbt]
\centering
\gasset{Nw=2,Nh=2,ExtNL=y,NLdist=2,AHnb=0,ELside=r}
\begin{picture}(80,15)
\node[fillcolor=red](0h)(0,10){$0$}
\node[fillcolor=blue](1-2alpha)(23.6,10){$1-2\alpha$}
\node[fillcolor=green](1-alpha)(61.8,10){$1-\alpha$}
\node(2alphah)(76.4,10){$2\alpha$}
\drawedge[linecolor=red,linewidth=1](0h,1-2alpha){$a$}
\drawedge[linecolor=blue,linewidth=1](1-2alpha,1-alpha){$b$}
\drawedge[linecolor=green,linewidth=1](1-alpha,2alphah){$c$}

\node[fillcolor=blue](0b)(0,0){$0$}
\node[fillcolor=green](alpha)(38.2,0){$\alpha$}
\node[fillcolor=red](4alpha-1)(52.8,0){$4\alpha-1$}
\node(2alpha)(76.4,0){$2\alpha$}
\drawedge[linecolor=blue,linewidth=1](0b,alpha){$b$}
\drawedge[linecolor=green,linewidth=1](alpha,4alpha-1){$c$}
\drawedge[linecolor=red,linewidth=1](4alpha-1,2alpha){$a$}
\end{picture}
\caption{The transformation induced on $I$.}
\label{fig:3ietind}
\end{figure}
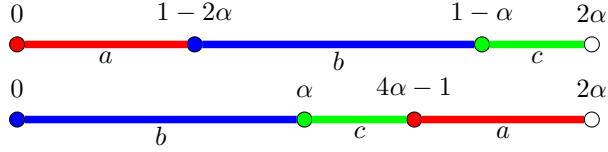
\end{example}

The notion of left admissible interval is symmetrical to that of right admissible.
For $\ell < t < r$, the semi-interval $[t, r[$ is \emph{left admissible} for $T$ if there is a $k \in \Z$ such that $t = T^k(\gamma_a)$ for some $a \in A$ and
\begin{enumerate}
\item[(i)] if $k > 0$, then $T^h(\gamma_a) < t$ for all $h$ such that $0 < h < k$,
\item[(ii)] if $k \leq 0$, then $T^h(\gamma_a) < t$ for all $h$ such that $k <h \leq 0$.
\end{enumerate}
We also say that $t$ itself is left admissible.
Note that, as for right induction, the semi-intervals $[\gamma_a, r[$ and $[\nu_a, r[$ are left admissible.
The symmetrical statements of Theorem~\ref{theo:rauzy1} also hold for left admissible intervals.

Let now generalize the notion of admissibility to a two-sided version.
For a semi-interval $I = [u,v[ \ \subset [\ell, r[$, we define the following functions on $[\ell, r[$:
$$
\rho^+_{I,T}(z) = \min\{ n > 0 \mid T^n(z) \in\ ]u,v[ \}, \quad \rho^-_{I,T}(z) = \min\{ n \geq 0 \mid T^{-n}(z) \in\ ]u,v[\}.
$$

We then define three sets.
First, let
$$
E_{I,T}(z) = \{k \mid -\rho_{I,T}^-(z) \leq k < \rho_{I,T}^+(z)\}.
$$
Next, the set of \emph{neighbors} of $z$ with respect to $I$ and $T$ is
$$
N_{I,T}(z) = \{T^k(z) \mid k \in E_{I,T}(z)\}.
$$
The set of \emph{division points} of $I$ with respect to $T$ is the finite set
$$
\Div(I,T) = \bigcup_{i=1}^s N_{I,T}(\gamma_i).
$$

We now formulate the following definition.
For $\ell \leq u < v \leq r$, we say that the semi-interval $I = [u,v[$ is \emph{admissible} for $T$ if $u,v \in \Div(I,T) \cup \{r\}$.

Note that a semi-interval $[\ell, v[$ is right admissible if and only if  it is admissible and that a semi-interval $[u, r[$ is left admissible if and only if it is
admissible.
Note also that $[\ell, r[$ is admissible.

Note also that for a regular interval exchange transformation relative to a partition $(I_a)_{a \in A}$, each of the semi-intervals $I_a$ (or $J_a$) is admissible although only the first one is right admissible (and the last one is left admissible).
Actually, we can prove that for every word $w$, the semi-intervals $I_w$ and $J_w$ are admissible.
In order to do that, we need the following Lemma.

\begin{lemma}
\label{lem:boundary}
Let $T$ be a $s$-interval exchange transformation on the semi-interval $[\ell,r[$.
For any $k \geq 1$, the set  $P_{k} = \{T^h(\gamma_i) \mid 1 \leq i \leq s,\ 1 \leq h \leq k\}$ is the set of $(s-1)k+1$ left boundaries of the semi-intervals $J_y$ for
all words $y$ with $|y| = k$.
\end{lemma}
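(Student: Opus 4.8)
The plan is to argue by induction on $k$, tracking how the partition of $[\ell,r[$ into the semi-intervals $J_y$ (with $|y|=k$) is refined as $k$ grows, and to read off its left boundaries at each stage. The key tool is the recursion $J_{wa}=T(J_w\cap I_a)$ recorded just before the lemma (it comes from the identity $T^{-1}(J_{wa})=J_w\cap I_a$). For $k=1$ the semi-intervals are the $J_a=T(I_a)$, $a\in A$, whose left boundaries are the points $\delta_a=T(\gamma_a)$; these are exactly the $s$ points of $P_1=\{T(\gamma_i)\mid 1\le i\le s\}$, and they are pairwise distinct since the $J_a$ form a partition. This settles the base case.

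For the inductive step, assume the left boundaries of $\{J_w\mid |w|=k-1\}$ form the set $P_{k-1}$. First I would form the common refinement of this partition with the partition $(I_a)_{a\in A}$: since both are partitions of $[\ell,r[$ into semi-intervals, the left boundaries of $\{J_w\cap I_a\}$ are exactly the union of the two boundary sets, namely $P_{k-1}\cup\{\gamma_1,\dots,\gamma_s\}$ (here $\gamma_1=\ell$ is the common leftmost point, and every listed point is the left end of a genuinely nonempty piece). Now apply $T$. Each nonempty piece $J_w\cap I_a$ is contained in the single interval $I_a$, on which $T$ is an increasing translation, so $T$ carries the left endpoint $t$ of $J_w\cap I_a$ to the left endpoint $T(t)$ of $J_{wa}=T(J_w\cap I_a)$. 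Since $T$ is a bijection of $[\ell,r[$, the set of left boundaries of $\{J_y\mid |y|=k\}$ is therefore $T(P_{k-1}\cup\{\gamma_1,\dots,\gamma_s\})=T(P_{k-1})\cup\{T(\gamma_1),\dots,T(\gamma_s)\}$. Finally $T(P_{k-1})=\{T^{h}(\gamma_i)\mid 1\le i\le s,\ 2\le h\le k\}$ and $\{T(\gamma_i)\}$ supplies the layer $h=1$, whose union is precisely $P_k$. This proves the set-theoretic part for every $s$-interval exchange transformation.

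It remains to count. Distinct nonempty semi-intervals of a partition have distinct left endpoints, so the number of elements of $P_k$ equals the number of nonempty $J_y$ with $|y|=k$; and since $J_y=T^{|y|}(I_y)$ with $T^{|y|}$ a bijection, this equals the number of nonempty $I_y$, i.e. the number of intervals of continuity of $T^k$. For a regular $T$ this number is $(s-1)k+1$ by Proposition~\ref{pro:tn}. Equivalently, and self-containedly, in the refinement step the count can drop below $|P_{k-1}|+(s-1)$ only if some nonzero separation point $\gamma_a$ (with $a\ge2$) coincides with a point of $P_{k-1}\setminus\{\ell\}$, i.e. lies in the strictly positive orbit of some $\gamma_i$; regularity (the idoc condition, together with the forced identity $T(\gamma_{i_0})=\ell$ for some $i_0$) rules this out, giving $|P_k|=|P_{k-1}|+(s-1)=(s-1)k+1$ by induction.

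The main obstacle I anticipate is the bookkeeping in the inductive step: one must verify that passing to the common refinement with $(I_a)_{a\in A}$ neither creates nor destroys boundary points, and that applying $T$ preserves the boundary set as a whole precisely because each refined piece sits inside one interval $I_a$ on which $T$ is a single translation. The counting is comparatively routine once one recognizes that it is exactly the regularity hypothesis — equivalently Proposition~\ref{pro:tn} — that forbids the extra coincidences which would otherwise shrink $P_k$ below $(s-1)k+1$ (as indeed happens for non-regular transformations).
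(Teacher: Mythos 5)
Your proof is correct, but it follows a genuinely different route from the paper's. The paper proves the equality between $P_k$ and the boundary set $Q_k$ (the left boundaries of the nonempty $J_y$, $|y|=k$) by a cardinality sandwich: $\Card(Q_k)=(s-1)k+1$ because the nonempty $J_y$ correspond to the words of $F(T)\cap A^k$ and Proposition~\ref{pro:tn} gives that count; $\Card(P_k)\leq (s-1)k+1$ because, by regularity, the set $R_k=\{T^h(\gamma_i)\mid 2\leq i\leq s,\ 1\leq h\leq k\}$ consists of $(s-1)k$ distinct points while the orbit of $\gamma_1$ folds into that of $\gamma_{\pi(1)}$, so that $P_k=R_k\cup\{T^k(\gamma_1)\}$; and $Q_k\subseteq P_k$ because the left boundary of $J_y=\bigcap_i T^{k-i}(I_{b_i})$ is the left boundary of one of the terms. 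Equality of sets is then forced by the cardinalities. You instead prove the set equality directly, by induction on $k$, via the recursion $J_{wa}=T(J_w\cap I_a)$: the boundary set of the common refinement of $\{J_w\}$ with $(I_a)_{a\in A}$ is $Q_{k-1}\cup\Sep(T)$, and it pushes forward under $T$ (a translation on each $I_a$) onto $Q_k$, giving $Q_k=T\bigl(Q_{k-1}\cup\Sep(T)\bigr)=P_k$ with no counting at all; regularity enters only afterwards, to evaluate $\Card(P_k)$, which you obtain either from Proposition~\ref{pro:tn} (exactly as the paper does) or from your self-contained idoc argument $\Card(P_k)=\Card(P_{k-1})+(s-1)$. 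The approaches share ingredients, but the logical structure differs: in the paper the reverse inclusion $P_k\subseteq Q_k$ is available only through the cardinality comparison, and hence only under regularity, whereas your induction shows that the identity between $P_k$ and the boundary set holds for an arbitrary interval exchange transformation, regularity being needed only for the count $(s-1)k+1$ (which indeed fails without it, as your example of a non-regular transformation suggests). This separation of the set identity from the counting is a small but genuine strengthening, bought at the price of the refinement bookkeeping you carry out in the inductive step.
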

\begin{proof}
Let $Q_k$ be the set of left boundaries of the intervals $J_y$ for $|y| = k$.
Since $\Card(F(T) \cap A^k) = (s-1)k+1$ by Proposition~\ref{pro:tn}, we have $\Card(Q_k) = (s-1)k+1$.
Since $T$ is regular the set $R_k = \{T^h(\gamma_i) \mid 2 \leq i \leq s,\ 1 \leq h \leq k\}$ is made of $(s-1)k$ distinct points.
Moreover, since 
$$
\gamma_1 = T(\gamma_{\pi(1)}),\ T(\gamma_1) = T^2(\gamma_{\pi(1)}), \ldots, T^{k-1}(\gamma_1) = T^k(\gamma_{\pi(1)}),
$$
we have $P_k = R_k\cup \{T^k(\gamma_1)\}$.
This implies $\Card(P_k) \leq (s-1)k+1$.
On the other hand, if $y = b_0 \cdots b_{k-1}$, then $J_y = \cap_{i=0}^{k-1} T^{k-i}(I_{b_{i}})$.
Thus the left boundary of each $J_y$ is the left boundary of some $T^h(I_a)$ for some $h$ with $1 \leq h \leq k$ and some $a \in A$.
Consequently $Q_k \subset P_k$.
This proves that $\Card(P_k) = (s-1)k+1$ and that consequently $P_k = Q_k$.
\end{proof}

A dual statement holds for the semi-intervals $I_y$.

\begin{proposition}
\label{pro:jadm}
Let $T$ be a $s$-interval exchange transformation on the semi-interval $[\ell,r[$.
For any $w\in F(T)$, the semi-interval $J_w$ is admissible.
\end{proposition}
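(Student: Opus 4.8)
The plan is to read off both endpoints of $J_w$ from Lemma~\ref{lem:boundary} and then verify directly that they are neighbors of separation points for the interval $J_w$ itself. The empty word is harmless, since $J_\varepsilon=[\ell,r[$ is admissible by the remark preceding the statement, so I assume $k:=|w|\geq 1$. The semi-intervals $I_y$ with $y\in F(T)$ and $|y|=k$ partition $[\ell,r[$, and since $T^k$ is a bijection their images $J_y=T^k(I_y)$ partition $[\ell,r[$ as well, into $(s-1)k+1$ pieces. By Lemma~\ref{lem:boundary} the left boundaries of these pieces are exactly the points of
\[
P_k=\{T^h(\gamma_i)\mid 1\leq i\leq s,\ 1\leq h\leq k\}.
\]
Writing $J_w=[\delta_w,\nu_w[$, the left boundary $\delta_w$ lies in $P_k$; the right boundary $\nu_w$ equals $r$ if $J_w$ is the rightmost piece and is otherwise the left boundary of the adjacent piece, hence again in $P_k$. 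Thus $\delta_w,\nu_w\in P_k\cup\{r\}$.

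The key point is that the open interval $]\delta_w,\nu_w[$ contains no point of $P_k$: the points of $P_k$ are precisely the boundaries of the partition into the $J_y$, so none of them can lie strictly between two consecutive boundaries. Consequently, for every separation point $\gamma_i$ and every integer $n$ with $1\leq n\leq k$ we have $T^n(\gamma_i)\in P_k$, so $T^n(\gamma_i)\notin\,]\delta_w,\nu_w[$, which yields $\rho^+_{J_w,T}(\gamma_i)>k$ for every $i$.

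It remains to match the endpoints against the definition of $\Div(J_w,T)$. If $\delta_w=T^h(\gamma_i)$ with $1\leq h\leq k$ (such a representation exists because $\delta_w\in P_k$), then $\rho^-_{J_w,T}(\gamma_i)\geq 0$ gives $-\rho^-_{J_w,T}(\gamma_i)\leq 0<h\leq k<\rho^+_{J_w,T}(\gamma_i)$, so $h\in E_{J_w,T}(\gamma_i)$ and hence $\delta_w\in N_{J_w,T}(\gamma_i)\subseteq\Div(J_w,T)$. Applying the same computation to $\nu_w$ when $\nu_w\neq r$ shows $\nu_w\in\Div(J_w,T)$, so $\delta_w,\nu_w\in\Div(J_w,T)\cup\{r\}$ and $J_w$ is admissible. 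I expect the only delicate step to be the identification of $P_k$ with the partition boundaries, which is exactly the content of Lemma~\ref{lem:boundary}; once that gives $\rho^+_{J_w,T}(\gamma_i)>k$, membership in $E_{J_w,T}(\gamma_i)$ is immediate, since every $h\in\{1,\dots,k\}$ automatically satisfies $-\rho^-_{J_w,T}(\gamma_i)\leq 0<h$.
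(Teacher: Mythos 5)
Your proof is correct and takes essentially the same route as the paper's: both rest on Lemma~\ref{lem:boundary} to write the endpoints of $J_w$ as $T^h(\gamma_i)$ with $1\leq h\leq k$, and both use the fact that the points of $P_k$ are left boundaries of the partition into the $J_y$'s, hence cannot lie in the open interior of $J_w$, to conclude that these exponents belong to $E_{J_w,T}(\gamma_i)$. The only cosmetic difference is that you establish the uniform bound $\rho^+_{J_w,T}(\gamma_i)>k$ for all $i$ at once (and treat the empty word explicitly), whereas the paper runs the same argument separately for each endpoint.
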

\begin{proof}
Set $|w| = k$ and $J_w = [u,v[$.
By Lemma~\ref{lem:boundary}, we have $u = T^g(\gamma_i)$ for $1 \leq i \leq s$ and $1 \leq g \leq k$.
Similarly, we have $v = r$ or $v = T^d(\gamma_j)$ for $1 \leq j \leq s$ and $1 \leq d \leq k$.

For  $1 < h < g$, the point $T^h(\gamma_i)$ is the left boundary of some semi-interval $J_y$ with $|y| = k$ and thus $T^h(\gamma_i) \notin J_w$.
This shows that $g \in E_{J_w,T}(\gamma_i)$ and thus that $u \in \Div(J_w,T)$.

If $v = r$, then $v \in \Div(J_w,T)$.
Otherwise, one shows in the same way as above that $v \in \Div(J_w,T)$.
Thus $J_w$ is admissible.
\end{proof}
Note that the same statement holds for the semi-intervals $I_w$ instead of the semi-intervals $J_w$ (using the dual statement of Lemma~\ref{lem:boundary}).

It can be useful to reformulate the definition of a division point and of an admissible pair using the terminology of graphs.
Let $G(T)$ be the graph with vertex set $[\ell, r[$ and edges the pairs $(z, T(z))$ for $z \in [\ell, r[$.
Then, if $T$ is minimal and $I$ is a semi-interval, for any $z \in [\ell, r[$, there is a path $P_{I,T}(z)$ such that its origin $x$ and its end $y$ are in $I$, $z$ is on the path, $z \neq y$ and no vertex of the path except $x, y$ are in $I$ (actually $x = T^{-n}(z)$ with $n = \rho^-_{I,T}(z)$ and $y = T^m(z)$ with $m = \rho^+_{I,T}(z)$).
Then the division points of $I$ are the vertices which are on a path $P_{I,T}(\gamma_i)$ but not at its end (see Figure~\ref{fig:adm}).

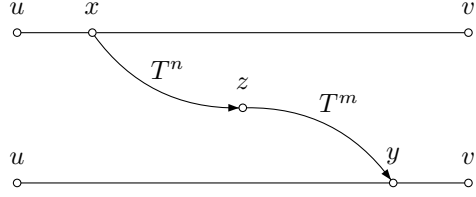
\begin{figure}[hbt]
\centering
\gasset{Nw=1,Nh=1,ExtNL=y,NLdist=2}
\begin{picture}(80,25)
\node(u1)(0,20){$u$}\node(x)(10,20){$x$}\node(v1)(60,20){$v$}
\node(z)(30,10){$z$}
\node(u3)(0,0){$u$}\node(y)(50,0){$y$}\node(v3)(60,0){$v$}
\node[Nframe=n](d1)(80,20){}\node[Nframe=n](d2)(80,10){}\node[Nframe=n](d3)(80,0){}

\drawedge[AHnb=0](u1,x){}\drawedge[AHnb=0](x,v1){}
\drawedge[AHnb=0](u3,y){}\drawedge[AHnb=0](y,v3){}
\drawedge[curvedepth=-3](x,z){$T^n$}\drawedge[curvedepth=3](z,y){$T^m$}
\end{picture}
\caption{The neighbors of $z$ with respect to $I = [u,v[$.}
\label{fig:adm}
\end{figure}

The following is a generalization of Theorem~\ref{theo:rauzy1}.
Recall that $\Sep(T)$ denotes the set of separation points of $T$, i.e. the points $\gamma_1 = 0, \gamma_2, \ldots, \gamma_s$ (which are the left boundaries of the semi-intervals $I_1, \ldots, I_s$).

\begin{theorem}
\label{theo:birauzy1}
Let $T$ be a regular $s$-interval exchange transformation on $[\ell, r[$.
For any admissible semi-interval $I = [u,v[$, the transformation $S$ induced by $T$ on $I$ is a regular $s$-interval exchange transformation with separation points $\Sep(S) = \Div(I,T)\cap I$.
\end{theorem}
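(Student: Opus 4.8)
The plan is to describe the first return map $S$ explicitly as a piecewise translation, read off its separation points geometrically, and then use admissibility together with the regularity of $T$ to show that these separation points are exactly the $s$ points of $\Div(I,T)\cap I$. Throughout I write $n(z)$ for the return time, so that $S(z)=T^{n(z)}(z)$. Since $T$ is minimal, $S$ is defined everywhere on $I$, and as the first return map of a minimal interval exchange it is again an interval exchange with at most $s+2$ intervals (see Chapter~5 of~\cite{CornfeldFominSinai1982}). The task is thus to pin down its separation points and to show that there are exactly $s$ of them.

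First I would characterise $\Sep(S)$ geometrically. On a subinterval of $I$ where $n$ is constant and the orbit segment $z,T(z),\ldots,T^{n-1}(z)$ stays away from the singularities $\gamma_2,\ldots,\gamma_s$ of $T$, the map $S$ is a single translation by some $\alpha$ (equation~\eqref{eq:jw2} applied to the corresponding return word). Hence a point $w\in I$ is a left boundary of the partition of $S$ exactly when, as $z$ crosses $w$, the orbit segment either meets a singularity $\gamma_i$ (so $T$ jumps and $S$ is discontinuous) or grazes an endpoint of $I$ (so the return time $n$ jumps), together with the left end $u$ itself. In the path language of Figure~\ref{fig:adm}, this says that $w\in\Sep(S)$ precisely when $w=T^{-h}(p)\in I$ for some $p\in\{\gamma_2,\ldots,\gamma_s\}\cup\{u,v\}$ with $h$ not exceeding the first-return window of $w$, or $w=u$.

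The role of admissibility is to absorb the endpoint contributions into $\Div(I,T)$. Since $I=[u,v[$ is admissible, $u,v\in\Div(I,T)\cup\{r\}$. If $v=r$ the orbits never graze $v$ from the right, so $v$ produces no separation point; otherwise $v=T^{c}(\gamma_j)$ lies on the orbit of some $\gamma_j$, and a grazing point $T^{-h}(v)=T^{c-h}(\gamma_j)$ lies on that same orbit, hence is one of the neighbours in $N_{I,T}(\gamma_j)$; the same applies to $u$. This yields the inclusion $\Sep(S)\subseteq\Div(I,T)\cap I$: every separation point of $S$ is a $T$-iterate of some $\gamma_i$ lying between two consecutive visits of that orbit to $\,]u,v[\,$, i.e. a division point in $I$. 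For the reverse inclusion I would use the computation that the only point of $N_{I,T}(\gamma_i)$ lying in the open interval $\,]u,v[\,$ is $x_i:=T^{-\rho^-_{I,T}(\gamma_i)}(\gamma_i)$, all other neighbours in $I$ being equal to $u$; so $\Div(I,T)\cap I=\{x_1,\ldots,x_s\}\cup\{u\}$. Each $x_i$ is then shown to be a genuine separation point of $S$ by following its forward orbit: it either reaches the singularity $\gamma_i$ before returning to $I$ (a singularity crossing) or first lands on the endpoint $u$ (a return-time jump), and in both cases $S$ is discontinuous at $x_i$; the point $u$ is a separation point as the left end of $I$.

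It remains to count and to check regularity. By regularity of $T$ the orbits of $\gamma_2,\ldots,\gamma_s$ are infinite and pairwise disjoint, so the points $x_2,\ldots,x_s$ are pairwise distinct; since $\gamma_1=T(\gamma_{\pi(1)})$ lies on one of these orbits, $x_1$ coincides with one of $x_2,\ldots,x_s$, while $u$ is an additional point outside $\,]u,v[\,$ (and in $\Div(I,T)$ by admissibility). Hence $\Div(I,T)\cap I$ has exactly $s$ elements, so $S$ is an $s$-interval exchange transformation whose origin is $u$ and whose separation points are $\Sep(S)=\Div(I,T)\cap I$. Finally, for each nonzero separation point $x_i$ the $S$-orbit is the subsequence of the $T$-orbit of $\gamma_i$ consisting of its visits to $I$; by minimality it is infinite, and since the underlying $T$-orbits are disjoint these $S$-orbits are disjoint as well, so $S$ is regular. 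I expect the main obstacle to be the bookkeeping of the last two paragraphs: matching each endpoint-grazing separation point to a neighbour of the correct $\gamma_i$ (which is exactly where admissibility and the half-open versus open distinction enter), and the case analysis certifying that every $x_i$ really is a point of discontinuity of $S$.
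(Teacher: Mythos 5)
Your route is genuinely different from the paper's: you import the general fact that a first return map of an $s$-interval exchange is an interval exchange on at most $s+2$ intervals and then try to pin down its discontinuities, whereas the paper never uses that fact --- it builds the candidate partition of $I$ directly (left endpoints $u$ and $x_i=T^{-\rho^-_{I,T}(\gamma_i)}(\gamma_i)$) and proves by induction along orbit segments that $S$ restricted to each piece is a translation. Several of your steps are correct in substance and agree with the paper's: the identification $\Div(I,T)\cap I=\{u,x_2,\ldots,x_s\}$, the count of exactly $s$ points (via $\gamma_1=T(\gamma_{\pi(1)})$ and regularity), the use of admissibility to place the endpoint-grazing discontinuities inside the neighbor sets $N_{I,T}(\gamma_j)$ (your ``hence is one of the neighbours'' does require the window argument you allude to, combining $c\in E_{I,T}(\gamma_j)$ with the return-time constraint, but it goes through), and the regularity argument via inclusion of $S$-orbits in $T$-orbits.

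The genuine gap is your reverse inclusion $\Div(I,T)\cap I\subseteq\Sep(S)$: the claim that each $x_i$ is a point of discontinuity of $S$ (``it either reaches the singularity $\gamma_i$ \ldots{} or first lands on the endpoint $u$ \ldots, and in both cases $S$ is discontinuous at $x_i$'') is false, because it conflates separation points with singularities. In this paper an interval exchange transformation carries its partition as data, and a nonzero separation point need not be a singularity; Example~\ref{ex:2alpha} exhibits a regular $3$-interval exchange transformation $T=R^2$ whose separation point $\gamma_c=1-\alpha$ is explicitly not a singularity. For this $T$ and the admissible semi-interval $I=[0,1-2\alpha[$ one gets $x_3=T^{-2}(\gamma_c)=2-5\alpha$, whose return orbit $2-3\alpha$, $1-\alpha$, $\alpha$, $3\alpha-1$ crosses $\gamma_c$ where $T$ is continuous and never meets $u=0$, $v=1-2\alpha$, nor the unique singularity $1-2\alpha$ of $T$; hence $S$ coincides with the single translation $T^4$ on a whole neighborhood of $x_3$ and is continuous there. (The case $I=[\ell,r[$, where $S=T$, already shows the same phenomenon.) So, read as a statement about genuine discontinuities, the equality $\Sep(S)=\Div(I,T)\cap I$ you set out to prove is simply not true; Theorem~\ref{theo:birauzy1} asserts only that $S$, equipped with the partition of $I$ cut out by the $s$ points of $\Div(I,T)\cap I$, is a regular $s$-interval exchange transformation. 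The repair is to drop the reverse-inclusion step altogether: your forward inclusion shows every discontinuity of $S$ lies in $\Div(I,T)\cap I$, and since this set has exactly $s$ elements (including $u$), the map $S$ is a translation on each of the $s$ semi-intervals these points determine --- which is precisely what is required --- and regularity then follows as in your last paragraph.
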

\begin{proof}
Since $T$ is regular, it is minimal.
Thus for each $i \in \{2, \ldots, s\}$ there are points $x_i, y_i \in ]u,v[$ such that there is a path from $x_i$ to $y_i$ passing by $\gamma_i$ but not containing any point of $I$ except at its origin and its end.
Since $T$ is regular, the $x_i$ are all distinct and the $y_i$ are all distinct.

Since $I$ is admissible, there exist $g, d \in \{1, \ldots, s\}$ such that $u \in N_{I,T}(\gamma_g)$ and $v \in N_{I,T}(\gamma_d)$.
Moreover,since $u$ is a neighbor of $\gamma_g$ with respect to $I$, $u$ is on the path from $x_g$ to $y_g$ (it can be either before or after $\gamma_g$).
Similarly, $v$ is on the path from $x_d$ to $y_d$ (see Figure~\ref{fig:biinduced} where $u$ is before $\gamma_g$ and $v$ is after $\gamma_d$).

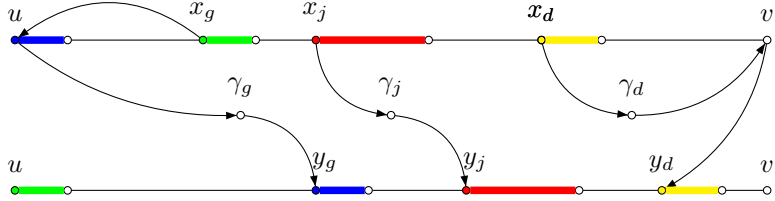
\begin{figure}[hbt]
\centering
\gasset{Nw=1,Nh=1,ExtNL=y,NLdist=2}
\begin{picture}(100,25)
\node[fillcolor=blue](uh)(0,20){$u$}
\node(x_{i_2})(7,20){}
\node[fillcolor=green](x_g)(25,20){$x_g$}
\node(x_{i_{k+1}})(32,20){}
\node[fillcolor=red](x_{i_j})(40,20){$x_j$}
\node(x_{i_{j+1}})(55,20){}
\node[fillcolor=yellow](x_d)(70,20){$x_d$}
\node(x_e)(78,20){}
\node(vh)(100,20){$v$}

\node(gamma_g)(30,10){$\gamma_g$}
\node(gamma_{i_j})(50,10){$\gamma_j$}
\node(gamma_d)(82,10){$\gamma_d$}

\node[fillcolor=green](ub)(0,0){$u$}
\node(y_{i_2})(7,0){}
\node[fillcolor=blue](y_g)(40,0){$\ \ y_g$}
\node(y_{i_{k+1}})(47,0){}
\node[fillcolor=red](y_{i_j})(60,0){$\ \ y_j$}
\node(y_{i_{j+1}})(75,0){}
\node(x_d)(70,20){$x_d$}
\node[fillcolor=yellow](y_d)(86,0){$y_d$}
\node(y_e)(94,0){}
\node(vb)(100,0){$v$}

\drawedge[AHnb=0,linecolor=blue,linewidth=1](uh,x_{i_2}){}
\drawedge[AHnb=0](x_{i_2},x_g){}
\drawedge[AHnb=0,linecolor=green,linewidth=1](x_g,x_{i_{k+1}}){}
\drawedge[AHnb=0](x_{i_{k+1}},x_{i_j}){}
\drawedge[AHnb=0,linecolor=red,linewidth=1](x_{i_j},x_{i_{j+1}}){}
\drawedge[AHnb=0](x_{i_{j+1}},x_d){}
\drawedge[AHnb=0,linecolor=yellow,linewidth=1](x_d,x_e){}
\drawedge[AHnb=0](x_e,vh){}
\drawedge[curvedepth=-5](x_g,uh){}
\drawedge[curvedepth=-3](uh,gamma_g){}
\drawedge[curvedepth=3](gamma_g,y_g){}
\drawedge[curvedepth=-3](x_{i_j},gamma_{i_j}){}
\drawedge[curvedepth=3](gamma_{i_j},y_{i_j}){}
\drawedge[curvedepth=-3](x_d,gamma_d){}
\drawedge[curvedepth=-3](gamma_d,vh){}
\drawedge[curvedepth=3](vh,y_d){}
\drawedge[AHnb=0,linecolor=green,linewidth=1](ub,y_{i_2}){}
\drawedge[AHnb=0](y_{i_2},y_g){}
\drawedge[AHnb=0,linecolor=blue,linewidth=1](y_g,y_{i_{k+1}}){}
\drawedge[AHnb=0](y_{i_{k+1}},y_{i_j}){}
\drawedge[AHnb=0,linecolor=red,linewidth=1](y_{i_j},y_{i_{j+1}}){}
\drawedge[AHnb=0](y_{i_{j+1}},y_d){}
\drawedge[AHnb=0,linecolor=yellow,linewidth=1](y_d,y_e){}
\drawedge[AHnb=0](y_e,vb){}
\end{picture}
\caption{The transformation induced on $[u, v[$.}
\label{fig:biinduced}
\end{figure}

Set $x_1 = y_1 = u$.
Let $(I_j)_{1 \leq j \leq s}$ be the partition of $I$ in semi-intervals such that $x_j$ is the left boundary of $I_j$ for $1 \leq j \leq s$.
Let $J_j$ be the partition of $I$ such that $y_j$ is the left boundary of $J_j$ for $1 \leq j \leq s$.
We will prove that
$$
S(I_j) =
\begin{cases}
J_j	&	\text{if $j \neq 1,g$} \\
J_1	&	\text{if $j = g$} \\
J_g	&	\text{if $j = 1$}
\end{cases}
$$
and that the restriction of $S$ to $I_j$ is a translation.

Assume first that $j \neq 1,g$.
Then $S(x_j) = y_j$.
Let $k$ be such that $y_j = T^k(x_j)$ and denote $I'_j = I_j \setminus x_j$.
We will prove by induction on $h$ that for $0 \leq h \leq k-1$, the set $T^h(I'_j)$ does not contain $u, v$ or any $x_i$.
It is true for $h = 0$.
Assume that it holds up to $h < k-1$.

For any $h'$ with $0 \leq h' \leq h$, the set $T^{h'}(I_j')$ does not contain any $\gamma_i$.
Indeed, otherwise there would exist $h''$ with $0 \leq h'' \leq h'$ such that $x_i \in T^{h''}(I'_j)$, a contradiction.
Thus $T$ is a translation on $T^{h'}(I_j)$.
This implies that $T^h$ is a translation on $I_j$.
Note also that $T^h(I'_j) \cap I = \emptyset$.
Assume the contrary.
We first observe that we cannot have $T^h(x_j) \in I$.
Indeed, $h < k$ implies that $T^h(x_j) \notin ]u, v[$.
And we cannot have $T^h(x_j) = u$ since $j \neq g$.
Thus $T^h(I'_j) \cap I \neq \emptyset$ implies that $u \in T^h(I'_j)$, a contradiction.

Suppose that $u = T^{h+1}(z)$ for some $z \in I'_j$.
Since $u$ is on the path from $x_g$ to $y_g$, it implies that for some $h'$ with $0 \leq h' \leq h$ we have $x_g = T^{h'}(z)$, a contradiction with the induction hypothesis.
A similar proof (using the fact that $v$ is on the path from $x_d$ to $y_d$) shows that $T^{h+1}(I'_j)$ does not contain $v$.
Finally suppose that some $x_i$ is in $T^{h+1}(I'_j)$.
Since the restriction of $T^h$ to $I_j$ is a translation, $T^h(I_j)$ is a semi-interval.
Since $T^{h+1}(x_j)$ is not in $I$ the fact that $T^{h+1}(I_j) \cap I$ is not empty implies that $u \in T^h(I_j)$, a contradiction.

This shows that $T^k$ is continuous at each point of $I'_j$ and that $S = T^k(x)$ for all $x \in I_j$.
This implies that the restriction of $S$ to $I_j$ is a translation into $J_j$.

If $j=1$, then $S(x_1) = S(u) = y_g$.
The same argument as above proves that the restriction of $S$ to $I_1$ is a translation form $I_1$ into $J_g$.
Finally if $j = g$, then $S(x_g) = x_1 = u$ and, similarly, we obtain that the restriction of $S$ to $I_g$ is a translation into $I_1$.

Since $S$ is the transformation induced by the transformation $T$ which is one to one, it is also one to one.
This implies that the restriction of $S$ to each of the semi-intervals $I_j$ is a bijection onto the corresponding interval $J_j,J_1$ or $J_g$ according to the value of $j$.

This shows that $S$ is an $s$-interval exchange transformation.
Since the orbits of the points $x_2, \cdots, x_s$ relative to $S$ are included in the orbits of $\gamma_2, \ldots, \gamma_s$, they are infinite and disjoint.
Thus $S$ is regular.

Let us  finally show that $\Sep(S) = \Div(I,T) \cap I$.
We have $\Sep(S) = \{x_1, x_2, \ldots, x_s\}$ and $x_i \in N_{I,T}(\gamma_i)$.
Thus $\Sep(S) \subset \Div(I,T) \cap I$.
Conversely, let $x \in \Div(I,T) \cap I$.
Then $x \in N_{I,T}(\gamma_i) \cap I$ for some $1 \leq i \leq s$.
If $i\ne 1,g$, then $x = x_i$.
If $i = 1$, then either $x = u$ (if $u = \ell$) or $x = x_{\pi(1)}$ since $\gamma_1 = T(\gamma_{\pi(1)})$.
Finally, if $i = g$ then $x = u$ or $x = x_g$.
Thus $x \in \Sep(S)$ in all cases.
\end{proof}

We have already noted that for any $s$-interval exchange transformation on $[\ell, r[$ and any semi-interval $I$ of $[\ell, r[$, the transformation $S$ induced by $T$ on $I$ is an interval exchange transformation on at most $s+2$-intervals.
Actually, it follows from the proof of Lemma 2, page 128 in~\cite{CornfeldFominSinai1982} that, if $T$ is regular and $S$ is an $s$-interval exchange transformation with separation points $\Sep(S) = \Div(I,T)\cap I$, then $I$ is admissible.
Thus the converse of Theorem~\ref{theo:birauzy1} is also true.

\subsection{A closure property}
In the following we will prove a closure property of the family of regular interval exchange sets.
The same property holds for Sturmian sets (see~\cite{JustinVuillon2000}) and for uniformly recurrent tree sets (see~\cite{BertheDeFeliceLeroyPerrinReutenauerRindone2014}).

\begin{lemma}
\label{lem:returnsind}
Let $T$ be a regular interval exchange transformation and let $F = F(T)$. 
For $w \in F$, let $S$ be the transformation induced by $T$ on $J_w$.
One has $x\in \RR_F(w)$ if and only if
$$
\Sigma_T(z) = x\Sigma_T(S(z))
$$
for some $z \in J_w$.
\end{lemma}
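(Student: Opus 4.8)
The plan is to exploit the dictionary between return times to the semi-interval $J_w$ under $T$ and occurrences of $w$ in the natural coding, the crucial translation being equation~\eqref{eq:jw}: $T^n(z) \in J_w$ holds exactly when the length-$|w|$ block of $\Sigma_T(z)$ ending at position $n-1$ spells $w$. Throughout I would fix $z \in J_w$ and set $m = |w|$, writing $w = b_0\cdots b_{m-1}$, and would pass to a two-sided reading of the coding. Indeed, since $z \in J_w = T^m(I_w)$ the point $T^{-m}(z)$ lies in $I_w$, so $T^{-m+i}(z) \in I_{b_i}$ for $0 \le i < m$; extending the indexing of $\Sigma_T(z) = a_0 a_1 \cdots$ to negative indices by letting $a_j$ be the letter with $T^j(z) \in I_{a_j}$, one gets $a_{-m}\cdots a_{-1} = w$ while $a_0 a_1 \cdots = \Sigma_T(z)$, and \eqref{eq:jw} reads $T^n(z) \in J_w \Longleftrightarrow a_{n-m}\cdots a_{n-1} = w$ for every $n \ge 0$.

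For the direct implication, suppose $\Sigma_T(z) = x\,\Sigma_T(S(z))$ with $z \in J_w$, and let $n$ be the first return time, so $S(z) = T^n(z)$ and $\Sigma_T(S(z)) = a_n a_{n+1}\cdots$. Since the coding of a regular (hence, by Theorem~\ref{theo:keane}, minimal) interval exchange is not eventually periodic, the identity $\Sigma_T(z) = x\,\Sigma_T(S(z))$ forces $x = a_0\cdots a_{n-1}$. Then $wx = a_{-m}\cdots a_{n-1}$ is a factor of the coding, so $wx \in F$; as $S(z) = T^n(z) \in J_w$, the block $a_{n-m}\cdots a_{n-1}$ equals $w$, so $wx \in A^+w$ and $x \in \Gamma_F(w)$. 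Minimality of $n$ gives $a_{j-m}\cdots a_{j-1} \neq w$ for $0 < j < n$, so no proper nonempty prefix $a_0\cdots a_{j-1}$ of $x$ lies in $\Gamma_F(w)$; hence $x \in \RR_F(w)$.

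For the converse, take $x \in \RR_F(w) \subseteq \Gamma_F(w)$, so $wx \in F$ and $wx \in A^+w$. As $T$ is minimal, $wx \in F = F(T)$ gives $I_{wx} \neq \emptyset$; choosing $z' \in I_{wx}$ and putting $z = T^m(z')$ yields $z \in J_w$ (because $I_{wx} \subseteq I_w$) with $\Sigma_T(z) = \Sigma_T(T^m(z'))$ beginning with $x$. Setting $\ell = |x|$, I would show $\ell$ is the first return time of $z$ to $J_w$: since $wx$ ends with $w$, the two-sided reading gives $a_{\ell-m}\cdots a_{\ell-1} = w$, i.e. $T^\ell(z) \in J_w$; and an earlier return $T^j(z) \in J_w$ with $0 < j < \ell$ would give $a_{j-m}\cdots a_{j-1}=w$, making the proper prefix $a_0\cdots a_{j-1}$ of $x$ lie in $\Gamma_F(w)$, contradicting $x \in \RR_F(w)$. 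Hence $S(z) = T^\ell(z)$ and $\Sigma_T(z) = (a_0\cdots a_{\ell-1})\,\Sigma_T(T^\ell(z)) = x\,\Sigma_T(S(z))$.

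The main obstacle, and the reason for working with the two-sided reading $a_{-m}\cdots a_{-1}a_0 a_1\cdots$ rather than with \eqref{eq:jw} in its one-sided form, is the case of short return words, where $\ell = |x| < m = |w|$ and $w$ overlaps its own occurrence. There the block certifying $T^\ell(z) \in J_w$ reaches into negative positions, and it is precisely the identity $a_{-m}\cdots a_{-1} = w$ coming from $z \in J_w$ that lets one read off that this block still spells $w$. A secondary point to justify carefully is the aperiodicity used in the direct implication to pin down $|x| = n$; this holds because $\Sigma_T(z)$ is uniformly recurrent (Proposition~\ref{pro:regularur}) and has the strictly increasing factor complexity $\Card(F(T)\cap A^k) = (s-1)k+1$ from Lemma~\ref{lem:boundary}, so it cannot be eventually periodic.
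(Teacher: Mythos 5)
Your proof is correct and follows essentially the same route as the paper's: both directions rest on the dictionary between first returns of $z$ to $J_w$ and occurrences of $w$ in the coding, and your choice $z = T^{|w|}(z')$ with $z' \in I_{wx}$ is exactly the paper's choice $z \in J_w \cap I_x$. The only difference is one of rigor, in your favor: you make explicit two points the paper's terse proof leaves implicit, namely the aperiodicity of $\Sigma_T(z)$ needed to identify $|x|$ with the first return time in the direction (coding identity $\Rightarrow x \in \RR_F(w)$), and the two-sided reading needed to certify returns when $|x| < |w|$ and the occurrences of $w$ overlap.
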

\begin{proof}
Assume first that $x\in \RR_F(w)$.
Then for any $z \in J_{w} \cap I_x$, we have $S(z) = T^{|x|}(z)$ and
$$
\Sigma_T(z) = x\Sigma_T(T^{|x|}(z)) = x\Sigma_T(S(z)).
$$
Conversely, assume that $\Sigma_T(z) = x\Sigma_T(S(z))$ for some $z \in J_w$.
Then $T^{|x|}(z) \in J_w$ and thus $wx \in A^*w$ which implies that $x \in \Gamma_F(w)$.
Moreover $x$ does not have a proper prefix in $\Gamma_F(w)$ and thus $x\in \RR_F(w)$.
\end{proof}

Since a regular interval exchange set is recurrent, the previous lemma says that the natural coding of a point in $J_w$ is a concatenation of first return words to $w$.
Moreover, note also that $T^n(z) \in J_w$ if and only if the prefix of length $n$ of $\Sigma_T(z)$ is a return word to $w$.

\begin{theorem}
\label{theo:returns}
Any derived set of a regular $s$-interval exchange set is a regular $s$-interval exchange set.
\end{theorem}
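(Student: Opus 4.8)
The plan is to realize the derived set $\D_f(F)$ as the natural coding set $F(S)$ of the transformation $S$ induced by $T$ on the semi-interval $J_w$, where $w \in F$ is the word associated with the coding morphism $f$. First I would invoke Proposition~\ref{pro:jadm} to note that $J_w$ is admissible, and then Theorem~\ref{theo:birauzy1} to conclude that $S$ is a regular $s$-interval exchange transformation, with $\Sep(S) = \Div(J_w,T) \cap J_w$. By Keane's theorem (Theorem~\ref{theo:keane}) $S$ is minimal, so $F(S)$ is well defined and independent of the base point. It then suffices to exhibit orders on the alphabet of $S$ for which $F(S) = \D_f(F)$.

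The key step is to set up a bijection between the $s$ intervals of $S$ and the set $\RR_F(w)$ of first right return words. On each interval $K$ of $S$ the map $S$ acts as a single translation $T^{n}$, and since $S=T^n$ is continuous on $K$, each $T^i(K)$ for $0 \le i < n$ lies in a single atom $I_a$ of $T$; hence the $T$-coding of every point of $K$ begins with a fixed word $x_K$ of length $n$. Lemma~\ref{lem:returnsind} then gives $x_K \in \RR_F(w)$, and conversely that every element of $\RR_F(w)$ arises as some $x_K$. For injectivity I would argue that if $x \in \RR_F(w)$ has length $n$, then for every $z \in I_x \cap J_w$ one has $S(z) = T^n(z)$ (because $x$ is a \emph{first} return word, so $T^n(z)$ is the first return to $J_w$); thus $I_x \cap J_w$ is contained in a single interval of $S$, and two distinct intervals cannot share the same return word. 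This shows that $K \mapsto x_K$ is a bijection onto $\RR_F(w)$, so in particular $\Card(\RR_F(w)) = s$ and the alphabet $B$ of $f$ has $s$ letters.

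With this bijection, I would relabel the intervals of $S$ by $B$, setting the label of $K$ to be $f^{-1}(x_K)$, and order $B$ by the spatial order of the intervals $K$ for $<_1$ and of the images $S(K)$ for $<_2$, so that $S$ becomes a genuine $B$-interval exchange transformation. Iterating the decomposition of Lemma~\ref{lem:returnsind} then yields, for every $z \in J_w$, the identity $\Sigma_T(z) = f(\Sigma_S(z))$, i.e. $\Sigma_S(z) = f^{-1}(\Sigma_T(z))$. To finish, I would pick an infinite word $x$ with $F = F(x)$, factor it as $x = vwy$ with $y \in \RR_F(w)^\omega$ as in the definition of the derived word, and note that $z_1 = T^{|vw|}(z_0) \in J_w$ (where $x = \Sigma_T(z_0)$) satisfies $\Sigma_T(z_1) = y$; hence $\D_f(x) = f^{-1}(y) = \Sigma_S(z_1)$. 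By Proposition~\ref{pro:derived} and the minimality of $S$ this gives $\D_f(F) = F(\D_f(x)) = F(\Sigma_S(z_1)) = F(S)$, which is a regular $s$-interval exchange set.

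The main obstacle I expect is the bijection of the second paragraph, specifically the injectivity, i.e. proving that $I_x \cap J_w$ is a single interval of $S$ for each first return word $x$. This is what forces the number of return words to equal exactly $s$ (rather than merely being at most $s$), and hence is what guarantees that the derived set is an $s$-interval exchange set and not one on fewer letters; the boundary bookkeeping between the left-closed semi-intervals used to define $S$ and the open intervals used in the definition of $\Div$ also needs a little care here.
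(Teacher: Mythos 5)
Your proposal is correct and follows essentially the same route as the paper: induce $T$ on the admissible semi-interval $J_w$ (Proposition~\ref{pro:jadm} plus Theorem~\ref{theo:birauzy1}), identify $\Sigma_T(z) = f(\Sigma_S(z))$ via Lemma~\ref{lem:returnsind}, and conclude with Proposition~\ref{pro:derived}. The bijection you work out between the exchanged intervals of $S$ and $\RR_F(w)$ is exactly what the paper states, without detail, as the assertion that the partition of $J_w$ corresponding to $S$ is the family $\left(J_{wx}\right)_{x \in \RR_F(w)}$, so your extra care there is a legitimate filling-in of the same argument rather than a different approach.
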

\begin{proof}
Let $T$ be a regular $s$-interval exchange transformation and let $F = F(T)$.

Let $w \in F$.
Since the semi-interval $J_w$ is admissible according to Proposition~\ref{pro:jadm}, the transformation $S$ induced by $T$ on $J_w$ is, by Theorem~\ref{theo:birauzy1}, an $s$-interval exchange transformation.
The corresponding partition of $J_w$ is the family $\left( J_{wx} \right)_{w \in \RR_F(w)}$.

Using Lemma~\ref{lem:returnsind} and the observation following, it is clear that $\Sigma_T(z) = f(\Sigma_S(z))$, where $z$ is a point of $J_w$ and $f : A^* \to \RR_F(w)^*$ is a coding morphism for $\RR_F(w)$.

Set $x = \Sigma_T(T^{-|w|}(z))$ and $y = \Sigma_T(z)$.
Then $x = wy$ and thus $\Sigma_S(z) = \D_f(x)$.
By Proposition~\ref{pro:derived}, this shows that the derived set of $F$ with respect to $f$ is $F(S)$.
\end{proof}

Theorem~\ref{theo:returns} implies, in particular, a result of~\cite{Vuillon2007}, i.e., that $\Card(\RR_F(w)) = \Card(A)$
(see also~\cite{BalkovaPelantovaSteiner2008} and~\cite{BertheDeFeliceLeroyPerrinReutenauerRindone2014}).

\section{Rauzy induction}
\label{sec:rauzy}

In this section we describe the transformation called Rauzy induction defined in~\cite{Rauzy1979} which operates on regular interval transformations and recall the results concerning this transformation (Theorems~\ref{theo:rauzy1} and \ref{theo:rauzy2}).
We introduce the definition of admissibility for an interval.
It generalizes in a natural way the notion of admissibility defined in~\cite{Rauzy1979}.
We also introduce a branching version of this transformation and generalize Rauzy's results to the two-sided case (Theorems~\ref{theo:birauzy1} and~\ref{theo:birauzy2}).
In particular we characterize in Theorem~\ref{theo:birauzy2} the admissible semi-intervals for an interval exchange transformation.

\subsection{One-sided Rauzy induction}
Let $T = T_{\lambda,\pi}$ be a regular $s$-interval exchange transformation on $[\ell, r[$.
Set $Z(T) = [\ell, \max\{\gamma_{s}, \delta_{\pi(s)}\}[$.

Note that $Z(T)$ is the largest semi-interval which is right-admissible for $T$.
We denote by $\psi(T)$ the transformation induced by $T$ on $Z(T)$.

The following result is Theorem 23 in~\cite{Rauzy1979}.

\begin{theorem}[Rauzy]
\label{theo:rauzy2}
Let $T$ be a regular interval exchange transformation.
A semi-interval $I$ is right admissible for $T$ if and only if there is an integer $n \geq 0$ such that $I = Z(\psi^n(T))$.
In this case, the transformation induced by $T$ on $I$ is $\psi^{n+1}(T)$.
\end{theorem}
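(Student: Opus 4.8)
The plan is to reduce everything to two transitivity properties of induced maps together with a one-step comparison of admissibility between $T$ and $\psi(T)$. First I would record the standard \emph{composition lemma} for first return maps: if $I' \subseteq I \subseteq [\ell,r[$ and $S$ is the map induced by $T$ on $I$, then the map induced by $S$ on $I'$ coincides with the map induced by $T$ on $I'$. Since $\psi^n(T)$ is by definition the map induced by $\psi^{n-1}(T)$ on $Z(\psi^{n-1}(T))$, an immediate induction using this lemma shows that $\psi^n(T)$ is the map induced by $T$ on $Z(\psi^{n-1}(T))$, and likewise that $\psi^{n+1}(T)$ is the map induced by $T$ on $Z(\psi^n(T))$. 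This already yields the final clause of the statement, once the interval $I=Z(\psi^n(T))$ has been identified.

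Next I would prove a transitivity property for right admissibility: if $I$ is right admissible for $T$ with induced map $S$, and $I' \subseteq I$ is right admissible for $S$, then $I'$ is right admissible for $T$. The mechanism is orbit embedding: by Theorem~\ref{theo:rauzy1} the separation points of $S$ are orbit points of the $\gamma_i$ under $T$, and the $S$-orbit of such a point is exactly the subsequence of its $T$-orbit lying inside $I$; the record conditions (i)--(ii) defining right admissibility for $S$ are then inherited from those for $T$, because the intermediate $S$-orbit points form a subset of the intermediate $T$-orbit points. Combined with the fact that $Z(U)$ is right admissible for every regular $U$ (the defining property of $Z$), an induction on $n$ shows that each $Z(\psi^n(T))$ is right admissible for $T$; this is the ``if'' direction.

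For the ``only if'' direction I would start from the recalled fact that $Z(T)$ is the largest right admissible semi-interval for $T$, so any right admissible $I=[\ell,t[$ with $t<r$ has $t$ at most the right endpoint of $Z(T)$, whence $I \subseteq Z(T)$. The core is then a \emph{push-down lemma} dual to the one above: if $t$ is right admissible for $T$ and lies strictly inside $Z(T)$, then $t$ is right admissible for $\psi(T)$, again by restricting to the subsequence of the relevant $T$-orbit lying in $Z(T)$ and checking that the records survive. Iterating, a right admissible $t<r$ either equals the endpoint of $Z(T)$ (giving $n=0$) or is right admissible for $\psi(T)$, then either equals the endpoint of $Z(\psi(T))$ ($n=1$) or descends further; since the endpoint $t_m$ of $Z(\psi^m(T))$ is the maximal right admissible point of $\psi^m(T)$, once $t$ fails to equal any $t_m$ it must satisfy $t < t_m$ for all $m$.

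The main obstacle is closing this descent, i.e. excluding a right admissible point lying strictly below every $t_m$. This is where regularity (hence minimality, via Theorem~\ref{theo:keane}) must be used: the nested intervals $Z(\psi^m(T)) = [\ell,t_m[$ shrink to $\{\ell\}$, so that $t_m \to \ell$ and no admissible $t>\ell$ can lie below all of them. I would justify $t_m \downarrow \ell$ from minimality, the first-return times to the shrinking induction intervals growing without bound (compare Lemma~\ref{lem:distance}), which forces the interval lengths to zero; the descent then terminates with $t=t_n$, that is $I=Z(\psi^n(T))$, for some $n$. The bookkeeping in the push-down and transitivity lemmas, in particular handling the separation point $\gamma_s$ when it coincides with the right endpoint of $Z(T)$ and so lies outside the induced domain, is routine but is the part that requires the most care.
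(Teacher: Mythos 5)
Your skeleton is the right one, and it is essentially how the paper handles this result: the paper does not reprove Rauzy's theorem directly but proves the two-sided generalization (Theorem~\ref{theo:birauzy2}), and your transitivity and push-down lemmas are exactly the one-sided specialization of Lemma~\ref{lem:bi22}, while the maximality of $Z(T)$ plays the role of Lemma~\ref{lem:init}. (A minor slip: the location of the separation points of the induced map is not part of the statement of Theorem~\ref{theo:rauzy1}; in this paper that fact is Theorem~\ref{theo:birauzy1}.) The genuine gap is in how you close the descent. You must rule out a right admissible $t$ with $t < t_m$ for every $m$, where $t_m$ denotes the right endpoint of $Z(\psi^m(T))$, and you propose to do this via $t_m \downarrow \ell$, justified by ``the first-return times to the shrinking induction intervals grow without bound, which forces the interval lengths to zero.'' This justification is circular. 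Lemma~\ref{lem:distance} together with minimality gives only one implication: short intervals force long return times. The premise that the return times to $Z(\psi^m(T))$ tend to infinity is equivalent to the conclusion you want, and it fails exactly in the scenario to be excluded: if $t_m \downarrow t^* > \ell$, then every $Z(\psi^m(T))$ contains $[\ell,t^*[$, so for $z \in [\ell,t^*[$ the first-return time of $z$ to $Z(\psi^m(T))$ is at most its first-return time to $[\ell,t^*[$, which is bounded uniformly in $m$ because $T$ is minimal; hence the return times stay bounded precisely when the lengths do not go to zero. The fact $|Z(\psi^m(T))| \to 0$ is true --- it is the all-$\psi$ case of Theorem~\ref{theo:length} --- but in the paper it is itself deduced from the finiteness machinery (Lemma~\ref{lem:finiteness}), i.e.\ from the very argument your proof is missing, so it cannot be imported as an easy dynamical fact.

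The repair is combinatorial, not metric, and you already have most of it. Strengthen your push-down/transitivity lemma to record the inclusion of division points: for semi-intervals $I \subset J$ one has $\rho^{\pm}_{I,T}(\gamma_i) \geq \rho^{\pm}_{J,T}(\gamma_i)$, hence $\Div(J,T) \subset \Div(I,T)$ (this is the closing step of the proof of Lemma~\ref{lem:bi22}). Now suppose $t$ is right admissible for $T$ with $t < t_m$ for all $m$, and set $I = [\ell,t[$. Each $Z(\psi^m(T))$ is admissible for $T$ (your ``if'' direction) and contains $I$, so its right endpoint $t_m$ lies in $\Div\bigl(Z(\psi^m(T)),T\bigr) \cup \{r\} \subset \Div(I,T) \cup \{r\}$, a finite set. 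Since $Z(U)$ is a proper subinterval of the domain of $U$ for every regular $U$, the sequence $(t_m)$ is strictly decreasing, which gives infinitely many distinct points in a finite set, a contradiction. This is precisely Lemma~\ref{lem:finiteness} specialized to one-sided sequences; with it, the rest of your argument goes through.
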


The map $T \mapsto \psi(T)$ is called the \emph{right Rauzy induction}.
There are actually two cases according to $\gamma_{s} < \delta_{\pi(s)}$ (Case 0) or $\gamma_{s} > \delta_{\pi(s)}$ (Case 1).
We cannot have $\gamma_{s} = \delta_{\pi(s)}$ since $T$ is regular.

In Case 0, we have $Z(T) = [\ell, \delta_{\pi(s)}[$ and for any $z\in Z(T)$,
$$
S(z) =
\begin{cases}
T^2(z)	&	\text{if $z \in I_{a_{\pi(s)}}$} \\ 
T(z)	&	\text{otherwise}.
\end{cases}
$$
The transformation $S$ is the interval exchange transformation relative to $(K_a)_{a \in A}$ with $K_a = I_a\cap Z(T)$ for all $a \in A$.
Note that $K_a = I_a$ for $a \neq a_s$.
The translation values $\beta_a$ are defined as follows, denoting $\alpha_i, \beta_i$ instead of $\alpha_{a_i}, \beta_{a_i}$,
$$
\beta_i =
\begin{cases}
\alpha_{\pi(s)} + \alpha_s	&	\text{if $i = \pi(s)$} \\
\alpha_i			&	\text{otherwise.}
\end{cases}
$$
In summary, in Case 0, the semi-interval $J_{a_\pi(s)}$ is suppressed, the semi-interval $J_{a_s}$ is split into $S(K_{a_s})$ and $S(K_{a_{\pi(s)}})$.
The left boundaries of the semi-intervals $K_a$ are the left boundaries of the semi-intervals $I_a$.
The transformation is represented  in Figure~\ref{fig:rauzyind0}, in which the left boundary of the semi-interval $S(K_{a_{\pi(s)}})$ is denoted $\delta'_{\pi(s)}$.

\begin{figure}[hbt]
\centering
\gasset{Nw=1,Nh=1,ExtNL=y,NLdist=1,AHnb=0,ELside=r}
\begin{picture}(100,35)
\put(0,24){
\begin{picture}(100,15)
\node(0h)(0,9){$\ell$}
\node[fillcolor=blue](gamma_{pi(s)})(20,9){$\gamma_{\pi(s)}$}
\node(mu_{pi(s)})(30,9){}
\node[fillcolor=red](gamma_s)(80,9){$\gamma_s$}\node(1h)(100,9){$r$}
\drawedge(0h,gamma_{pi(s)}){}
\drawedge[linecolor=blue,linewidth=1](gamma_{pi(s)},mu_{pi(s)}){$a_{\pi(s)}$}
\drawedge(mu_{pi(s)},gamma_s){}
\drawedge[linecolor=red,linewidth=1](gamma_s,1h){$a_s$}

\node(0b)(0,0){}
\node[fillcolor=red](delta_s)(40,0){$\delta_s$}
\node(nu_s)(60,0){}
\node[fillcolor=blue](delta_{pi(s)})(90,0){$\delta_{\pi(s)}$}
\node(1b)(100,0){}
\drawedge(0b,delta_s){}
\drawedge[linecolor=red,linewidth=1](delta_s,nu_s){$a_s$}
\drawedge(nu_s,delta_{pi(s)}){}
\drawedge[linecolor=blue,linewidth=1](delta_{pi(s)},1b){$a_{\pi(s)}$}
\end{picture}
}
\put(50,13){\huge $\downarrow$}\put(90,13){$\vdots$}
\put(0,0){
\begin{picture}(100,15)
\node(0h)(0,9){$\ell$}
\node[fillcolor=blue](gamma_{pi(s)})(20,9){$\gamma_{\pi(s)}$}
\node(mu_{pi(s)})(30,9){}
\node[fillcolor=red](gamma_s)(80,9){$\gamma_s$}
\node(1h)(90,9){}
\drawedge(0h,gamma_{pi(s)}){}
\drawedge[linecolor=blue,linewidth=1](gamma_{pi(s)},mu_{pi(s)}){$a_{\pi(s)}$}
\drawedge(mu_{pi(s)},gamma_s){}
\drawedge[linecolor=red,linewidth=1](gamma_s,1h){$a_s$}

\node(0b)(0,0){}
\node[fillcolor=red](delta_{s,S})(40,0){$\delta_{s}$}
\node[fillcolor=blue](delta_{pi(s),S})(50,0){$\delta'_{\pi(s)}$}
\node(nu_{pi(s),S})(60,0){}
\node(delta_{pi(s)})(90,0){}
\drawedge(0b,delta_{s,S}){}
\drawedge[linecolor=red,linewidth=1](delta_{s,S},delta_{pi(s),S}){$a_s$}
\drawedge[linecolor=blue,linewidth=1](delta_{pi(s),S},nu_{pi(s),S}){$a_{\pi(s)}$}
\drawedge(nu_{pi(s),S},delta_{pi(s)}){}
\end{picture}
}
\end{picture}
\caption{Case 0 in Rauzy induction.}
\label{fig:rauzyind0}
\end{figure}

In Case 1, we have $Z(T) = [\ell, \gamma_{s}[$ and for any $z \in Z(T)$,
$$
S(z) =
\begin{cases}
T^2(z)	&	\text{if $z \in T^{-1}(I_{a_s})$} \\
T(z)	&	\text{otherwise}.
\end{cases}
$$
The transformation $S$ is the interval exchange transformation relative to  $(K_a)_{a\in A}$ with
$$
K_a =
\begin{cases}
T^{-1}(I_{a})		&	\text{if $a = a_s$} \\
T^{-1}(T(I_a)\cap Z(T))	&	\text{otherwise.}
\end{cases}
$$
Note that $K_a = I_a$ for $a \neq a_s$ and $a \neq a_{\pi(s)}$.
Moreover $K_a = S^{-1}(T(I_a) \cap Z(T))$ in all cases.
The translation values $\beta_i$ are defined by
$$
\beta_i =
\begin{cases}
\alpha_{\pi(s)} + \alpha_s	&	\text{if $i = s$} \\
\alpha_i			&	\text{otherwise.}
\end{cases}
$$
In summary, in Case 1, the semi-interval $I_{a_s}$ is suppressed, the semi-interval $I_{a_{\pi(s)}}$ is split into $K_{a_{\pi(s)}}$ and $K_{a_s}$.
The left boundaries of the semi-intervals $S(K_a)$ are the left boundaries of the semi-intervals $J_a$.
The transformation is represented in Figure~\ref{fig:rauzyind1}, where the left boundary of the semi-interval $K_{a_s}$ is denoted $\gamma'_{s}$.

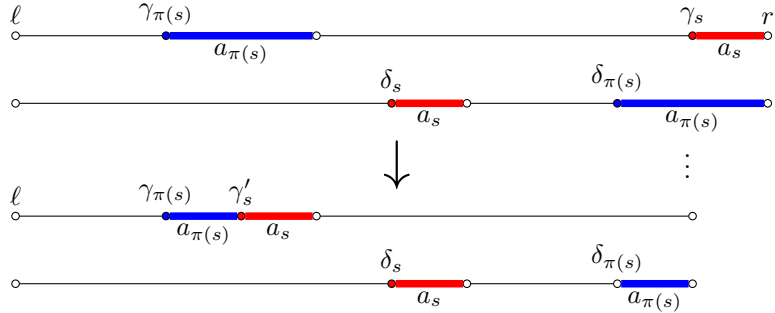
\begin{figure}[hbt]
\centering
\gasset{Nw=1,Nh=1,ExtNL=y,NLdist=1,AHnb=0,ELside=r}
\begin{picture}(100,35)
\put(0,24){
\begin{picture}(100,20)
\node(0h)(0,9){$\ell$}
\node[fillcolor=blue](gamma_{pi(s)})(20,9){$\gamma_{\pi(s)}$}
\node(mu_{pi(s)})(40,9){}
\node[fillcolor=red](gamma_s)(90,9){$\gamma_s$}\node(1h)(100,9){$r$}
\drawedge(0h,gamma_{pi(s)}){}
\drawedge[linecolor=blue,linewidth=1](gamma_{pi(s)},mu_{pi(s)}){$a_{\pi(s)}$}
\drawedge(mu_{pi(s)},gamma_s){}
\drawedge[linecolor=red,linewidth=1](gamma_s,1h){$a_s$}

\node(0b)(0,0){}
\node[fillcolor=red](delta_s)(50,0){$\delta_s$}
\node(nu_s)(60,0){}
\node[fillcolor=blue](delta_{pi(s)})(80,0){$\delta_{\pi(s)}$}
\node(1b)(100,0){}
\drawedge(0b,delta_s){}
\drawedge[linecolor=red,linewidth=1](delta_s,nu_s){$a_s$}
\drawedge(nu_s,delta_{pi(s)}){}
\drawedge[linecolor=blue,linewidth=1](delta_{pi(s)},1b){$a_{\pi(s)}$}
\end{picture}
}
\put(50,14){\huge $\downarrow$}\put(90,14){$\vdots$}
\put(0,0){
\begin{picture}(100,19)
\node(0h)(0,9){$\ell$}
\node[fillcolor=blue](gamma_{pi(s)})(20,9){$\gamma_{\pi(s)}$}
\node[fillcolor=red](gamma_{s,S})(30,9){$\gamma'_{s}$}
\node(mu_{pi(s)})(40,9){}
\node(1h)(90,9){}

\drawedge(0h,gamma_{pi(s)}){}
\drawedge[linecolor=blue,linewidth=1](gamma_{pi(s)},gamma_{s,S}){$a_{\pi(s)}$}
\drawedge[linecolor=red,linewidth=1](gamma_{s,S},mu_{pi(s)}){$a_s$}
\drawedge(mu_{pi(s)},gamma_s){}

\node(0b)(0,0){}
\node[fillcolor=red](delta_{s})(50,0){$\delta_{s}$}
\node(nu_{s})(60,0){}
\node(delta_{pi(s)})(80,0){$\delta_{\pi(s)}$}
\node(1b)(90,0){}

\drawedge(0b,delta_s){}
\drawedge[linecolor=red,linewidth=1](delta_{s},nu_{s}){$a_s$}
\drawedge(nu_s,delta_{pi(s)}){}
\drawedge[linecolor=blue,linewidth=1](delta_{pi(s)},1b){$a_{\pi(s)}$}
\end{picture}
}
\end{picture}
\caption{Case 1 in Rauzy induction.}
\label{fig:rauzyind1}
\end{figure}

\begin{example}
\label{ex:induced2}
Consider again the transformation $T$ of Example~\ref{ex:2alpha}.
Since $Z(T) = [0, 2\alpha[$, the transformation $\psi(T)$ is represented in Figure~\ref{fig:3ietind}.
The transformation $\psi^2(T)$ is represented in Figure~\ref{fig:3ietind2}.

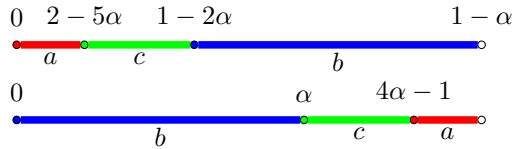
\begin{figure}[hbt]
\centering
\gasset{Nw=1,Nh=1,ExtNL=y,NLdist=2,AHnb=0,ELside=r}
\begin{picture}(70,20)
\node[fillcolor=red](0h)(0,10){$0$}
\node[fillcolor=green](2-5alpha)(9,10){$2-5\alpha$}
\node[fillcolor=blue](1-2alpha)(23.6,10){$1-2\alpha$}
\node(1-alphah)(61.8,10){$1-\alpha$}
\drawedge[linecolor=red,linewidth=1](0h,2-5alpha){$a$}
\drawedge[linecolor=green,linewidth=1](2-5alpha,1-2alpha){$c$}
\drawedge[linecolor=blue,linewidth=1](1-2alpha,1-alphah){$b$}

\node[fillcolor=blue](0b)(0,0){$0$}
\node[fillcolor=green](alpha)(38.2,0){$\alpha$}
\node[fillcolor=red](4alpha-1)(52.8,0){$4\alpha-1$}\node(1-alpha)(61.8,0){}
\drawedge[linecolor=blue,linewidth=1](0b,alpha){$b$}
\drawedge[linecolor=green,linewidth=1](alpha,4alpha-1){$c$}
\drawedge[linecolor=red,linewidth=1](4alpha-1,1-alpha){$a$}
\end{picture}
\caption{The transformation  $\psi^2(T)$.}
\label{fig:3ietind2}
\end{figure}
\end{example}

The symmetrical notion of \emph{left Rauzy induction} is defined similarly.

Let $T = T_{\lambda, \pi}$ be a regular $s$-interval exchange transformation on $[\ell, r[$.
Set $Y(T) = [\min\{\mu_1, \nu_{\pi(1)}\}, r[$.
We denote by $\varphi(T)$ the transformation induced by $T$ on $Y(T)$.
The map $T \mapsto \varphi(T)$ is called the \emph{left Rauzy induction}. 

Note that one has also $Y(T) = [\min\{\gamma_2, \delta_{\pi(2)}\}, r[$.

The symmetrical statements of Theorem~\ref{theo:rauzy2} also hold for left admissible intervals.

\subsection{Branching induction}
The following is a generalization of Theorem~\ref{theo:rauzy2}.

\begin{theorem}
\label{theo:birauzy2}
Let $T$ be a regular $s$-interval exchange transformation on $[\ell,r[$.
A semi-interval $I$ is admissible for $T$ if and only if there is a sequence $\chi \in \{\varphi, \psi\}^*$ such that $I$ is the domain of $\chi(T)$.
In this case, the transformation induced by $T$ on $I$ is $\chi(T)$.
\end{theorem}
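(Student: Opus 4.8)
The statement has two implications; the plan is to prove the routine one first, that a domain of some $\chi(T)$ is admissible with induced map $\chi(T)$, and then the harder converse that \emph{every} admissible interval arises this way. For the easy implication I would induct on the length of $\chi\in\{\varphi,\psi\}^*$, using two ingredients: first, that first-return maps compose, i.e. if $S$ is induced by $T$ on $I_1$ and $R$ is induced by $S$ on $I_2\subseteq I_1$ then $R$ is induced by $T$ on $I_2$ (immediate, since the $S$-orbit of a point of $I_2$ visits exactly the points of its $T$-orbit lying in $I_1$, and $I_2\subseteq I_1$); and second, the converse of Theorem~\ref{theo:birauzy1}, that if the map induced by $T$ on $I$ is again an $s$-interval exchange then $I$ is admissible. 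For $\chi=\varepsilon$ the domain is $[\ell,r[$, which is admissible. Writing $\chi=c\chi'$ with $c$ the leftmost letter, by induction $\chi'(T)$ is the regular $s$-interval exchange induced by $T$ on an admissible $I'$; its image $c(\chi'(T))$ is induced on $Z(\chi'(T))$ (resp. $Y(\chi'(T))$), which is right- (resp. left-)admissible and, by Theorem~\ref{theo:rauzy1} (resp. its left analogue), again a regular $s$-interval exchange. By composition of induced maps this is the map induced by $T$ on the domain $I$ of $c(\chi'(T))$, and by the converse of Theorem~\ref{theo:birauzy1} this $I$ is admissible. This settles the implication together with the identification of the induced transformation.

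For the converse I would argue by induction, reducing an admissible $I=[u,v[\,\neq[\ell,r[$ to a strictly larger admissible interval obtained by undoing a single Rauzy step. As induction parameter I would take the finite integer $m(I)=\Card(\Div(I,T))$: it is finite since each $N_{I,T}(\gamma_i)$ is finite by minimality, it is monotone in the sense that $I'\subseteq I$ implies $\Div(I,T)\subseteq\Div(I',T)$ (shrinking the open interval only lengthens the neighbor arcs), and $m([\ell,r[)=s$. Two cases are handled directly by the one-sided theory: if $u=\ell$ then $I$ is right-admissible, so Theorem~\ref{theo:rauzy2} gives $I=Z(\psi^n(T))$ and $I$ is the domain of $\psi^{n+1}(T)$; symmetrically, if $v=r$ then $I$ is left-admissible and is the domain of some $\varphi^{m}(T)$.

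The heart of the argument is the remaining case $\ell<u$ and $v<r$, where I would produce an admissible single-step parent. For the right enlargement I would let $v^+$ be the least right endpoint exceeding $v$ for which $[u,v^+[$ is admissible, write $S^+$ for the map induced by $T$ on $[u,v^+[$ (a regular $s$-interval exchange by Theorem~\ref{theo:birauzy1}), and verify that $[u,v[\,=Z(S^+)$, so that the passage $S^+\mapsto\psi(S^+)$ realizes exactly $[u,v^+[\,\to[u,v[$. This reduces, via the explicit Case~0 / Case~1 description of $\psi$ (Figures~\ref{fig:rauzyind0} and~\ref{fig:rauzyind1}), to checking that the new right boundary $\max\{\gamma^+_{s},\delta^+_{\pi(s)}\}$ of $Z(S^+)$ equals $v$, a fact to be read off from the neighbor arc of the $\gamma_i$ meeting $v$; the left enlargement is symmetric through $\varphi$. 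By monotonicity $m$ strictly decreases under such an enlargement, so the induction terminates at $[\ell,r[$ and composing the recorded steps yields the required $\chi$.

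The main obstacle is precisely this last case. It is tempting to decouple the endpoints, first pushing $u$ out to $\ell$ by left inductions and then pushing $v$ out by right inductions, but this fails in general: a right Rauzy step rewrites the whole partition and thus changes the left boundaries that govern later left steps, so one may \emph{not} assume $[u,r[$ is itself admissible. The real work is to prove the dichotomy that, whenever both endpoints are interior, at least one of the two minimal one-sided enlargements remains admissible and corresponds to a single Rauzy step. I expect this to demand a careful analysis of the neighbor arcs $N_{I,T}(\gamma_i)$ at $u$ and at $v$: one must show that the orbit excursion responsible for one of the two boundary points can be retracted by one application of $T^{\pm1}$ without crossing the other boundary, which is exactly the effect of an inverse Rauzy step. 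Regularity of $T$, ensuring distinctness of the relevant orbit points, is what keeps the two excursions from obstructing each other and guarantees that a legal side always exists.
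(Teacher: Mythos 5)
Your forward direction is essentially sound (composition of first-return maps plus the observation that domains of $\varphi$/$\psi$ steps are one-sidedly admissible), but note that it leans on the converse of Theorem~\ref{theo:birauzy1} in a form stronger than the paper actually states: the paper's version requires the additional hypothesis $\Sep(S)=\Div(I,T)\cap I$, where $\Div$ is computed with respect to the \emph{original} transformation $T$, and this converse is only attributed to~\cite{CornfeldFominSinai1982}, not proved. To apply it after an inductive step you would still have to relate division points of $\chi'(T)$ to division points of $T$, which is precisely the content of the paper's Lemma~\ref{lem:bi22}; the paper's own proof of the easy direction goes through that lemma and avoids the unproved converse altogether.

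The genuine gap is in the hard direction, and you have in fact flagged it yourself. Your induction on $m(I)=\Card(\Div(I,T))$ requires, when $\ell<u$ and $v<r$, the existence of an admissible interval $[u,v^+[$ with $v^+>v$ (or symmetrically $[u^-,v[$) such that one application of $\psi$ (resp.\ $\varphi$) to the induced map recovers exactly $[u,v[$. Two things are missing: first, it is not even clear that the set of admissible intervals of the form $[u,\cdot[$ properly containing $I$ is nonempty --- as you yourself observe, $[u,r[$ need not be admissible --- so ``the least such $v^+$'' may not be well defined; second, the dichotomy that at least one of the two one-sided enlargements works is exactly the heart of the theorem, and your proposal explicitly defers it (``I expect this to demand a careful analysis of the neighbor arcs''). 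The claimed strict decrease of $m$ under enlargement is also asserted rather than proved. The paper circumvents this bottom-up ``parent construction'' entirely by arguing top-down: Lemma~\ref{lem:bi22} shows that the boundaries of every domain $D(\chi(T))$ containing $I$ lie in the finite set $\Div(I,T)$, whence (Lemma~\ref{lem:finiteness}, a K\"onig-type argument on the binary tree of sequences $\chi$) the set $\F$ of such sequences is finite; taking $\chi\in\F$ with $\varphi\chi,\psi\chi\notin\F$, the easy containment dichotomy of Lemma~\ref{lem:init} (any proper admissible subinterval of the domain lies in $Y$ or in $Z$) forces $I=D(\chi(T))$. Thus the paper only ever needs the weak, easily proved form of your dichotomy (containment, not exact realization by one step), which is why its argument closes and yours, as written, does not.
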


We first prove the following lemmas, in which we assume that $T$ is a regular $s$-interval exchange transformation on $[\ell, r[$.
Recall that $Y(T), Z(T)$ are the domains of $\varphi(T), \psi(T)$ respectively.

\begin{lemma}
\label{lem:init}
If a semi-interval  $I$ strictly included in $[\ell, r[$ is admissible for $T$, then either $I \subset Y(T)$ or $I\subset Z(T)$.
\end{lemma}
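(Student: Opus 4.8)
The plan is to argue by contradiction. Write $Z(T)=[\ell,z_0[$ and $Y(T)=[y_0,r[$ with $z_0=\max\{\gamma_s,\delta_{\pi(s)}\}$ and $y_0=\min\{\gamma_2,\delta_{\pi(2)}\}$, so that $I=[u,v[\ \subset Z(T)$ is equivalent to $v\le z_0$, and $I\subset Y(T)$ is equivalent to $u\ge y_0$. If the conclusion failed I would have $u<y_0$ and $v>z_0$ at the same time. The first thing to record is that then $]u,v[$ swallows every inner separation point: for $2\le i\le s$ one has $y_0\le\gamma_2\le\gamma_i\le\gamma_s\le z_0$ and $y_0\le\delta_{\pi(2)}\le\delta_{\pi(i)}\le\delta_{\pi(s)}\le z_0$, so all of $\gamma_2,\dots,\gamma_s$ and $\delta_{\pi(2)},\dots,\delta_{\pi(s)}$ lie in $[y_0,z_0]\subset\ ]u,v[$.

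The heart of the argument is then to show that under these hypotheses $\Div(I,T)\subseteq\{\ell\}\cup\ ]u,v[$, i.e. the excursions that define the neighbors are extremely short. The key identity is $T(\gamma_i)=\delta_i$ for every $i$. Each left boundary $\delta_i$ is either $\ell$ (exactly when $a_i$ is the $<_2$-minimal letter, i.e. $i=\pi(1)$) or one of $\delta_{\pi(2)},\dots,\delta_{\pi(s)}$, hence lies in $\{\ell\}\cup\ ]u,v[$. Since $T$ is regular it is minimal, so $\pi$ is indecomposable and in particular $\pi(1)\ne1$; this gives $\delta_1\in\ ]u,v[$ and, reading the identity backwards, $T^{-1}(\ell)=\gamma_{\pi(1)}\in\ ]u,v[$. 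With these facts I would write down each excursion explicitly: for $2\le i\le s$ with $i\ne\pi(1)$ one has $\gamma_i\in\ ]u,v[$ and $T(\gamma_i)=\delta_i\in\ ]u,v[$, so $N_{I,T}(\gamma_i)=\{\gamma_i\}$; for $i=\pi(1)$ one has $T(\gamma_i)=\ell$ and $T^2(\gamma_i)=\delta_1\in\ ]u,v[$, so $N_{I,T}(\gamma_i)=\{\gamma_i,\ell\}$; and for $i=1$ the point $\gamma_1=\ell$ satisfies $T(\gamma_1)=\delta_1\in\ ]u,v[$ and $T^{-1}(\gamma_1)=\gamma_{\pi(1)}\in\ ]u,v[$, so $N_{I,T}(\gamma_1)=\{\gamma_{\pi(1)},\ell\}$. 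In every case the neighbors lie in $\{\ell\}\cup\ ]u,v[$, which proves the claim.

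It remains to conclude. Since $u<v\le r$ we have $u\ne r$, so admissibility forces $u\in\Div(I,T)$; but $u\notin\ ]u,v[$ and the only division point outside $]u,v[$ is $\ell$, whence $u=\ell$. Likewise $v\notin\ ]u,v[$ and $v>z_0\ge\ell$ gives $v\ne\ell$, so $v$ is not a division point, and admissibility then forces $v=r$. Thus $I=[\ell,r[$, contradicting that $I$ is strictly included in $[\ell,r[$. I expect the main obstacle to be the middle step: one must check carefully that forward images of the separation points are exactly the $\delta_i$ and that these are trapped in $\{\ell\}\cup\ ]u,v[$, and one must treat the exceptional point $\ell$ (which is simultaneously $\gamma_1$ and $\delta_{\pi(1)}$) correctly. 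This is precisely where regularity is used, through $\pi(1)\ne1$, to guarantee that the excursions never escape to the right of $v$ and touch the left exterior only at $\ell$.
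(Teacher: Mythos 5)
Your proof is correct, and it rests on the same mechanism as the paper's: assume both $I\subset Y(T)$ and $I\subset Z(T)$ fail, observe that then all the points $\gamma_2,\dots,\gamma_s$ and $\delta_{\pi(2)},\dots,\delta_{\pi(s)}$ are trapped in $]u,v[$, and derive a contradiction with the admissibility of the endpoints of $I$. The difference is in the execution, and your version is actually the more careful one. The paper argues asymmetrically (case $v<r$, working only with the endpoint $v$, then a symmetric argument for $u$) and dispatches the two cases $k>0$ and $k\le 0$ with the one-line claims that $T(\gamma_i)\in\,]u,v[$ and $\gamma_i\in\,]u,v[$ respectively; as literally stated these claims fail for the exceptional indices $i=\pi(1)$ (where $T(\gamma_i)=\ell$) and $i=1$ (where $\gamma_i=\ell$), i.e.\ exactly when the orbit of the separation point passes through $\ell$, and the paper's proof does not address those cases. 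You instead compute the full division set, proving $\Div(I,T)\subseteq\{\ell\}\cup\,]u,v[$ by the explicit case analysis $N_{I,T}(\gamma_i)=\{\gamma_i\}$ for generic $i$ and $N_{I,T}(\gamma_{\pi(1)})=N_{I,T}(\gamma_1)=\{\gamma_{\pi(1)},\ell\}$, and this is precisely where you bring in regularity via indecomposability to get $\pi(1)\neq 1$, hence $\delta_1,\gamma_{\pi(1)}\in\,]u,v[$ --- a fact the paper's proof never invokes. The conclusion ($u\in\Div(I,T)$ forces $u=\ell$, and $v\notin\{\ell\}\cup\,]u,v[$ forces $v=r$, so $I=[\ell,r[$, a contradiction) is then symmetric and immediate. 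In short: same strategy, but your proof fills in the edge cases that the paper's argument glosses over, at the cost of a slightly longer case analysis; the paper's shorter argument becomes fully rigorous only after the reduction of the exceptional orbit of $\ell$ that you carry out explicitly.
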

\begin{proof}
Set $I = [u,v[$.
Since $I$ is strictly included in $[\ell, r[$, we have either $\ell < u$ or $v < r$.
Set $Y(T) = [y, r[$ and $Z(T) = [\ell, z[$.

Assume that $v < r$.
If $y \leq u$, then $I \subset Y(T)$.
Otherwise, let us show that $v \leq z$.
Assume the contrary.
Since $I$ is admissible, we have $v = T^k(\gamma_i)$ with $k \in E_{I,T}(\gamma_i)$ for some $i$ with $1 \leq i \leq s$.
But $k > 0$ is impossible since $u < T(\gamma_i) < v$ implies $T(\gamma_i) \in\ ]u,v[$, in contradiction with the fact that $k < \rho_I^+(\gamma_i)$.
Similarly, $k \leq 0$ is impossible since $u < \gamma_i < v$ implies $\gamma_i \in\ ]u,v[$.
Thus $I\subset Z(T)$.

The proof in the case $\ell < u$ is symmetric.
\end{proof}

The next lemma is  the two-sided version of Lemma 22 in~\cite{Rauzy1979}.

\begin{lemma}
\label{lem:bi22}
Let $T$ be a regular $s$-interval exchange transformation on $[\ell, r[$.
Let $J$ be an admissible semi-interval for $T$ and let $S$ be the transformation induced by $T$ on $J$.
A semi-interval $I \subset J$ is admissible for $T$ if and only if it is admissible for $S$.
Moreover $\Div(J,T) \subset \Div(I,T)$.
\end{lemma}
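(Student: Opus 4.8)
The plan is to reduce the statement to two facts: a correspondence between $T$-neighbours and $S$-neighbours inside $J$, and a comparison of the block structures cut out in an orbit by $I$ and by $J$. Write $J=[u,v[$ and $I=[u',v'[$, so that $u\le u'<v'\le v\le r$, and recall that, $T$ being regular hence minimal, every orbit is infinite and meets $]u,v[$ infinitely often in both directions. The first step I would establish is that \emph{for every $z\in J$ one has $N_{I,S}(z)=N_{I,T}(z)\cap J$.} Since $S$ is the first return map of $T$ to $J$, the $S$-orbit of $z$ is precisely the subsequence of $J$-visits of the $T$-orbit of $z$, say $S^n(z)=T^{t_n}(z)$ with $\cdots<t_{-1}<t_0=0<t_1<\cdots$. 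As $]u',v'[\ \subseteq\ ]u,v[$, a point $T^m(z)$ can meet $]u',v'[$ only when $m$ equals some $t_n$; hence the first forward (resp.\ backward) $T$-visit to $]u',v'[$ occurs exactly at the first forward (resp.\ backward) $S$-visit, and translating the half-open ranges defining $E_{I,T}(z)$ and $E_{I,S}(z)$ through $m=t_n$ yields the equality.

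Next I would compare, inside a single orbit, the partitions into blocks determined by $I$ and by $J$. Because $]u',v'[\ \subseteq\ ]u,v[$, every position at which the orbit meets $]u',v'[$ is also a position at which it meets $]u,v[$, so the interior-$I$ hits form a subset of the interior-$J$ hits. Consequently the $J$-block $N_{J,T}(w)$ is contained in the $I$-block $N_{I,T}(w)$ for every $w$; in particular $N_{J,T}(w)\subseteq N_{I,T}(w)$. Taking the union over $w=\gamma_1,\dots,\gamma_s$ gives at once the ``moreover'' assertion $\Div(J,T)\subseteq\Div(I,T)$.

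I would then prove $\Div(I,S)=\Div(I,T)\cap J$, using that $\Sep(S)=\Div(J,T)\cap J$ by Theorem~\ref{theo:birauzy1}. For the inclusion $\subseteq$, any $p\in\Div(I,S)$ lies in some $N_{I,S}(\sigma)=N_{I,T}(\sigma)\cap J$ with $\sigma\in\Sep(S)\subseteq\Div(J,T)$; writing $\sigma\in N_{J,T}(\gamma_i)$, the points $\sigma$ and $\gamma_i$ share a $J$-block, hence an $I$-block, so $N_{I,T}(\sigma)=N_{I,T}(\gamma_i)$ and $p\in\Div(I,T)\cap J$. For $\supseteq$, given $p\in N_{I,T}(\gamma_i)\cap J$, the left endpoint $w_i=T^{-\rho^-_{J,T}(\gamma_i)}(\gamma_i)$ of the $J$-block of $\gamma_i$ lies in $]u,v[$ and in $N_{J,T}(\gamma_i)\subseteq\Div(J,T)$, so $w_i\in\Sep(S)$; since $w_i$ lies in the same $I$-block as $\gamma_i$, we get $p\in N_{I,T}(w_i)\cap J=N_{I,S}(w_i)\subseteq\Div(I,S)$. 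This uniform argument avoids treating separately the index $i=1$, where $\gamma_1=\ell$ and $x_1=u$.

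Finally I would read off the equivalence of admissibility from the two endpoints of $I$. For $u'$, and for $v'$ when $v'<v$, the endpoint lies in $J$ and equals neither $r$ nor $v$, so by the displayed equality it lies in $\Div(I,T)$ if and only if it lies in $\Div(I,S)=\Div(I,T)\cap J$. When $v'=v$, admissibility for $S$ at this endpoint is automatic, and admissibility for $T$ holds because $v\in\Div(J,T)\cup\{r\}\subseteq\Div(I,T)\cup\{r\}$ by the ``moreover'' part. Hence $I$ is admissible for $T$ if and only if it is admissible for $S$. I expect the main obstacle to be the first step: matching the half-open conventions in $E_{I,T}$ and $E_{I,S}$ exactly, and checking that the arcs of $T$-orbit strictly between consecutive $J$-visits never meet $]u',v'[$.
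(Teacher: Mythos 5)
Your proof is correct, and it reorganizes the argument in a genuinely different way from the paper. The paper's proof is endpoint-by-endpoint index chasing: it writes $u=T^g(\gamma_i)$ with $g\in E_{I,T}(\gamma_i)$, takes the separation point $x=T^{m}(\gamma_i)$ of $S$ with $m=-\rho^-_{J,T}(\gamma_i)$, and shows via a sign case split ($g-m>0$ versus $g-m\leq 0$) and contradiction arguments on minimal return times that $u=S^n(x)$ with $n\in E_{I,S}(x)$; the converse is handled symmetrically, and the ``moreover'' part follows from the monotonicity $\rho^{\pm}_{I,T}(\gamma_i)\geq\rho^{\pm}_{J,T}(\gamma_i)$, which is exactly your second step. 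You instead factor the lemma through two set identities that the paper never states: the return-time correspondence $N_{I,S}(z)=N_{I,T}(z)\cap J$ for $z\in J$, and the equality $\Div(I,S)=\Div(I,T)\cap J$, the latter obtained from $\Sep(S)=\Div(J,T)\cap J$ (Theorem~\ref{theo:birauzy1}) together with the observation that the sets $N_{I,T}(\cdot)$ partition each orbit into blocks, so that $\sigma\in N_{I,T}(\gamma_i)$ forces $N_{I,T}(\sigma)=N_{I,T}(\gamma_i)$. Both proofs consume the same inputs (Theorem~\ref{theo:birauzy1} and the identification of the $S$-orbit of a point of $J$ with the $J$-visits of its $T$-orbit), but your factorization eliminates the sign case analysis, yields a stronger and reusable intermediate statement, and---notably---treats correctly a boundary case the paper dismisses with ``the proof is similar'': when the right endpoint of $I$ coincides with the right endpoint $w$ of $J$ and $w<r$, admissibility of $I$ for $S$ gives no information at that endpoint (it is the right end of $D(S)$), so the paper's argument cannot even be started there; one must instead invoke the admissibility of $J$ together with $\Div(J,T)\subset\Div(I,T)$, which is precisely what your last step does.
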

\begin{proof}
Set $J = [t,w[$ and $I = [u,v[$.
Since $J$ is admissible for $T$, the transformation $S$ is a regular $s$-interval exchange transformation by Theorem~\ref{theo:birauzy1}.

Suppose first that $I$ is admissible for $T$.
Then $u = T^g(\gamma_i)$ with $g \in E_{I,T}(\gamma_{i})$ for some $1 \leq i \leq s$, and $v = T^{d}(\gamma_{j})$ with $d \in E_{I,T}(\gamma_{j})$ for some $1 \leq j \leq s$ or $v = r$.

Since $S$ is the transformation induced by $T$ on $J$ there is a separation point $x$ of $S$ of the form $x = T^m(\gamma_{i})$ with $m = -\rho^-_{J,T}(\gamma_i)$ 
and thus $m \in E_{J,T}(\gamma_i)$.
Thus $u = T^{g-m}(x)$.

Assume first that $g-m > 0$.
Since $u, x \in J$, there is an integer $n$ with $0 < n \leq g-m$ such that $u = S^n(x)$.

Let us show that $n \in E_{I,S}(x)$.
Assume by contradiction that $\rho_ {I,S}^+(x) \leq n$.
Then there is some $k$ with $0 < k \leq n$ such that $S^k(x) \in ]u, v[$.
But we cannot have $k = n$ since $u \notin\ ]u,v[$.
Thus $k < n$.

Next, there is $h$ with $0 < h< g-m$ such that $T^h(x) = S^k(x)$.
Indeed, setting $y = S^k(x)$, we have $u = T^{g-m-h}(y) = S^{n-k}(y)$ and thus $h < g-m$.
If $0 < h \leq -m$, then $T^h(x) = T^{m+h}(\gamma_i) \in I\subset J$ contradicting the hypothesis that $m \in E_{J,T}(\gamma_i)$.
If $-m < h < g-m$, then $T^h(x) = T^{m+h}(\gamma_i) \in I$, contradicting the fact that $g \in E_{I,T}(\gamma_i)$.
This shows that $n \in E_{I,S}(x)$ and thus that $u \in \Div(I,S)$.

Assume next that $g-m \leq 0$.
There is an integer $n$ with $g-m \leq n \leq 0$ such that $u = S^n(x)$.
Let us show that $n \in E_{I,S}(x)$.
Assume by contradiction that $n < -\rho^-_{I,S}(x)$.
Then there is some $k$ with $n < k < 0$ such that $S^k(x) = T^h(x)$.
Then $T^h(x) = T^{h+m}(\gamma_i) \in I$ with $g < h+m < m$, in contradiction with the hypothesis that $m \in E_{I,T}(\gamma_i)$.

We have proved that $u \in \Div(I,S)$.
If $v = r$, the proof that $I$ is admissible for $S$ is complete.
Otherwise, the proof that $v \in \Div(I,S)$ is similar to the proof for $u$.

Conversely, if  $I$ is admissible for $S$, there is some $x \in \Sep(S)$ and $g \in E_{I,S}(x)$ such that $u = S^g(x)$.
But $x = T^m(\gamma_i)$ and since $u, x\in J$ there is some $n$ such that $u = T^n(\gamma_i)$.

Assume for instance that $n > 0$ and suppose that there exists $k$ with $0 < k < n$ such that $T^k(\gamma_i) \in ]u, v[$.
Then, since $I \subset J$, $T^k(\gamma_i)$ is of the form $S^h(x)$ with $0 < h < g$ which contradicts the fact that $g \in E_{I,S}(x)$.
Thus $n \in E_{I,T}(\gamma_i)$ and $u \in \Div(I,T)$.

The proof is similar in the case $n \leq 0$.

If $v = r$, we have proved that $I$ is admissible for $T$.
Otherwise, the proof that $v \in \Div(I,T)$ is similar.

Finally, assume that $I$ is admissible for $T$ (and thus for $S$).
For any $\gamma_i \in \Sep(T)$, one has
$$
\rho^-_{I,T}(\gamma_i) \geq \rho^-_{J,T}(\gamma_i)
\quad \text{ and } \quad
\rho^+_{I,T}(\gamma_i) \geq \rho^+_{J,T}(\gamma_i)
$$
showing that $\Div(J,T) \subset \Div(I,T)$.
\end{proof}

The last lemma is the key argument to prove Theorem~\ref{theo:birauzy2}.
It is a tree version of the argument used by Rauzy in~\cite{Rauzy1979}.

\begin{lemma}
\label{lem:finiteness}
For any admissible interval $I \subset[\ell, r[$, the set $\F$ of sequences $\chi \in \{\varphi, \psi\}^*$ such that $I \subset D(\chi(T))$ is finite.
\end{lemma}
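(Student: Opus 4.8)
The plan is to bound the length of any $\chi\in\F$ by a constant depending only on $I$, from which finiteness is immediate. The engine is Lemma~\ref{lem:bi22}, which will confine the domains of all the transformations $\chi(T)$ to a single fixed finite set.

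First I would check, by induction on $|\chi|$, that for every $\chi\in\{\varphi,\psi\}^*$ the semi-interval $D(\chi(T))$ is admissible for $T$ and that $\chi(T)$ is exactly the transformation induced by $T$ on $D(\chi(T))$. The base case $\chi=\varepsilon$ is trivial, since $D(T)=[\ell,r[$ is admissible. For the inductive step, write $\chi=\chi'c$ with $c\in\{\varphi,\psi\}$; by the induction hypothesis $S:=\chi'(T)$ is the map induced by $T$ on the admissible semi-interval $J:=D(\chi'(T))$ and is, by Theorem~\ref{theo:birauzy1}, a regular $s$-interval exchange transformation. Now $c(S)$ is the map induced by $S$ on $Y(S)$ or $Z(S)$, a left- or right-admissible, hence admissible, subinterval $I'\subset J$ of $S$. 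By Lemma~\ref{lem:bi22}, $I'$ is then admissible for $T$; and since the first return map of $T$ to $I'$ factors through its first return map to $J$, it coincides with the first return map of $S$ to $I'$, namely $c(S)=\chi(T)$. This gives that $D(\chi(T))=I'$ is admissible for $T$ with induced map $\chi(T)$.

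Next, fix $\chi\in\F$, so that $I\subset J:=D(\chi(T))$. Applying the last assertion of Lemma~\ref{lem:bi22} to the pair $I\subset J$ yields $\Div(J,T)\subset\Div(I,T)$. Since $J$ is admissible, its two endpoints lie in $\Div(J,T)\cup\{r\}\subset\Div(I,T)\cup\{r\}$, a set that is finite and depends only on $I$. Hence, as $\chi$ ranges over $\F$, the domain $D(\chi(T))$ ranges within the finite family $\mathcal{E}$ of semi-intervals whose endpoints belong to $\Div(I,T)\cup\{r\}$; set $m=\Card(\mathcal{E})$.

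Finally I would exploit that each application of $\varphi$ or $\psi$ strictly shrinks the domain: $Y(S)$ has a strictly larger left endpoint than $D(S)$, and $Z(S)$ a strictly smaller right endpoint. Consequently, along the $|\chi|+1$ prefixes of a word $\chi\in\F$ (note that $\F$ is prefix-closed, since shrinking the domain only enlarges the containing interval) the associated domains form a strictly decreasing, hence pairwise distinct, chain of members of $\mathcal{E}$. Therefore $|\chi|+1\le m$ for every $\chi\in\F$, so $\F$ consists of words of length at most $m-1$ over a two-letter alphabet and is finite. I expect the main obstacle to be the bookkeeping of the second paragraph: one must verify that iterating $\varphi$ and $\psi$ genuinely realizes $\chi(T)$ as the induction of $T$ on one admissible semi-interval, for it is precisely this that lets Lemma~\ref{lem:bi22} apply and cap all the domains inside the fixed finite set $\Div(I,T)\cup\{r\}$.
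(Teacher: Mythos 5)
Your proposal is correct and follows essentially the same route as the paper's: both rest on Lemma~\ref{lem:bi22}, which forces the endpoints of every domain $D(\chi(T))$ containing $I$ into the finite set $\Div(I,T)\cup\{r\}$, whence only finitely many domains can occur. The differences are presentational: you make explicit the induction showing that each $D(\chi(T))$ is admissible with induced map $\chi(T)$ (which the paper establishes separately, in the first paragraph of the proof of Theorem~\ref{theo:birauzy2}), and you conclude with a direct length bound obtained from the strict shrinking of domains, where the paper instead observes that $\F$ is suffix-closed and contains no infinite word.
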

\begin{proof}
The set $\F$ is suffix-closed.
Indeed it contains the empty word because $[\ell, r[$ is admissible.
Moreover, for any $\xi, \chi \in \{\varphi,\psi\}^*$, one has $D(\xi \chi(T)) \subset D(\chi(T))$ and thus $\xi \chi \in \F$ implies $\chi \in \F$. 

The set $\F$ is finite.
Indeed, by Lemma~\ref{lem:bi22}, applied to $J = D(\chi(T))$, for any $\chi \in \F$, one has $\Div(D(\chi(T)),T) \subset \Div(I,T)$.
In particular,the boundaries of $D(\chi(T))$ belong to $\Div(I,T)$.
Since $\Div(I,T)$ is a finite set, this implies that there is a finite number of possible semi-intervals $D(\chi(T))$.
Thus there is is no infinite word with all its suffixes in $\F$.
Since the sequences $\chi$ are binary, this implies that $\F$ is finite.
\end{proof}

\begin{proofof}{ of Theorem~\ref{theo:birauzy2}}
We first prove by induction on the length of $\chi$ that the domain $I$ of $\chi(T)$ is admissible and that the transformation induced by $T$ on $I$ is $\chi(T)$.
It is true for $|\chi| = 0$ since $[\ell, r[$ is admissible and $\chi(T) = T$.
Next, assume that $J = D(\chi(T))$ is admissible and that the transformation induced by $T$ on $J$ is $\chi(T)$.
Then $D(\varphi \chi(T))$ is admissible for $\chi(T)$ since $D(\varphi \chi(T)) = Y(\chi(T))$.
Thus $I = D(\varphi\chi(T))$ is admissible for $T$ by Lemma~\ref{lem:bi22} and the transformation induced by $T$ on $I$ is $\varphi \chi(T)$.
The same proof holds for $\psi \chi$.

Conversely, assume that $I$ is admissible.
By Lemma~\ref{lem:finiteness}, the set $\F$ of sequences $\chi \in \{\varphi, \psi\}^*$ such that $I\subset D(\chi(T))$ is finite. 

Thus there is some $\chi \in \F$ such that $\varphi \chi, \psi \chi \notin \F$.
If $I$ is strictly included in $D(\chi(T))$, then by Lemma~\ref{lem:init} applied to $\chi(T)$, we have $I\subset Y(\chi(T)) = D(\varphi\chi(T))$ or $I\subset Z(\chi(T))=D(\psi\chi(T))$, a contradiction.
Thus $I = D(\chi(T))$.
\end{proofof}

We close this subsection with a result concerning the dynamics of the branching induction.

\begin{theorem}
\label{theo:length}
For any sequence $(T_n)_{n \geq 0}$ of regular interval exchange transformations such that $T_{n+1} = \varphi(T_n)$ or $T_{n+1} = \psi(T_n)$ for all $n \geq 0$, the length of the domain of $T_n$ tends to $0$ when $n \rightarrow \infty$.
\end{theorem}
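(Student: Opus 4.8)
The plan is to argue by contradiction, exploiting the fact that the left and right endpoints of the successive domains are division points of $T=T_0$ whose description involves only boundedly many iterates, so that they range over a finite set.

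First I would record the easy monotonicity. Since $Y(T')\subset D(T')$ and $Z(T')\subset D(T')$ strictly for every regular $T'$, the domains $D(T_n)$ form a strictly decreasing nested sequence, so the lengths $|D(T_n)|$ decrease to a limit $L\ge 0$. Writing $D(T_n)=[\ell_n,r_n[$, the sequence $\ell_n$ is nondecreasing and $r_n$ is nonincreasing, say $\ell_n\uparrow\ell_\infty$ and $r_n\downarrow r_\infty$ with $r_\infty-\ell_\infty=L$; moreover $\ell_n$ increases strictly exactly at the steps where $T_{n+1}=\varphi(T_n)$, and $r_n$ decreases strictly exactly at the steps where $T_{n+1}=\psi(T_n)$. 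Suppose, towards a contradiction, that $L>0$, so that $\,]\ell_\infty,r_\infty[\,$ is a fixed nonempty open interval.

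The heart of the argument is to show that the endpoints $\ell_n,r_n$ take only finitely many values. By Theorem~\ref{theo:birauzy2} each $D(T_n)$ is admissible for $T$, so $\ell_n\in\Div(D(T_n),T)\cup\{r\}$; since $\ell_n<r$ this gives $\ell_n=T^k(\gamma_i)$ for some separation point $\gamma_i$ and some $k$ with $-\rho^-_{D(T_n),T}(\gamma_i)\le k<\rho^+_{D(T_n),T}(\gamma_i)$. Because $\ell_n\le\ell_\infty$ and $r_n\ge r_\infty$, we have $\,]\ell_\infty,r_\infty[\,\subset\,]\ell_n,r_n[\,$, and therefore $\rho^+_{D(T_n),T}(\gamma_i)$ and $\rho^-_{D(T_n),T}(\gamma_i)$ are bounded above, uniformly in $n$, by the first forward and backward entry times of the orbit of $\gamma_i$ into the fixed open interval $\,]\ell_\infty,r_\infty[\,$. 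As $T$ is regular it is minimal by Theorem~\ref{theo:keane}, and so is its inverse; since a nonempty open interval contains a nonempty semi-interval, these entry times are finite. Let $\tilde M$ be the maximum, over the finitely many $\gamma_i$, of both entry times. Then $|k|\le\tilde M$, so $\ell_n$ lies in the finite set $\Phi=\{T^k(\gamma_i)\mid 1\le i\le s,\ |k|\le\tilde M\}$; the same argument gives $r_n\in\Phi\cup\{r\}$.

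It then remains to conclude. If $T_{n+1}=\varphi(T_n)$ for infinitely many $n$, then $\ell_n$ is a strictly increasing sequence taking infinitely many values, contradicting $\ell_n\in\Phi$; hence $\varphi$ is used only finitely often, and symmetrically $\psi$ is used only finitely often. But every step uses one of $\varphi,\psi$, so only finitely many steps occur, which is absurd. Therefore $L=0$, i.e. $|D(T_n)|\to 0$. I expect the main obstacle to be precisely the uniform exponent bound $\tilde M$: it is exactly here that the hypothesis $L>0$ is used, through the observation that all the open intervals $\,]\ell_n,r_n[\,$ contain one fixed nonempty open interval, so that minimality of $T$ (and of $T^{-1}$) yields forward and backward return times into them bounded independently of $n$. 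Once the endpoints are confined to the finite set $\Phi$, the strict monotonicity forces the contradiction immediately.
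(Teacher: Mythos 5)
Your proof is correct, and it reaches the conclusion by a genuinely different route than the paper. The paper also argues by contradiction from a fixed open interval $I$ contained in every domain $D(T_n)$, but it then takes two consecutive points $u,v$ of the finite set $\Div(I,T)\cap I$, observes that $[u,v[\,\subset I$ is admissible, and applies Lemma~\ref{lem:finiteness}: the set of $\chi\in\{\varphi,\psi\}^*$ with $[u,v[\,\subset D(\chi(T))$ is finite, whereas the words $\chi_n$ with $T_n=\chi_n(T)$ are pairwise distinct. You never construct this auxiliary admissible subinterval and never invoke Lemma~\ref{lem:finiteness}; instead you use the easy direction of Theorem~\ref{theo:birauzy2} (each $D(T_n)$ is itself admissible for $T$), bound the exponent $k$ in $\ell_n=T^k(\gamma_i)$ uniformly by the entry times of the $\gamma_i$ into the fixed interval $]\ell_\infty,r_\infty[$, and close with the strict monotonicity of the endpoints under $\varphi$ and $\psi$. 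Your uniform bound on $\rho^{\pm}_{D(T_n),T}(\gamma_i)$ is in substance the same monotonicity of return times that powers Lemma~\ref{lem:bi22} (there expressed as $\Div(J,T)\subset\Div(I,T)$), except that your version does not require the common interval to be admissible, only fixed and open; this is what lets you bypass the consecutive-division-points construction and the suffix-closed finite-set argument. What the paper's route buys is brevity, since Lemmas~\ref{lem:bi22} and~\ref{lem:finiteness} are already in place from the proof of Theorem~\ref{theo:birauzy2}; what yours buys is a more self-contained and quantitative argument (the endpoints are confined to the explicit finite set $\Phi$), together with the clean dynamical observation that a $\varphi$-step strictly raises the left endpoint while a $\psi$-step strictly lowers the right one. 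One small caveat, which applies equally to the paper (via the implicit finiteness of $\Div(I,T)$): finiteness of your entry times uses density of forward and of backward half-orbits, slightly stronger than the paper's stated definition of minimality through two-sided orbits; this is standard for regular interval exchange transformations and consistent with the paper's own level of rigor.
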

\begin{proof}
Assume the contrary and let $I$ be an open interval included in the domain of $T_n$ for all $n \geq 0$.
The set $\Div(I,T) \cap I$ is formed of $s$ points.
For any pair $u, v$ of consecutive elements of this set, the semi-interval $[u, v[$ is admissible.
By Lemma~\ref{lem:finiteness}, there is an integer $n$ such that the domain of $T_n$ does not contain $[u, v[$, a contradiction.
\end{proof}

\subsection{Equivalence relation}
Let $[\ell_1, r_1[$, $[\ell_2, r_2[$ be two semi-intervals of the real line.
Let $T_1 = T_{\lambda_1, \pi_1}$ be an $s$-interval exchange transformation relative to a partition of $[\ell_1, r_1[$ and $T_2 = T_{\lambda_2, \pi_2}$ another $s$-interval exchange transformations relative to $[\ell_2, r_2[$.
We say that $T_1$ and $T_2$ are \emph{equivalent} if $\pi_1 = \pi_2$ and $\lambda_1 = c \lambda_2$ for some $c > 0$.
Thus, two interval exchange transformations are equivalent if we can obtain the second from the first by a rescaling following by a translation.
We denote by $\left[T_{\lambda, \pi}\right]$ the equivalence class of $T_{\lambda, \pi}$.

\begin{example}
\label{ex:simphi6T}
Let $S = T_{\mu,\pi}$ be the $3$-interval exchange transformation on a partition of the semi-interval $[2\alpha, 1[$, with $\alpha= (3-\sqrt{5})/2$, represented in Figure~\ref{fig:simphi6T}.
$S$ is equivalent to the transformation $T = T_{\lambda,\pi}$ of Example~\ref{ex:2alpha}, with length vector $\lambda = \left( 1-2\alpha, \alpha, \alpha \right)$ and permutation the cycle $\pi = (132)$.
Indeed the length vector $\mu = \left( 8\alpha-3, 2-5\alpha, 2-5\alpha \right)$ satisfies $\mu = \frac{2-5\alpha}{\alpha} \lambda$.

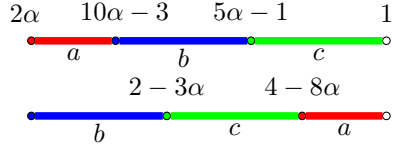
\begin{figure}[hbt]
\centering
\gasset{Nw=1,Nh=1,ExtNL=y,NLdist=2,AHnb=0,ELside=r}
\begin{picture}(47.2,20)
\node[fillcolor=red](2alphah)(0,10){$2\alpha \, \, \,$}
\node[fillcolor=blue](10alpha-3h)(11.2,10){$\, \, \, \, 10\alpha-3$}
\node[fillcolor=green](5alpha-1h)(29.2,10){$5\alpha-1$}
\node(1h)(47.2,10){$1$}
\drawedge[linecolor=red,linewidth=1](2alphah,10alpha-3h){$a$}
\drawedge[linecolor=blue,linewidth=1](10alpha-3h,5alpha-1h){$b$}
\drawedge[linecolor=green,linewidth=1](5alpha-1h,1h){$c$}

\node[fillcolor=blue](2alphab)(0,0){}
\node[fillcolor=green](2-3alphab)(18,0){$2-3\alpha$}
\node[fillcolor=red](4-8alphab)(36,0){$4-8\alpha$}
\node(1b)(47.2,0){}
\drawedge[linecolor=blue,linewidth=1](2alphab,2-3alphab){$b$}
\drawedge[linecolor=green,linewidth=1](2-3alphab,4-8alphab){$c$}
\drawedge[linecolor=red,linewidth=1](4-8alphab,1b){$a$}
\end{picture}
\caption{The transformation $S$.}
\label{fig:simphi6T}
\end{figure}
\end{example}

Note that if $T$ is a minimal (resp. regular) interval exchange transformation and $[S] = [T]$, then $S$ is also minimal (resp. regular).

For an interval exchange transformation $T$ we consider the directed labeled graph $\mathcal{G}(T)$, called the \emph{induction graph} of $T$, defined as follows.
The vertices are the equivalence classes of transformations obtained starting from $T$ and applying all possible $\chi \in \left\{ \psi, \varphi \right\}^*$.
There is an edge labeled $\psi$ (resp. $\varphi$) from a vertex $[S]$ to a vertex $[U]$ if and only if $U = \psi(S)$ (resp $\varphi(S)$) for two transformations $S \in [S]$ and $U \in [U]$.

\begin{example}
\label{ex:equivalencegraph}
Let $\alpha = \frac{3-\sqrt{5}}{2}$ and $R$ be a rotation of angle $\alpha$. By Example~\ref{ex:rotation}, $R$ is a $2$-interval exchange transformation on $[0,1[$ relative to the partition $[0, 1~-~\alpha[$, $[1-\alpha, 1[$.
The induction graph $\mathcal{G}(R)$ of the transformation is represented in the left of Figure~\ref{fig:graphrotation}.
\end{example}

Note that for a $2$-interval exchange transformation $T$, one has $[\psi(T)] = [\varphi(T)]$, whereas in general the two transformations are not equivalent.

The induction graph of an interval exchange transformation can be infinite.
A sufficient condition for the induction graph to be finite is given in Section~\ref{sec:quadratic}.
$$ $$
Let now introduce a variant of this equivalence relation (and of the related graph).
We consider the case of two transformation ``equivalent'' up to reflection (and up to the separation points).
In Section~\ref{subsec:nc} we will justify this choice in terms of natural coding of two specular points.

For an $s$-interval exchange transformation $T = T_{\lambda, \pi}$, with length vector $\lambda = \left( \lambda_1, \lambda_2, \ldots, \lambda_s \right)$, we define the \emph{mirror transformation} $\widetilde{T} = T_{\widetilde{\lambda}, \tau \circ \pi}$ of $T$, where $\widetilde{\lambda} = \left( \lambda_s, \lambda_{s-1}, \ldots, \lambda_1 \right)$ and $\tau : i \mapsto (s-1+1)$ is the permutation that reverses the names of the semi-intervals.

Given two interval exchange transformations $T_1$ and $T_2$ on the same alphabet relative to two partitions of two semi-intervals $[\ell_1, r_1[$ and $[\ell_2, r_2[$ respectively, we say that $T_1$ and $T_2$ are \emph{similar} either if $[T_1] = [T_2]$ or $[T_1] = [\widetilde{T_2}]$.
Clearly, similarity is also an equivalent relation.
We denote by $\langle T \rangle$ the class of transformations similar to $T$.

\begin{example}
\label{ex:phi6T}
Let $T$ be the interval exchange transformation of Example~\ref{ex:2alpha}.
The transformation $U = \varphi^6(T)$ is represented in Figure~\ref{fig:phi6T} (see also Example~\ref{ex:u}).
It is easy to verify that $U$ is similar to the transformation $S$ of Example~\ref{ex:simphi6T}.
Indeed, we can obtain the second transformation (up to the separation points and the end points) by taking the mirror image of the domain.

Note that the order of the labels, i.e. the order of the letters of the alphabet, may be different from the order of the original transformation.

\begin{figure}[hbt]
\centering
\gasset{Nw=1,Nh=1,ExtNL=y,NLdist=2,AHnb=0,ELside=r}
\begin{picture}(47.2,20)
\node[fillcolor=blue](2alphah)(0,10){$2\alpha$}
\node[fillcolor=red](2-3alphah)(18,10){$2-3\alpha$}
\node[fillcolor=green](4-8alphah)(36,10){$4-8\alpha$}
\node(1h)(47.2,10){$1$}
\drawedge[linecolor=blue,linewidth=1](2alphah,2-3alphah){$b$}
\drawedge[linecolor=red,linewidth=1](2-3alphah,4-8alphah){$a$}
\drawedge[linecolor=green,linewidth=1](4-8alphah,1h){$c$}

\node[fillcolor=green](2alphab)(0,0){}
\node[fillcolor=blue](10alpha-3b)(11.2,0){$10\alpha-3$}
\node[fillcolor=red](5alpha-1b)(29.2,0){$5\alpha-1$}
\node(1b)(47.2,0){}
\drawedge[linecolor=green,linewidth=1](2alphab,10alpha-3b){$c$}
\drawedge[linecolor=blue,linewidth=1](10alpha-3b,5alpha-1b){$b$}
\drawedge[linecolor=red,linewidth=1](5alpha-1b,1b){$a$}
\end{picture}
\caption{The transformation $U$.}
\label{fig:phi6T}
\end{figure}
\end{example}

As of the equivalence relation, also similarity preserves minimality and regularity.

Let $T$ be an interval exchange transformation.
We denote by
$$\mathcal{S}(T) = \bigcup_{n \in \Z} T^n \big(\Sep(T)\big)$$
the union of the orbits of the separation points.
Let $S$ be an interval exchange transformation similar to $T$.
Thus, there exists a bijection $f : D(T) \setminus \mathcal{S}(T) \to D(S) \setminus \mathcal{S}(S)$. This bijection is given by an affine transformation, namely a rescaling following by a translation if $T$ and $S$ are equivalent and a rescaling following by a translation and a reflection otherwise.
By the previous remark, if $T$ is a minimal exchange interval transformation and $S$ is similar to $T$, then the two interval exchange sets $F(T)$ and $F(S)$ are equal up to permutation, that is there exists a permutation $\pi$ such that one for every $w=a_0 a_1 \cdots a_{n-1} \in F(T)$ there exists a unique word $v = b_0 b_1 \cdots b_{n-1} \in F(S)$ such that $b_i = \pi(a_i)$ for all $i = 1, 2, \ldots n-1$.

In a similar way as before, we can use the similarity in order to construct a graph.
For an interval exchange transformation $T$ we define $\widetilde{\mathcal{G}}(T)$ the \emph{modified induction graph} of $T$ as the directed (unlabeled) graph with vertices the similar classes of transformations obtained starting from $T$ and applying all possible $\chi \in \left\{ \psi, \varphi \right\}^*$ and an edge from $\langle S \rangle$ to $\langle U \rangle$ if $U=\psi(S)$ or $U=\varphi(S)$ for two transformations $S \in \langle S \rangle$ and $U \in \langle U \rangle$.

Note that this variant appears naturally when considering the Rauzy induction of a $2$-interval exchange transformation as a continued fraction expansion.
There exists a natural bijection between the closed interval $[0,1]$ of the real line and the set of $2$-interval exchange transformation given by the map $x \mapsto T_{\lambda, \pi}$ where $\pi = (12)$ and $\lambda = \left( \lambda_1, \lambda_2 \right)$ is the length vector such that $x = \frac{\lambda_1}{\lambda_2}$.

In this view, the Rauzy induction corresponds to the Euclidean algorithm (see~\cite{MiernowskiNogueira2013} for more details), i.e. the map $\mathcal{E} : \R_+^2 \to \R_+^2$ given by
\begin{displaymath}
\mathcal{E}(\lambda_1, \lambda_2) =
\begin{cases}
\left( \lambda_1 - \lambda_2, \lambda_2 \right)
&\text{ if $\lambda_1 \geq \lambda_2$}\\
\left( \lambda_1, \lambda_2 - \lambda_2 \right)
&\text{otherwise}.
\end{cases}
\end{displaymath}

Applying iteratively the Rauzy induction starting from $T$ corresponds then to the continued fraction expansion of $x$.

\begin{example}
Let $\alpha$ and $R$ be as in Example~\ref{ex:equivalencegraph}.
The modified induction graph $\widetilde{\mathcal{G}}(R)$ of the transformation is represented on the right of Figure~\ref{fig:graphrotation}.
Note that the ratio of the two lengths of the semi-intervals exchanged by $T$ is $\frac{1-\alpha}{\alpha} = \frac{1+\sqrt{5}}{2} = \phi = 1 + \frac{1}{1+\frac{1}{1 + \cdots}}$.

\begin{figure}[hbt]
\centering
\gasset{Nframe=y}
\centering
\gasset{Nframe=y}
\begin{picture}(60,14)(0,-7)
\node(R1)(0,0){$[R]$}
\node(R2)(20,0){}
\drawedge[curvedepth=3](R1,R2){$\psi, \varphi$}
\drawedge[curvedepth=3](R2,R1){$\psi, \varphi$}

\node(R)(55,0){$\langle R \rangle$}
\drawloop[curvedepth=3,loopangle=0](R){}
\end{picture}
\caption{Induction graph and modified induction graph of the rotation $R$ of angle $\alpha$= $(3-\sqrt{5})/2$.}
\label{fig:graphrotation}
\end{figure}
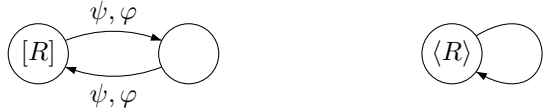
\end{example}

\subsection{Induction and automorphisms}
Let $T = T_{\lambda,\pi}$ be a regular interval exchange on $[\ell,r[$ relative to $(I_a)_{a \in A}$.
Set $A=\{a_1,\ldots,a_s\}$.
Recall now from Subsection~\ref{subsec:nc} that for any $z \in [\ell,r[$, the natural coding of $T$ relative to $z$ is the infinite word $\Sigma_T(z) = b_0 b_1 \cdots$ on the alphabet $A$ with $b_n \in A$ defined for $n \geq 0$ by $b_n = a$ if $T^n(z) \in I_a$.

Denote by $\theta_1$, $\theta_2$ the morphisms from $A^*$ into itself defined by
$$
\theta_1(a) =
\begin{cases}
a_{\pi(s)}a_s	&	\text{if $a = a_{\pi(s)}$} \\
a		&	\text{otherwise}
\end{cases},
\qquad
\theta_2(a) =
\begin{cases}
a_{\pi(s)}a_s	&	\text{if $a=a_s$} \\
a		&	\text{otherwise}
\end{cases}.
$$
The morphisms $\theta_1, \theta_2$ extend to automorphisms of the free group on $A$.

The following result already appears in~\cite{Jullian2012}.
We give a proof for the sake of completeness.

\begin{proposition}
\label{pro:autoelem}
Let $T$ be a regular interval exchange transformation on the alphabet $A$ and let $S = \psi(T)$, $I = Z(T)$.
There exists an automorphism $\theta$ of the free group on $A$ such that $\Sigma_T(z) = \theta(\Sigma_S(z))$ for any $z \in I$.
\end{proposition}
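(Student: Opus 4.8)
The plan is to produce $\theta$ explicitly, by cases according to the two branches of the right Rauzy induction recalled above. I claim that $\theta = \theta_1$ in Case 0 ($\gamma_s < \delta_{\pi(s)}$) and $\theta = \theta_2$ in Case 1 ($\gamma_s > \delta_{\pi(s)}$), with $\theta_1, \theta_2$ the morphisms just defined, which are already known to extend to automorphisms of the free group on $A$. The underlying idea is that $S = \psi(T)$ is the first return map of $T$ on $Z(T)$, so for a fixed $z \in I = Z(T)$ the forward $T$-orbit of $z$ splits into consecutive blocks, one block between each pair of successive returns to $Z(T)$; the $k$-th block begins at $S^k(z)$ and, by the explicit description of $S$, has length $1$ or $2$. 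The whole statement reduces to the blockwise identity that the word read off by $T$ along the $k$-th block equals $\theta(c_k)$, where $c_k$ is the letter of $\Sigma_S(z)$ at position $k$, i.e.\ $S^k(z) \in K_{c_k}$. Concatenating over $k$ then gives $\Sigma_T(z) = \theta(\Sigma_S(z))$ for every $z \in I$.

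For Case 0 I would set $w = S^k(z)$ and $a = c_k$, so $w \in K_a$. If $a \neq a_{\pi(s)}$, then $S(w) = T(w)$ is a single $T$-step, and since $K_a \subset I_a$ (recall $K_a = I_a$ for $a \neq a_s$ and $K_{a_s} = [\gamma_s, \delta_{\pi(s)}[ \,\subset I_{a_s}$), the $T$-letter read is $a = \theta_1(a)$. If $a = a_{\pi(s)}$, then $w \in K_{a_{\pi(s)}} = I_{a_{\pi(s)}}$, so $S(w) = T^2(w)$ is a length-$2$ block: the first $T$-letter is $a_{\pi(s)}$, and since $T(w) \in J_{a_{\pi(s)}} = [\delta_{\pi(s)}, r[$ while $\gamma_s < \delta_{\pi(s)}$ forces $T(w) \in [\gamma_s, r[ \,= I_{a_s}$, the second $T$-letter is $a_s$; hence the block reads $a_{\pi(s)} a_s = \theta_1(a_{\pi(s)})$. (Here $\pi(s) \neq s$ because $T$, being regular, is minimal and $\pi$ is therefore indecomposable, so $\theta_1$ is well defined.) Case 1 is symmetric: the inclusion $I_{a_s} = [\gamma_s, r[ \,\subset [\delta_{\pi(s)}, r[ \,= J_{a_{\pi(s)}}$, valid since $\delta_{\pi(s)} < \gamma_s$, yields $K_{a_s} = T^{-1}(I_{a_s}) \subset I_{a_{\pi(s)}}$, so a block over $K_{a_s}$ reads $a_{\pi(s)} a_s = \theta_2(a_s)$, while every other block has length $1$ and reads its own letter, matching $\theta_2$.

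The only genuinely nontrivial steps are these two geometric inclusions, which are what pin down the intermediate letter $a_s$ inside each length-$2$ block; the remainder is bookkeeping comparing the induced partition $(K_a)$ with $(I_a)$ and checking that the case split in the definition of $S$ lines up exactly with the two branches of $\theta_1$ (resp.\ $\theta_2$). I expect the main care to be needed precisely in that endpoint bookkeeping (identifying which $K_a$ is a proper subinterval of which $I_a$, and handling the boundary values) rather than in any deep argument. Once the blockwise identity holds for all $k$, concatenation gives $\Sigma_T(z) = \theta(\Sigma_S(z))$ on all of $I$, and since $\theta \in \{\theta_1, \theta_2\}$ is an automorphism of the free group on $A$, the proposition follows.
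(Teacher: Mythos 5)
Your proof is correct and follows essentially the same route as the paper's: both identify $\theta$ explicitly as $\theta_1$ in Case 0 and $\theta_2$ in Case 1, and both reduce the claim to the same local facts (a single $T$-step reading $a$ when $S=T$ on $K_a$, and a two-step block reading $a_{\pi(s)}a_s$ on the special subinterval, pinned down by exactly the inclusions $J_{a_{\pi(s)}}\subset I_{a_s}$ in Case 0 and $T^{-1}(I_{a_s})\subset I_{a_{\pi(s)}}$ in Case 1). The paper assembles these local identities by induction on the length of a prefix $w$ of $\Sigma_S(z)$, which is just the formal version of your blockwise concatenation, so the two arguments are the same in substance.
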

\begin{proof}
Assume first that $\gamma_{s} < \delta_{\pi(s)}$ (Case 0).
We have $Z(T) = [\ell, \delta_{\pi(s)}[$ and for any $x \in Z(T)$,
$$
S(z) =
\begin{cases}
T^2(z)	&	\text{if $z \in K_{a_{\pi(s)}} = I_{a_{\pi(s)}}$} \\ 
T(z)	&	\text{otherwise}.
\end{cases}
$$
We will prove by induction on the length of $w$ that for any $z \in I$, $\Sigma_S(z) \in wA^*$ if and only if $\Sigma_T(z) \in \theta_1(w)A^*$.
The property is true if $w$ is the empty word.
Assume next that $w = av$ with $a \in A$ and thus that $z \in I_a$.
If $a \neq a_{\pi(s)}$, then $\theta_1(a) = a$, $S(z) = T(z)$ and
$$
\Sigma_S(z) \in avA^* \Leftrightarrow \Sigma_S(S(z)) \in vA^* \Leftrightarrow \Sigma_T(T(z) )\in \theta_1(v)A^* \Leftrightarrow \Sigma_T(z) \in \theta_1(w)A^*.
$$
Otherwise, $\theta_1(a) = a_{\pi(s)}a_s$, $S(z) = T^2(z)$.
Moreover, $\Sigma_T(z) = a_{\pi(s)}a_s\Sigma_T(T^2(z))$ and thus
$$
\Sigma_S(z) \in avA^* \Leftrightarrow \Sigma_S(S(z)) \in vA^* \Leftrightarrow \Sigma_T(T^2(z)) \in \theta_1(v)A^* \Leftrightarrow \Sigma_T(z) \in \theta_1(w)A^*.
$$
If $\delta_{\pi(s)} < \gamma_{s}$ (Case 1), we have $Z(T) = [\ell,\gamma_{s}[$ and for any $z \in Z(T)$,
$$
S(z) =
\begin{cases}
T^2(z)	&	\text{if $z \in K_{a_s} = T^{-1}(I_{a_s})$} \\
T(z)	&	\text{otherwise}.
\end{cases}
$$
As in Case 0, we will prove by induction on the length of $w$ that for any $z \in I$, $\Sigma_S(z) \in wA^*$ if and only if $\Sigma_T(z) \in \theta_2(w)A^*$.

The property is true if $w$ is empty.
Assume next that $w = av$ with $a \in A$.
If $a \neq a_s$, then $\theta_2(a) = a$, $S(z) = T(z)$ and $z \in K_a \subset I_a$.
Thus
$$
\Sigma_S(z) \in avA^* \Leftrightarrow \Sigma_S(S(z)) \in vA^* \Leftrightarrow \Sigma_T(T(z)) \in \theta_2(v)A^* \Leftrightarrow \Sigma_T(z) \in \theta_2(w)A^*.
$$
Next, if $a = a_s$, then $\theta_2(a) = a_{\pi(s)}a_s$, $S(z) = T^2(z)$ and $z \in K_{a_s} = T^{-1}(I_{a_s}) \subset I_{a_{\pi(s)}}$.
Thus
$$
\Sigma_S(z) \in avA^*\Leftrightarrow \Sigma_S(S(z)) \in vA^* \Leftrightarrow \Sigma_T(T^2(z)) \in \theta_2(v)A^* \Leftrightarrow \Sigma_T(z) \in \theta_2(w)A^*.
$$
where the last equivalence results from the fact that $\Sigma_T(z) \in a_{\pi(s)}a_sA^*$.
This proves that $\Sigma_T(z) = \theta_2(\Sigma_S(z))$.
\end{proof}

\begin{example}
Let $T$ be the transformation of Example~\ref{ex:2alpha}.
The automorphism $\theta_1$ is defined by
$$
\theta_1(a) = ac, \quad \theta_1(b) = b, \quad \theta_1(c) = c.
$$
The right Rauzy induction gives the transformation $S = \psi(T)$ computed in Example~\ref{ex:induced1}.
One has $\Sigma_S(\alpha) = bacba \cdots$ and $\Sigma_T(\alpha) = baccbac \cdots = \theta_1(\Sigma_S(\alpha))$.
\end{example}

We state the symmetrical version of Proposition~\ref{pro:autoelem} for left Rauzy induction.
The proof is analogous.

\begin{proposition}
\label{pro:autoelemsym}
Let $T$ be a regular interval exchange transformation on the alphabet $A$ and let $S = \varphi(T)$, $I = Y(T)$.
There exists an automorphism $\theta$ of the free group on $A$ such that $\Sigma_T(z) = \theta(\Sigma_S(z))$ for any $z \in I$.
\end{proposition}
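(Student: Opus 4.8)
The plan is to mirror the proof of Proposition~\ref{pro:autoelem}, replacing the right-end data $(\gamma_s,\delta_{\pi(s)},a_s,a_{\pi(s)})$ by the left-end data $(\mu_1,\nu_{\pi(1)},a_1,a_{\pi(1)})$. First I would split into two cases according to whether $\nu_{\pi(1)}<\mu_1$ or $\mu_1<\nu_{\pi(1)}$ (equality is excluded by regularity), which are the reflections of Case~0 and Case~1 for $\psi$. In the first case $Y(T)=[\nu_{\pi(1)},r[$ and the removed region $[\ell,\nu_{\pi(1)}[$ is the leftmost range interval $J_{a_{\pi(1)}}$; in the second case $Y(T)=[\mu_1,r[$ and the removed region $[\ell,\mu_1[$ is the leftmost domain interval $I_{a_1}$. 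Since $T$ is minimal, $\pi$ is indecomposable, so $a_{\pi(1)}\neq a_1$, and the two morphisms
$$
\eta_1(a)=\begin{cases} a_{\pi(1)}a_1 & \text{if } a=a_{\pi(1)}\\ a &\text{otherwise}\end{cases},
\qquad
\eta_2(a)=\begin{cases} a_{\pi(1)}a_1 & \text{if } a=a_1\\ a &\text{otherwise}\end{cases}
$$
extend to automorphisms of the free group on $A$, exactly as $\theta_1,\theta_2$ do. I would take $\theta=\eta_1$ in the first case and $\theta=\eta_2$ in the second.

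The heart of the argument is the same inductive claim as in Proposition~\ref{pro:autoelem}: for every $z\in Y(T)$ and every word $w$, one has $\Sigma_S(z)\in wA^*$ if and only if $\Sigma_T(z)\in\theta(w)A^*$. I would prove this by induction on $|w|$, the empty word being trivial. Writing $w=av$ with $z$ in the $S$-partition interval labelled $a$, there are two subcases. If $a$ is not the letter expanded by $\theta$, then the first return of $z$ to $Y(T)$ is a single step, $S(z)=T(z)$ and $\theta(a)=a$, so the equivalence follows from $\Sigma_T(z)=a\,\Sigma_T(T(z))$ together with the induction hypothesis applied to $T(z)$. If $a$ is the expanded letter, then $T(z)$ falls in the removed region and $S(z)=T^2(z)$; the key geometric point is that the skipped point $T(z)$ lies in $I_{a_1}$, so that $\Sigma_T(z)=a_{\pi(1)}a_1\,\Sigma_T(T^2(z))=\theta(a)\,\Sigma_T(T^2(z))$, and the induction hypothesis applied to $T^2(z)$ closes the step. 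Letting $|w|\to\infty$ then yields $\Sigma_T(z)=\theta(\Sigma_S(z))$.

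The only place where the two cases genuinely differ is in verifying these geometric facts about the first return map, and this is where I expect the real work to lie. I would check that the removed region is a single interval whose $T$-preimage sits inside a single domain interval: in the first case all of $I_{a_{\pi(1)}}$ maps onto $J_{a_{\pi(1)}}$, while in the second case the inclusion $I_{a_1}\subset J_{a_{\pi(1)}}$ — which is precisely the content of $\mu_1<\nu_{\pi(1)}$ — forces $T^{-1}(I_{a_1})$ to be a subinterval of $I_{a_{\pi(1)}}$, so that $I_{a_{\pi(1)}}\cap Y(T)$ splits into one part labelled $a_{\pi(1)}$ and one part (the doubling part) labelled $a_1$. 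In either case every skipped point carries the code $a_1$ and its predecessor carries the code $a_{\pi(1)}$, so the expansion block is uniformly $a_{\pi(1)}a_1$; dually, the inclusion $J_{a_{\pi(1)}}\subset I_{a_1}$ in the first case (the content of $\nu_{\pi(1)}<\mu_1$) is what guarantees that the skipped point has code $a_1$. I would also confirm that after the double step the orbit has returned to $Y(T)$, so the first return is exactly $T^2$ and never longer. All of these are routine consequences of minimality and of the comparison between $\mu_1$ and $\nu_{\pi(1)}$, dual to the verifications already carried out for $\psi$; I anticipate that the careful bookkeeping of left and right boundaries, rather than any new idea, will be the main obstacle.
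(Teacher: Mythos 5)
Your proposal is correct and is precisely the argument the paper intends: the paper proves Proposition~\ref{pro:autoelem} in detail and dismisses this statement with ``the proof is analogous,'' and your mirrored case analysis (removed region $J_{a_{\pi(1)}}$ when $\nu_{\pi(1)}<\mu_1$ versus $I_{a_1}$ when $\mu_1<\nu_{\pi(1)}$, uniform expansion block $a_{\pi(1)}a_1$, induction on $|w|$) is exactly that analogous proof, with the needed geometric verifications (indecomposability giving $a_{\pi(1)}\neq a_1$, the skipped point lying in $I_{a_1}$, and the return after the double step) carried out correctly. Nothing further is needed.
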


Combining Propositions~\ref{pro:autoelem} and~\ref{pro:autoelemsym}, we obtain the following statement.

\begin{theorem}
\label{theo:inductionbi}
Let $T$ be a regular interval exchange transformation.
For $\chi \in \{\varphi,\psi\}^*$, let $S = \chi(T)$ and let $I$ be the domain of $S$.
There exists an automorphism $\theta$ of the free group on $A$ such that $\Sigma_T(z) = \theta(\Sigma_S(z))$ for all $z \in I$.
\end{theorem}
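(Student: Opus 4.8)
The plan is to prove Theorem~\ref{theo:inductionbi} by induction on the length of the sequence $\chi \in \{\varphi,\psi\}^*$, using Propositions~\ref{pro:autoelem} and~\ref{pro:autoelemsym} as the one-step building blocks. The base case $|\chi| = 0$ is immediate: then $S = T$, $I = [\ell,r[$, and we may take $\theta$ to be the identity automorphism of the free group on $A$, which trivially satisfies $\Sigma_T(z) = \theta(\Sigma_T(z))$ for all $z$.

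\medskip

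For the inductive step, suppose $\chi = \eta \chi'$ with $\eta \in \{\varphi,\psi\}$ and that the claim holds for $\chi'$. Write $S' = \chi'(T)$ with domain $I'$, so by the induction hypothesis there is an automorphism $\theta'$ of the free group on $A$ with $\Sigma_T(z) = \theta'(\Sigma_{S'}(z))$ for all $z \in I'$. Now $S = \eta(S')$ is one step of Rauzy induction applied to $S'$, and its domain is $I = Z(S')$ if $\eta = \psi$ or $I = Y(S')$ if $\eta = \varphi$. By Proposition~\ref{pro:autoelem} (Case $\eta = \psi$) or Proposition~\ref{pro:autoelemsym} (Case $\eta = \varphi$), applied to the regular transformation $S'$, there is an automorphism $\theta''$ of the free group on $A$ such that $\Sigma_{S'}(z) = \theta''(\Sigma_S(z))$ for every $z \in I$. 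Since $I = D(\eta(S')) \subset D(S') = I'$, for any $z \in I$ we may chain the two relations to obtain
$$
\Sigma_T(z) = \theta'(\Sigma_{S'}(z)) = \theta'\bigl(\theta''(\Sigma_S(z))\bigr) = (\theta' \circ \theta'')(\Sigma_S(z)).
$$
Setting $\theta = \theta' \circ \theta''$, which is again an automorphism of the free group on $A$ as a composition of two such automorphisms, completes the step.

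\medskip

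The only point requiring care is that the propositions are invoked with $S'$ in the role of the regular transformation $T$ appearing in their statements; this is legitimate because each transformation in the sequence $\chi'(T), \chi(T), \ldots$ is again a regular $s$-interval exchange transformation, as guaranteed by Theorem~\ref{theo:birauzy1} (the induced transformation on an admissible semi-interval is regular). Thus the hypothesis of regularity propagates along the whole sequence, and both one-sided statements apply at each stage. I do not expect a genuine obstacle here: the theorem is essentially a formal consequence of the two single-step results together with the fact that automorphisms of a free group are closed under composition, and the inductive bookkeeping on the nesting $I \subset I'$ is routine.
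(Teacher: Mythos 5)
Your proof is correct and follows exactly the route the paper takes: the paper's own proof is the one-line remark that the result ``follows easily by induction on the length of $\chi$ using Propositions~\ref{pro:autoelem} and~\ref{pro:autoelemsym}'', and your write-up simply fills in that induction, including the composition $\theta = \theta' \circ \theta''$ and the regularity bookkeeping via Theorem~\ref{theo:birauzy1}. Nothing to fix.
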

\begin{proof}
The proof follows easily by induction on the length of $\chi$ using Propositions~\ref{pro:autoelem} and~\ref{pro:autoelemsym}.
\end{proof}

Note that if the transformations $T$ and $S = \chi(T)$, with $\chi \in \left\{\psi, \varphi\right\}^*$ , are equivalent, then there exists a point $z_0 \in D(S) \subseteq D(T)$ such that $z_0$ is a fixed point of the isometry that transforms $D(S)$ into $D(T)$ (if $\chi$ is different from the identity map, this point is unique).
In that case one has $\Sigma_S (z_0) = \Sigma_T (z_0) = \theta\left(\Sigma_S (z_0)\right)$ for an appropriate automorphism $\theta$, i.e. $\Sigma_T (z_0)$ is a fixed point
of an appropriate automorphism.

\begin{corollary}
Let $T$ be a regular interval exchange transformation.
For $w \in F(T)$, the set $\RR_F(w)$ is a basis of the free group on $A$.
\end{corollary}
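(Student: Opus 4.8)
The plan is to exhibit $\RR_F(w)$ as the image of the standard basis $A$ under an automorphism of the free group, namely the automorphism attached by the induction to the admissible semi-interval $J_w$. Since an automorphism of a free group carries any basis to a basis, this immediately gives the result. Concretely, write $F = F(T)$ and fix $w \in F$. By Proposition~\ref{pro:jadm} the semi-interval $J_w$ is admissible, so by Theorem~\ref{theo:birauzy2} there is a sequence $\chi \in \{\varphi,\psi\}^*$ such that $J_w = D(\chi(T))$ and the transformation $S = \chi(T)$ is exactly the first return map of $T$ on $J_w$. By Theorem~\ref{theo:inductionbi} there is an automorphism $\theta$ of the free group on $A$ with $\Sigma_T(z) = \theta(\Sigma_S(z))$ for every $z \in J_w$. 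The goal is then to prove $\RR_F(w) = \theta(A)$.

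On the other hand, exactly as in the proof of Theorem~\ref{theo:returns} (via Lemma~\ref{lem:returnsind}), if $f : A^* \to \RR_F(w)^*$ is a coding morphism for $\RR_F(w)$, then $\Sigma_T(z) = f(\Sigma_S(z))$ for every $z \in J_w$ as well; here $f$ maps $A$ bijectively onto $\RR_F(w)$. Thus for each such $z$ the single infinite word $\Sigma_T(z)$ admits two decompositions, $f(\Sigma_S(z)) = \theta(\Sigma_S(z))$, indexed by the letters of the same infinite word $\Sigma_S(z) = c_0 c_1 \cdots$. I would fix a $z \in J_w$ and compare the two decompositions block by block. Since $S = \chi(T)$ is a regular, hence (by Theorem~\ref{theo:keane}) minimal, $s$-interval exchange transformation, every letter of $A$ occurs in $\Sigma_S(z)$, so once the blocks are matched we conclude $f(a) = \theta(a)$ for every $a \in A$, whence $\RR_F(w) = f(A) = \theta(A)$.

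The crux, and the step needing the most care, is the block-by-block matching of the two decompositions. Here I would use that $S$ is the first return map of $T$ on $J_w$: the points of the forward $T$-orbit of $z$ lying in $J_w$ are exactly $z = S^0(z), S(z), S^2(z), \ldots$, occurring at the return times $0 = n_0 < n_1 < n_2 < \cdots$. By Lemma~\ref{lem:returnsind} and the remark following it, the $f$-decomposition cuts $\Sigma_T(z)$ precisely at these return times, i.e. $|f(c_0 \cdots c_{k-1})| = n_k$. It therefore suffices to check that the $\theta$-decomposition cuts at the same positions, i.e. that $\theta(c_k)$ is the $T$-coding of the orbit segment of length $n_{k+1}-n_k$ running from $S^k(z)$ to $S^{k+1}(z)$; this is precisely the content of the inductive construction of $\theta$ from the elementary automorphisms $\theta_1,\theta_2$ in Propositions~\ref{pro:autoelem} and~\ref{pro:autoelemsym}, where each application of $\varphi$ or $\psi$ substitutes a letter by the coding of the corresponding one- or two-step orbit segment. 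Matching cumulative lengths $|\theta(c_0\cdots c_{k-1})| = n_k = |f(c_0\cdots c_{k-1})|$ together with the identity $\theta(\Sigma_S(z)) = f(\Sigma_S(z))$ forces $\theta(c_k) = f(c_k)$ for every $k$, which is the desired equality $f = \theta$ and completes the argument.
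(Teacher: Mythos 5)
Your proof is correct, and its skeleton is exactly the paper's: Proposition~\ref{pro:jadm} gives admissibility of $J_w$, Theorem~\ref{theo:birauzy2} produces $\chi \in \{\varphi,\psi\}^*$ with $D(\chi(T)) = J_w$ and identifies $S = \chi(T)$ with the first return map, Theorem~\ref{theo:inductionbi} supplies the automorphism $\theta$, and the goal becomes $\RR_F(w) = \theta(A)$. Where you diverge is the final identification. The paper never compares full factorizations: it telescopes a single block, writing $\Sigma_T(z) = \theta(\Sigma_S(z)) = \theta(a)\theta(\Sigma_S(S(z))) = \theta(a)\Sigma_T(S(z))$ (valid since $S(z) \in J_w$), and then Lemma~\ref{lem:returnsind} immediately yields $\theta(a) \in \RR_F(w)$; conversely, any $x \in \RR_F(w)$ must equal $\theta(a_z)$ for the corresponding $z$, since both are prefixes of the same infinite word $\Sigma_T(z)$ and first return words form a prefix code. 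Your route instead introduces the coding morphism $f$ of Theorem~\ref{theo:returns} and proves $f = \theta$ letter by letter, which hinges on the finer claim that the $\theta$-factorization cuts $\Sigma_T(z)$ exactly at the return times $n_k$. That claim is true, but it is a strengthening of what Propositions~\ref{pro:autoelem} and~\ref{pro:autoelemsym} literally assert (they only state $\Sigma_T = \theta \circ \Sigma_S$), so as written it needs its own induction on $|\chi|$, checking in the elementary cases that each letter image codes the one- or two-step orbit segment and that this property survives composition. A cheaper way to get your cut-matching is precisely the paper's telescoping identity combined with injectivity of $\Sigma_T$ for regular $T$: from $\Sigma_T(z) = \theta(a)\Sigma_T(S(z))$ one gets $T^{|\theta(a)|}(z) = S(z)$, hence $|\theta(a)| = n_1$, and one iterates. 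What your version buys in exchange is a slightly sharper conclusion: it identifies the specific coding morphism $f$ of Theorem~\ref{theo:returns} with $\theta$ itself, not merely the equality of sets $\RR_F(w) = \theta(A)$. One small correction: the identity $\Sigma_T(z) = f(\Sigma_S(z))$ does not hold for an arbitrary coding morphism for $\RR_F(w)$, only for the particular bijection $A \to \RR_F(w)$ determined by the partition $(J_{wx})$ of $J_w$; your phrasing should be adjusted accordingly, though this does not affect the argument.
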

\begin{proof}
By Proposition~\ref{pro:jadm}, the semi-interval $J_w$ is admissible.
By Theorem~\ref{theo:birauzy2} there is a sequence $\chi \in \{\varphi,\psi\}^*$ such that $D(\chi(T)) = J_w$.
Moreover, the transformation $S = \chi(T)$ is the transformation induced by $T$ on $J_w$.
By Theorem~\ref{theo:inductionbi} there is an automorphism $\theta$ of the free group on $A$ such that $\Sigma_T(z) = \theta(\Sigma_S(z))$ for any $z \in J_w$.

By Lemma~\ref{lem:returnsind}, we have $x \in \RR_F(w)$ if and only if $\Sigma_T(z) = x\Sigma_T(S(z)))$ for some $z \in J_w$.
This implies that $\RR_F(w) = \theta(A)$.
Indeed, for any $z \in J_w$, let $a$ is the first letter of $\Sigma_S(z)$.
Then
$$
\Sigma_T(z) = \theta(\Sigma_S(z)) = \theta(a\Sigma_S(S(z))) = \theta(a)\theta(\Sigma_S(Sz)) = \theta(a)\Sigma_T(S(z)).
$$
Thus $x \in \RR_F(w)$ if and only if there is $a\in A$ such that $x = \theta(a)$.
This proves that the set $\RR_F(w)$ is a basis of the free group on $A$.
\end{proof}

The property proved in the previous corollary is actually true for a much larger class of sets than regular interval exchange sets (see~\cite[Theorem 4.7]{BertheDeFeliceLeroyPerrinReutenauerRindone2014}).
We illustrate the this result with the following examples.

\begin{example}
We consider again the transformation $T$ of Example~\ref{ex:2alpha} and $F = F(T)$.
We have $R_F(c) = \{bac,bbac,c\}$ (see Example~\ref{ex:returns}).
We represent in Figure~\ref{fig:illustrate} the sequence $\chi$ of Rauzy inductions such that $J_c$ is the domain of $\chi(T)$.

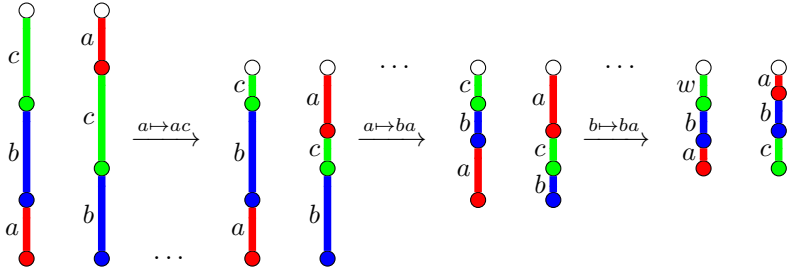
\begin{figure}[hbt]
\centering
\gasset{Nadjust=wh}
\begin{picture}(100,40)

\put(0,0){
\begin{picture}(20,20)
\node[fillcolor=red](0h)(0,0){}
\node[fillcolor=blue](1-2alpha)(0,7.8){}
\node[fillcolor=green](1-alpha)(0,20.6){}
\node(1h)(0,33){}
\drawedge[linecolor=red,linewidth=1](0h,1-2alpha){$a$}
\drawedge[linecolor=blue,linewidth=1](1-2alpha,1-alpha){$b$}
\drawedge[linecolor=green,linewidth=1](1-alpha,1h){$c$}

\node[fillcolor=blue](0b)(10,0){}
\node[fillcolor=green](alpha)(10,12){}
\node[fillcolor=red](2alpha)(10,25.4){}\node(1b)(10,33){}
\drawedge[linecolor=blue,linewidth=1](0b,alpha){$b$}
\drawedge[linecolor=green,linewidth=1](alpha,2alpha){$c$}
\drawedge[linecolor=red,linewidth=1](2alpha,1b){$a$}
\end{picture}
}
\put(15,15){$\edge{a\mapsto ac}$}\put(18,0){$\ldots$}
\put(30,0){
\begin{picture}(20,20)
\node[fillcolor=red](0h)(0,0){}
\node[fillcolor=blue](1-2alpha)(0,7.8){}
\node[fillcolor=green](1-alpha)(0,20.6){}
\node(2alphah)(0,25.4){}
\drawedge[linecolor=red,linewidth=1](0h,1-2alpha){$a$}
\drawedge[linecolor=blue,linewidth=1](1-2alpha,1-alpha){$b$}
\drawedge[linecolor=green,linewidth=1](1-alpha,2alphah){$c$}

\node[fillcolor=blue](0b)(10,0){}
\node[fillcolor=green](alpha)(10,12){}
\node[fillcolor=red](4alpha-1)(10,17){}
\node(2alpha)(10,25.4){}
\drawedge[linecolor=blue,linewidth=1](0b,alpha){$b$}
\drawedge[linecolor=green,linewidth=1](alpha,4alpha-1){$c$}
\drawedge[linecolor=red,linewidth=1](4alpha-1,2alpha){$a$}
\end{picture}
}
\put(45,15){$\edge{a\mapsto ba}$}\put(48,25.4){$\ldots$}
\put(60,0){
\begin{picture}(20,20)
\node[fillcolor=red](0h)(0,7.8){}
\node[fillcolor=blue](1-2alpha)(0,15.7){}
\node[fillcolor=green](1-alpha)(0,20.6){}
\node(2alpha)(0,25.4){}
\drawedge[linecolor=red,linewidth=1](0h,1-2alpha){$a$}
\drawedge[linecolor=blue,linewidth=1](1-2alpha,1-alpha){$b$}
\drawedge[linecolor=green,linewidth=1](1-alpha,2alpha){$c$}

\node[fillcolor=blue](0b)(10,7.8){}
\node[fillcolor=green](alpha)(10,12){}
\node[fillcolor=red](4alpha-1)(10,17){}
\node(2alpha)(10,25.4){}
\drawedge[linecolor=blue,linewidth=1](0b,alpha){$b$}
\drawedge[linecolor=green,linewidth=1](alpha,4alpha-1){$c$}
\drawedge[linecolor=red,linewidth=1](4alpha-1,2alpha){$a$}
\end{picture}
}
\put(75,15){$\edge{b\mapsto ba}$}\put(78,25.4){$\ldots$}
\put(90,0){
\begin{picture}(20,20)
\node[fillcolor=red](0h)(0,12){}
\node[fillcolor=blue](1-2alpha)(0,15.7){}
\node[fillcolor=green](1-alpha)(0,20.6){}
\node(2alpha)(0,25.4){}
\drawedge[linecolor=red,linewidth=1](0h,1-2alpha){$a$}
\drawedge[linecolor=blue,linewidth=1](1-2alpha,1-alpha){$b$}
\drawedge[linecolor=green,linewidth=1](1-alpha,2alpha){$w$}

\node[fillcolor=green](alpha)(10,12){}
\node[fillcolor=blue](4alpha-1)(10,17){}
\node[fillcolor=red](7alpha-2)(10,22){}
\node(2alpha)(10,25.4){}
\drawedge[linecolor=green,linewidth=1](alpha,4alpha-1){$c$}
\drawedge[linecolor=blue,linewidth=1](4alpha-1,7alpha-2){$b$}
\drawedge[linecolor=red,linewidth=1](7alpha-2,2alpha){$a$}
\end{picture}
}
\end{picture}
\caption{The sequence $\chi \in \{\varphi,\psi\}^*$}
\label{fig:illustrate}
\end{figure}

The sequence is composed of a right induction followed by two left inductions.
We have indicated on each edge the associated automorphism (indicating only the image of the letter which is modified).
We have $\chi = \varphi^2\psi$ and the resulting composition $\theta$ of automorphisms gives
$$
\theta(a) = bac, \quad \theta(b) = bbac, \quad \theta(c)=c.
$$
Thus $\RR_F(c) = \theta(A)$.
\end{example}

\begin{example}
\label{ex:u}
Let $T$ and $F$ be as in the preceding example.
Let $U$ be the transformation induced by $T$ on $J_a$.
We have $U = \varphi^6(T)$ and a computation shows that for any $z \in J_a$, $\Sigma_T(z) = \theta(\Sigma_U(z))$ where $\theta$ is the automorphism of the free group on $A = \{a,b,c\}$ which is the coding morphism for $\RR_F(a)$ defined by:
$$
\theta(a) = ccba, \quad \theta(b) = cbba, \quad \theta(c)=ccbba.
$$
One can verify that $F(U) = F(S)$, where $S$ is the transformation obtain from $T$ by permuting the labels of the intervals according to the permutation $\pi = (acb)$.

Note that $F(U) = F(S)$ although $S$ and $U$ are not identical, even up to rescaling the intervals.
Actually, the rescaling of $U$ to a transformation on $[0,1[$ corresponds to the mirror image of $S$, obtained by taking the image of the intervals by a symmetry centered at $1/2$.
\end{example}

Note that in the above examples, all lengths of the intervals belong to the quadratic number field $\Q[\sqrt{5}]$.

In the next Section we will prove that if a regular interval exchange transformation $T$ is defined over a quadratic field, then the family of transformations obtained from $T$ by the Rauzy inductions contains finitely many distinct transformations up to rescaling.

\section{Interval exchange over a quadratic field}
\label{sec:quadratic}
An interval exchange transformation is said to be defined over a set $Q \subset \R$ if the lengths of all exchanged semi-intervals belong to $Q$.

The following is proved in~\cite{BoshernitzanCarroll1997}.
Let $T$ be a minimal interval exchange transformation on semi-intervals defined over a quadratic number field.
Let $(T_n)_{n\ge 0}$ be a sequence of interval exchange transformation such that $T_0=T$ and $T_{n+1}$ is the transformation induced by $T_n$ on one of its exchanged semi-intervals $I_n$. Then, up to rescaling all semi-intervals $I_n$ to the same length, the sequence $(T_n)$ contains finitely many distinct transformations.
In the same paper, an extension to the right Rauzy induction is suggested (but not completly developed).

In this section we generalize this results and prove that, under the above hypothesis on the lengths of the semi-intervals and up to rescaling and translation, there are finitely many transformations obtained by the branching Rauzy induction defined in Section~\ref{sec:rauzy}.

\begin{theorem}
\label{theo:quadratic}
Let $T$ be a regular interval exchange transformation defined over a quadratic field.
The family of all induced transformation of $T$ over an admissible semi-interval contains finitely many distinct transformations up to equivalence.
\end{theorem}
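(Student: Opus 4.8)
The plan is to first reduce the statement to a finiteness statement about the combinatorial data of the induced transformations, and then to control that data using the arithmetic of the quadratic field $K$ over which $T$ is defined, extending the argument of Boshernitzan and Carroll~\cite{BoshernitzanCarroll1997}.

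First I would invoke Theorem~\ref{theo:birauzy2}: every admissible semi-interval is the domain of $\chi(T)$ for some $\chi\in\{\varphi,\psi\}^*$, and the transformation induced by $T$ on it is exactly $\chi(T)$. Hence the family of induced transformations over admissible semi-intervals is precisely $\{\chi(T)\mid\chi\in\{\varphi,\psi\}^*\}$, and it suffices to show that this set is finite up to equivalence. Since two $s$-interval exchange transformations are equivalent if and only if they share the same permutation $\pi$ and have proportional length vectors, and since $\pi$ ranges over the finite set of permutations of $A$, the problem reduces to bounding the number of distinct rays $\mathbb{R}_{>0}\cdot\lambda^{(\chi)}$ spanned by the length vectors $\lambda^{(\chi)}$ of the transformations $\chi(T)$.

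Next I would record the linear algebra of the induction. As the explicit descriptions of $\psi$ and $\varphi$ show (Case 0 and Case 1 above), each elementary step replaces the length vector $\beta$ of the induced map by $\lambda=B\beta$ for a non-negative integer transvection $B$ with $\det B=1$; composing, $\lambda=B_\chi\,\lambda^{(\chi)}$ with $B_\chi$ a non-negative, unimodular integer matrix, so that $\lambda^{(\chi)}=B_\chi^{-1}\lambda$ with $B_\chi^{-1}$ again an integer matrix. After clearing the common denominator of the original data (a harmless rescaling) we may assume $\lambda\in\mathcal{O}^s$, where $\mathcal{O}$ is the ring of integers of $K$; then every $\lambda^{(\chi)}$ lies in the lattice $\mathcal{O}^s$. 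Via the two real embeddings of $K$, namely the identity and the Galois conjugation $\sigma$, the Minkowski map $x\mapsto(x,x^\sigma)$ sends $\mathcal{O}^s$ to a discrete lattice in $\mathbb{R}^{2s}$; therefore a set of distinct rays spanned by primitive vectors of $\mathcal{O}^s$ is finite as soon as the primitive representatives are bounded in both embeddings. The real embedding is easy to control: the true lengths $\lambda^{(\chi)}$ are positive and, after normalising the total length, lie in the simplex, so this component stays bounded.

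The main obstacle is the uniform boundedness of the conjugate component $\lambda^{(\chi)\sigma}$; this is exactly the point where the quadratic hypothesis is essential and where the Boshernitzan--Carroll argument must be extended from induction on the exchanged semi-intervals to the full branching induction. Because $B_\chi$ has rational integer entries it is fixed by $\sigma$, so the conjugate vectors evolve under the \emph{same} non-negative unimodular matrices as the lengths. The key is then to combine (i) the genuine positivity of every length vector $\lambda^{(\chi)}>0$, (ii) the unimodularity $\det B_\chi=1$, and (iii) the two-dimensionality of $K$ (a single conjugate) in order to confine $\lambda^{(\chi)\sigma}$, after normalisation, to a fixed compact region depending only on $\lambda$ and $\lambda^\sigma$; with more than one conjugate, as in higher-degree fields, this confinement breaks down, which is precisely why the statement is special to quadratic fields. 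Granting this bound, the primitive integer representatives of the rays lie in a bounded region of the lattice $\mathcal{O}^s\subset\mathbb{R}^{2s}$, hence are finite in number; combined with the finitely many possible permutations $\pi$, this yields finitely many transformations up to equivalence, as required. One may also use Theorem~\ref{theo:length}, which guarantees that the domains shrink along any single chain, to organise the normalisation cleanly.
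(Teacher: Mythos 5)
Your first reduction is sound and matches the paper: by Theorem~\ref{theo:birauzy2} the induced transformations on admissible semi-intervals are exactly the $\chi(T)$, $\chi\in\{\varphi,\psi\}^*$, so it suffices to bound the set of projective classes of their length vectors. The linear-algebra bookkeeping is also fine: each induction step is a non-negative unimodular integer matrix, so all length vectors $\lambda^{(\chi)}$ lie in $\Z[\sqrt d]^s$ after clearing denominators. The proof collapses, however, at the step you yourself flag with ``Granting this bound'': the confinement of the conjugate vectors $\lambda^{(\chi)\sigma}$ (after normalisation) to a compact region is asserted to follow from (i) positivity, (ii) unimodularity and (iii) quadraticity, but these three facts alone do not imply it. The domains $D(\chi(T))$ shrink to zero length (Theorem~\ref{theo:length}), so the vectors $\lambda^{(\chi)}$ tend to $0$ in the real embedding while their Galois conjugates typically blow up; whether they blow up \emph{no faster than} $1/|D(\chi(T))|$ --- which is what one needs for the normalised conjugates to stay bounded --- is a quantitative statement requiring control of how the arithmetic size of the endpoints of an admissible interval compares with its length. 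Even in the case $s=2$ (Lagrange's theorem on periodic continued fractions) this step is a genuine argument using quantitative convergence of the convergents, not a formal consequence of positivity and unimodularity; for the full branching induction, where $\chi$ ranges over an entire tree rather than a single path, nothing in your sketch supplies it.

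This missing estimate is precisely what the paper's Section~\ref{sec:quadratic} is built to provide, and it is where the two approaches genuinely diverge. The paper does not redo the lattice/Minkowski argument at all: it introduces the complexity $\Psi$ and the reduced complexity $\Pi(X)=|X|\,\Psi(X)$, proves via return-time estimates (Theorem~\ref{theo:sim}, Corollary~\ref{cor:oM}, together with Proposition~\ref{pro:psiT-psi} and the definition of admissibility) that $\Pi(J)\in\Theta(1)$ for every admissible semi-interval $J$ (Theorem~\ref{theo:pi}), and then feeds this uniform bound into Boshernitzan's Theorem~\ref{theo:bosh}, quoted as a black box, which converts bounded reduced complexity into finiteness of equivalence classes. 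In effect, the inequality $\Pi(J)=|J|\,\Psi(J)\leq C$ is exactly the statement that endpoint arithmetic grows at most like $1/|J|$, i.e.\ the bound you granted. To repair your proof along your own lines you would need to prove this estimate for all admissible intervals --- and at that point you would essentially be reproving Theorem~\ref{theo:pi} and the relevant part of Theorem~\ref{theo:bosh}, so the cleaner course is to invoke them as the paper does.
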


The proof of the Theorem~\ref{theo:quadratic} is based on the fact that for each minimal interval exchange transformation defined over a quadratic field, a certain measure of the arithmetic complexity of the admissible semi-intervals is bounded.

\subsection{Complexities}
\label{subsec:complexities}

Let $T$ be an interval exchange transformation on a semi-interval $[\ell, r[$ defined over a quadratic field $ \Q[\sqrt{d}]$, where $d$ is a square free integer $\geq 2$.
Without loss of generality, one may assume, by replacing $T$ by an equivalent interval exchange transformation if necessary, that $T$ is defined over the ring $ \Z[\sqrt{d}] = \{ m + n\sqrt{d} \,\, | \,\, m,n \in \Z \}$ and that all $\gamma_i$ and $\alpha_i$ lie in $\Z[\sqrt{d}]$ (replacing $[\ell, r[$ if necessary by its equivalent translate with $\gamma_0 = \ell \in \Z[\sqrt{d}]$).

For $z = m + n \sqrt{d}$, let define $\Psi(z) = \max ( |m|, |n| )$.

Let $\mathcal{A}([\ell,r[)$ be the algebra of subsets $X \subset [\ell,r[$ which are finite unions $X = \bigcup_j I_j$ of semi-intervals defined over $\Z[\sqrt{d}]$, that is $I_j = [\ell_j, r_j[$ for some $\ell_j, r_j \in \Z[\sqrt{d}]$.
Note that the algebra $\mathcal{A}([\ell,r[)$ is closed under taking finite unions, intersections and passing to complements in $[\ell, r[$.

Set $\partial(X)$ the boundary of $X$ and $|X|$ the Lebesgue measure of $X$.
Given a subset $X \in \mathcal{A}([\ell,r[)$, we define the \emph{complexity} of $X$ as $\Psi(X) = \max \{ \Psi(z) \,\, | \,\, z \in \partial(X) \}$ and the \emph{reduced complexity} of $X$ as $\Pi(X) = |X| \, \Psi(X)$.

A key tool to prove Theorem~\ref{theo:quadratic} is the following result proved in~\cite[Theorem 3.1]{BoshernitzanCarroll1997}.

\begin{theorem}[Boshernitzan]
\label{theo:bosh}
Let $T$ be a minimal interval exchange transformation on an interval $[\ell,r[$ defined over a quadratic number field. Assume that $(J_n)_{\geq 1}$ is a sequence of semi-intervals of $[\ell,r[$ such that the set $\{ \Pi(J_n) \,\, | \,\, n \geq 1 \}$ is bounded.
Then the sequence $T_n$ of interval exchange transformations obtained by inducing $T$ on $J_n$ contains finitely many distinct equivalence classes of interval exchange transformations.
\end{theorem}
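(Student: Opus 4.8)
The plan is to reduce the statement to a bound on the complexity of a suitably renormalized length vector, and then to extract that bound from the hypothesis $\Pi(J_n)\le M$ through the norm form of $\Q[\sqrt{d}]$. Since $T$ is an $s$-interval exchange, each induced transformation $T_n$ has at most $s+2$ intervals, so only finitely many permutations occur; as two interval exchanges are equivalent as soon as they have the same permutation and proportional length vectors, it suffices to show that the \emph{projective} length vectors of the $T_n$ take finitely many values. Write $\sigma$ for the nontrivial automorphism of $\Q[\sqrt{d}]$, $z=m+n\sqrt{d}\mapsto\sigma(z)=m-n\sqrt{d}$, and $N(z)=z\,\sigma(z)=m^2-dn^2\in\Z$ for its norm. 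The whole argument rests on two elementary inequalities, valid for every $z=m+n\sqrt{d}$ (using $\sqrt{d}\ge\sqrt{2}>1$):
\begin{equation}
\Psi(z)\ \le\ \frac{|z|+|\sigma(z)|}{2},\qquad |\sigma(z)|\ \le\ (1+\sqrt{d})\,\Psi(z).
\label{eq:psiconj}
\end{equation}
The first follows from $2m=z+\sigma(z)$ and $2n\sqrt{d}=z-\sigma(z)$, the second from $|\sigma(z)|\le|m|+\sqrt{d}\,|n|$.

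First I would bound the norm of the total length. Set $L_n=|J_n|\in\Z[\sqrt{d}]$ and $J_n=[u_n,v_n[$. As $u_n,v_n$ lie in $[\ell,r[$ their real absolute values are at most $R:=\max(|\ell|,|r|)$, while $\Psi(u_n),\Psi(v_n)\le\Psi(J_n)$. The second inequality in \eqref{eq:psiconj} gives $|\sigma(L_n)|\le|\sigma(u_n)|+|\sigma(v_n)|\le 2(1+\sqrt{d})\Psi(J_n)$, hence
\begin{equation}
1\ \le\ |N(L_n)|\ =\ L_n\,|\sigma(L_n)|\ \le\ 2(1+\sqrt{d})\,|J_n|\,\Psi(J_n)\ =\ 2(1+\sqrt{d})\,\Pi(J_n)\ \le\ 2(1+\sqrt{d})M,
\label{eq:normbound}
\end{equation}
the leftmost inequality because $N(L_n)$ is a nonzero integer. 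Thus $|N(L_n)|$ ranges over a finite set. Now renormalize: since multiplying a length vector by a positive scalar does not change its equivalence class, I replace $\lambda_{n,i}$ by $\nu_{n,i}:=|\sigma(L_n)|\,\lambda_{n,i}$ (equal to $\pm\sigma(L_n)\lambda_{n,i}$, the sign chosen so that the product is positive), which lies in $\Z[\sqrt{d}]$, is positive, and satisfies $\sum_i\nu_{n,i}=|\sigma(L_n)|\,L_n=|N(L_n)|$. In particular each $\nu_{n,i}$ is a positive element of $\Z[\sqrt{d}]$ of real value at most $2(1+\sqrt{d})M$.

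It remains to bound the complexity $\Psi(\nu_{n,i})$, for then each $\nu_{n,i}$ lies in a fixed finite subset of $\Z[\sqrt{d}]$ and only finitely many projective vectors $(\nu_{n,i})_i$ can occur. By \eqref{eq:psiconj} this amounts to bounding $|\sigma(\nu_{n,i})|=L_n\,|\sigma(\lambda_{n,i})|$, and since the $\lambda_{n,i}$ are the gaps between consecutive boundary points of the partition of $T_n$, it suffices to bound $L_n\,|\sigma(p)|$ for each such boundary point $p$. For the endpoints $u_n,v_n$ this is immediate from \eqref{eq:normbound}. The essential difficulty, and the step I expect to be the main obstacle, is the remaining boundary points, namely those of $\Div(J_n,T)\cap J_n$, which have the form $T^k(\gamma_i)$ for possibly large $k$ and whose conjugates are not obviously controlled by $\Psi(J_n)$. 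Were all the conjugate lengths $\sigma(\lambda_{n,i})$ of one sign we would be done at once, since then $\sum_i|\sigma(\lambda_{n,i})|=|\sigma(L_n)|\le 2(1+\sqrt{d})\Psi(J_n)$ and hence $L_n|\sigma(\lambda_{n,i})|\le 2(1+\sqrt{d})\Pi(J_n)\le 2(1+\sqrt{d})M$; the real content is therefore to rule out large cancellation among the $\sigma(\lambda_{n,i})$. The plan here is to exploit the first-return (neighbor) description of the division points in $J_n$ coming from the dynamics of $T$: each such point is the endpoint of an orbit segment of some $\gamma_i$ that does not re-enter $]u_n,v_n[$, and tracking the conjugate translation values $\sigma(\alpha_a)$ along these segments should show that the conjugates of the division points, and thus the quantities $L_n|\sigma(\lambda_{n,i})|$, stay $O(\Pi(J_n))$. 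Combined with \eqref{eq:psiconj} this bounds $\Psi(\nu_{n,i})$, which together with the finiteness of the occurring permutations yields finitely many equivalence classes.
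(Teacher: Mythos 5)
You should first be aware that the paper contains no proof of this statement at all: Theorem~\ref{theo:bosh} is imported verbatim as Theorem~3.1 of Boshernitzan and Carroll, so the only meaningful comparison is with that original argument. Your skeleton in fact reproduces it correctly: finitely many permutations occur because each $T_n$ exchanges at most $s+2$ intervals; the norm bound $1\le |N(L_n)|\le 2(1+\sqrt{d})\,\Pi(J_n)\le 2(1+\sqrt{d})M$ is valid (the left inequality because $N(L_n)$ is a nonzero rational integer, $\sqrt{d}$ being irrational); and the renormalization $\nu_{n,i}=|\sigma(L_n)|\,\lambda_{n,i}$ correctly reduces the theorem to the single estimate $L_n\,|\sigma(\lambda_{n,i})|=O(1)$.

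That estimate, however, is exactly where your proposal stops being a proof. You name it yourself as ``the main obstacle'' and offer only the hope that tracking the conjugate translation values ``should show'' the bound; nothing is actually established about the points $T^{k}(\gamma_i)$ for large $k$, and the framing in terms of ruling out cancellation among the $\sigma(\lambda_{n,i})$ points at no workable mechanism, since there is no sign coherence to exploit. This missing step is the entire analytic content of the theorem --- it is where minimality and the dynamics enter --- and it is filled not by a cancellation argument but by a direct complexity bound on the division points, using two results the paper already quotes. By Proposition~\ref{pro:psiT-psi}, each application of $T^{\pm 1}$ changes $\Psi$ by at most $u$ and $\Psi(\gamma_j)\le u$; every boundary point of the partition of $J_n$ defining $T_n$ has the form $T^{-k}(w)$ with $w\in\{\gamma_1,\dots,\gamma_s,u_n,v_n\}$ and $0\le k\le \rho^+(J_n)$, hence satisfies $\Psi(p)\le \Psi(J_n)+(\rho^+(J_n)+1)u$. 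By Theorem~\ref{theo:sim} (this is where minimality is used), $\rho^+(J_n)\in\Theta(1/|J_n|)$, while $\Psi(J_n)=\Pi(J_n)/|J_n|\le M/|J_n|$; therefore $\Psi(p)\in O(1/L_n)$ with constants depending only on $T$ and $M$. Subadditivity of $\Psi$ then gives $L_n\Psi(\lambda_{n,i})=O(1)$, hence $L_n|\sigma(\lambda_{n,i})|\le (1+\sqrt{d})\,L_n\Psi(\lambda_{n,i})=O(1)$, which bounds $\Psi(\nu_{n,i})$, places each $\nu_{n,i}$ in a fixed finite subset of $\Z[\sqrt{d}]$, and completes your argument. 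Without this return-time step (or an equivalent), what you have is a correct but unfinished reduction, not a proof.
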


Thus, in order to prove Theorem~\ref{theo:quadratic}, it is sufficient to show that the reduced complexity of every admissible semi-interval is bounded.

The following Proposition is proved in~\cite[Proposition 2.1]{BoshernitzanCarroll1997}.
It shows that the complexity of a subset $X$ and of its image $T(X)$ differ at most by a constant that depends only on $T$.

\begin{proposition}
\label{pro:psiT-psi}
There exists a constant $u = u(T)$ such that for every $X \in \mathcal{A}([\ell,r[)$ and $z \in [\ell,r[$ one has $| \Psi(T(X)) - \Psi(X) | \leq u$ and $\Psi(T(z) - z) \leq u.$
Moreover, one has $\Psi(\gamma) \leq u$ and $\Psi(T(\gamma)) \leq u$ for every separation point $\gamma$.
\end{proposition}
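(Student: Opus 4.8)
The plan is to exploit two elementary facts: that $\Psi$ is subadditive on $\Z[\sqrt{d}]$, and that $T$ acts as a translation by a fixed element of $\Z[\sqrt{d}]$ on each piece $I_a$ of its finite defining partition. First I record the subadditivity: writing $z = m_1 + n_1\sqrt{d}$ and $w = m_2 + n_2\sqrt{d}$, one has $\Psi(z+w) = \max(|m_1+m_2|,|n_1+n_2|) \leq \max(|m_1|,|n_1|) + \max(|m_2|,|n_2|) = \Psi(z)+\Psi(w)$. Since $T$ is defined over $\Z[\sqrt{d}]$ its data is finite, so I set $u_1 = \max_{a\in A}\Psi(\alpha_a)$ and $u_0 = \max\{\Psi(\gamma_1),\ldots,\Psi(\gamma_s),\Psi(r)\}$; both are finite and depend only on $T$. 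All these endpoints lie in $\Z[\sqrt{d}]$, because $\ell$ and each length $\lambda_a$ do, whence every $\gamma_a,\mu_a$ and $r$ do as well.

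The pointwise statements are then immediate. For $z\in I_a$ one has $T(z) = z+\alpha_a$, hence $\Psi(T(z)-z) = \Psi(\alpha_a)\leq u_1$. Each separation point satisfies $\Psi(\gamma_i)\leq u_0$ by the definition of $u_0$, and since $\gamma_i\in I_{a_i}$ subadditivity gives $\Psi(T(\gamma_i)) = \Psi(\gamma_i+\alpha_{a_i})\leq u_0+u_1$.

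For the boundary estimate I would decompose $T(X) = \bigcup_{a\in A}\big((X\cap I_a)+\alpha_a\big)$. Using that the boundary of a finite union (respectively of an intersection) is contained in the union of the boundaries, together with $\partial I_a = \{\gamma_a,\mu_a\}$, every point $w\in\partial(T(X))$ is of the form $w = z+\alpha_a$ for some $a\in A$ and some $z\in\partial X\cup\{\gamma_a,\mu_a\}$. Subadditivity then yields $\Psi(w)\leq\Psi(z)+\Psi(\alpha_a)\leq\max(\Psi(X),u_0)+u_1$, and taking the maximum over $w$ gives $\Psi(T(X))\leq\max(\Psi(X),u_0)+u_1\leq\Psi(X)+u_0+u_1$, so that $\Psi(T(X))-\Psi(X)\leq u_0+u_1$. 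This careful bookkeeping of $\partial(T(X))$, in particular the observation that the only new boundary points introduced by the discontinuities of $T$ are the fixed points $\delta_a,\nu_a$ of bounded complexity, is the one delicate step; everything else is routine given subadditivity.

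To obtain the reverse inequality I apply the identical argument to $T^{-1}$, which is again an $s$-interval exchange transformation defined over $\Z[\sqrt{d}]$, with translation values $-\alpha_a$ and separation points the left boundaries of the $J_a$, all in $\Z[\sqrt{d}]$; let $u_0'$ bound their complexities. Evaluating the resulting inequality at $T(X)$ and using $T^{-1}(T(X)) = X$ gives $\Psi(X)-\Psi(T(X))\leq u_0'+u_1$. Setting $u = \max(u_0,u_0')+u_1$ then produces $|\Psi(T(X))-\Psi(X)|\leq u$ for every $X\in\mathcal{A}([\ell,r[)$, and simultaneously $u\geq u_1\geq\Psi(T(z)-z)$, $u\geq u_0\geq\Psi(\gamma)$ and $u\geq u_0+u_1\geq\Psi(T(\gamma))$, which is exactly the asserted statement.
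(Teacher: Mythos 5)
Your proof is correct, but there is nothing in the paper to compare it against: the paper does not prove this proposition at all, it simply imports it as \cite[Proposition 2.1]{BoshernitzanCarroll1997}. What you have written is the natural self-contained argument, presumably close in spirit to the one in that reference: subadditivity of $\Psi$ on $\Z[\sqrt{d}]$, the decomposition $T(X)=\bigcup_{a\in A}\bigl((X\cap I_a)+\alpha_a\bigr)$ which shows that every point of $\partial(T(X))$ is a translate by some $\alpha_a$ of a point of $\partial X\cup\partial I_a$ (so the only ``new'' boundary points come from the finitely many discontinuity data of $T$, all of bounded complexity), and the observation that $T^{-1}$ is again an interval exchange defined over $\Z[\sqrt{d}]$, which gives the reverse inequality by evaluating at $T(X)$. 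All the steps check out: the endpoints $\gamma_a,\mu_a,\delta_a,\nu_a,r$ indeed lie in $\Z[\sqrt{d}]$ under the paper's normalization, and $\max(\Psi(X),u_0)+u_1\leq\Psi(X)+u_0+u_1$ closes the estimate. One tiny bookkeeping remark: when you run the identical argument for $T^{-1}$, the relevant constant must bound not only the complexities of the left boundaries $\delta_a$ of the $J_a$ but also $\Psi(r)$, since $r$ is the right boundary of the last interval $J_a$; as your $u_0$ already contains $\Psi(r)$ and your final constant is $\max(u_0,u_0')+u_1$, this omission does not affect the conclusion. So your proposal supplies a complete proof of a statement the paper leaves to the literature, which is a strictly stronger contribution than what the text itself offers.
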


Clearly, by Proposition~\ref{pro:psiT-psi}, one also has $| \Psi(T^{-1}(X)) - \Psi(X) | \leq u$ for every $X \in \mathcal{A}([\ell,r[)$ and $\Psi(T^{-1}(z) - z) \leq u$ for every $z \in [\ell,r[$.

Although it is not necessary for our purposes, we can improve the approximation of the reduced complexity of a nonempty subset $X \in \mathcal{A}([\ell, r|)$ by the following proposition. This result, proved in~\cite[Proposition 2.4]{BoshernitzanCarroll1997}, determines a lower bound on $\Pi(X)$.

\begin{proposition}
\label{pro:piS}
Let $X \in \mathcal{A}([\ell,r[)$ be a subset composed of $n$ disjoints semi-intervals. Then $\Pi(X) > n / (4 \sqrt{d})$.
\end{proposition}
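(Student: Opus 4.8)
The plan is to exploit the arithmetic of the quadratic field through Galois conjugation. Write $X = \bigcup_{j=1}^n [\ell_j, r_j[$ with the semi-intervals pairwise disjoint and $\ell_j, r_j \in \Z[\sqrt{d}]$, and let $L_j = r_j - \ell_j > 0$ be the length of the $j$-th piece, so that $|X| = \sum_{j=1}^n L_j$. For $z = a + b\sqrt{d} \in \Z[\sqrt{d}]$ write $\bar z = a - b\sqrt{d}$ for its conjugate and $N(z) = z\bar z = a^2 - d b^2 \in \Z$ for its norm. The crucial observation is that $L_j \neq 0$ forces $N(L_j) \neq 0$: otherwise $\sqrt{d}$ would be rational, contradicting that $d \geq 2$ is square free. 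Hence $N(L_j)$ is a nonzero integer and $|N(L_j)| = L_j\,|\bar{L_j}| \geq 1$, which yields the lower bound $L_j \geq 1/|\bar{L_j}|$ on each length.

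Next I would bound the conjugate lengths $|\bar{L_j}|$ from above by the complexity $\Psi(X)$. For any boundary point $z = a + b\sqrt{d}$ the triangle inequality gives $|\bar z| \leq |a| + \sqrt{d}\,|b| \leq (1+\sqrt{d})\,\Psi(z) \leq (1+\sqrt{d})\,\Psi(X)$, since $|a|,|b| \leq \max(|a|,|b|) = \Psi(z)$ and every endpoint lies in $\partial(X)$. Applying this to $\ell_j$ and $r_j$ and using $\bar{L_j} = \bar{r_j} - \bar{\ell_j}$ gives $|\bar{L_j}| \leq 2(1+\sqrt{d})\,\Psi(X)$. Combining with the previous step, each length satisfies $L_j \geq \bigl(2(1+\sqrt{d})\,\Psi(X)\bigr)^{-1}$; note $\Psi(X) \geq 1$ because at least one endpoint is a nonzero element of $\Z[\sqrt{d}]$, so the division is legitimate.

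Summing over the $n$ disjoint pieces then gives $|X| \geq n / \bigl(2(1+\sqrt{d})\,\Psi(X)\bigr)$, and multiplying by $\Psi(X)$ produces $\Pi(X) = |X|\,\Psi(X) \geq n / \bigl(2(1+\sqrt{d})\bigr)$. Finally, the elementary comparison $2(1+\sqrt{d}) < 4\sqrt{d}$, valid precisely when $\sqrt{d} > 1$ and hence for every square free $d \geq 2$, upgrades this to the strict bound $\Pi(X) > n/(4\sqrt{d})$.

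I expect the only real obstacle to be conceptual rather than computational: recognizing that the length of each semi-interval, although an irrational positive real, carries an integral norm whose non-vanishing (the irrationality of $\sqrt{d}$) forces its conjugate to be large enough to control the length from below. Once this link between the geometric length and the arithmetic conjugate is in place, the remaining estimates are routine applications of the triangle inequality, and the precise constant $4\sqrt{d}$ emerges from the crude but sufficient bound $|\bar z| \leq (1+\sqrt{d})\,\Psi(z)$ together with the comparison above.
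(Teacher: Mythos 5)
Your proof is correct, and it is essentially \emph{the} argument for this result: the paper itself does not prove the proposition but quotes it from Boshernitzan--Carroll (Proposition 2.4 of that reference), and the proof there is the same Galois-conjugation argument you give --- the norm of a nonzero element of $\Z[\sqrt{d}]$ is a nonzero integer, hence $L_j \geq 1/|\bar{L_j}|$, and $|\bar{L_j}|$ is controlled by $\Psi(X)$ via the triangle inequality. One point deserves care, though. Your step ``every endpoint lies in $\partial(X)$'' uses more than pairwise disjointness: if two of the semi-intervals abut, their common endpoint is interior to $X$, not a boundary point, and under that liberal reading the statement itself is false (subdivide $[0,1[$ into $n > 4\sqrt{d}$ disjoint pieces with endpoints in the dense set $\Z[\sqrt{d}]$; then $\partial(X)=\{0,1\}$, so $\Pi(X) = 1 < n/(4\sqrt{d})$). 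So the $n$ semi-intervals must be understood as pairwise non-adjacent, i.e.\ as the connected components of $X$ --- which is the intended reading in the cited source. With that reading your identification of the boundary is valid, the estimates go through, and your constant $2(1+\sqrt{d}) < 4\sqrt{d}$ in fact gives a slightly sharper bound than the one stated.
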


\subsection{Return times}
\label{subsec:return}

Let $T$ be an interval exchange transformation.
For a subset $X \in \mathcal{A}([\ell, r[)$ we define the \emph{maximal positive return time} and \emph{maximal negative return time} for $T$ on $X$ by the functions
$$
\textstyle{\rho^+(X) = \min \left\{ n \geq 1 \, | \, T^n(X) \subset \bigcup_{i = 0}^{n-1} T^i(X) \right\},}
$$
and
$$
\textstyle{\rho^-(X) = \min \left\{ m \geq 1 \, | \, T^m(X) \subset \bigcup_{i = 0}^{m-1} T^{-i}(X) \right\}.}
$$

We also define the \emph{minimal positive return time} and the \emph{minimal negative return time} as
$$
\sigma^+(X) = \min \left\{ n \geq 1 \, | \, T^n(X) \cap X \neq \emptyset \right\},
$$
and
$$
\sigma^-(X) = \min \left\{ m \geq 1 \, | \, T^{-m}(X) \cap X \neq \emptyset \right\}.
$$
If $T$ is minimal, it is clear that for every $X \in \mathcal{A}([\ell, r[)$, one has
$$
\textstyle{[\ell, r[ \, = \bigcup_{i=0}^{\rho^+(X)-1} T^i(X) = \bigcup_{i=0}^{\rho^-(X)-1} T^{-i}(X).}
$$

Note that when $J$ is a semi-interval, we have $\rho^+(J) = \max_{z \in J} \rho_{J,T}^+(z)$ and $\sigma^+(J) = \min_{z \in J} \rho_{J,T}^+(z)$.
Symmetrically $\rho^-(J) = \max_{z \in J} \rho_{J,T}^-(z) + 1$ and $\sigma^-(J) = \min_{z \in J} \rho_{J,T}^-(z) + 1$.

Let $\zeta, \eta$ be two functions. We write $\zeta \in O(\eta)$ if there exists a constant $C$ such that $|\zeta| \leq C |\eta|$. We write $\zeta \in \Theta(\eta)$ if one has both $\zeta \in O(\eta)$ and $\eta \in O(\zeta)$.
Note that $\Theta$ is an equivalence relation, that is $\zeta \in \Theta(\eta) \Leftrightarrow \eta \in \Theta(\zeta)$.

Boshernitzan and Carroll give in~\cite{BoshernitzanCarroll1997} two upper bounds for $\rho^+(X)$ and $\sigma^+(X)$ for a subset $X$ (Theorems 2.5 and 2.6 respectively) and a more precise estimation when the subset is a semi-interval (Theorem 2.8).
Some slight modifications of the proofs can be made so that the results hold also for $\rho^-$ and $\sigma^-$.
We summarize these estimates in the following theorem.

\begin{theorem}
\label{theo:sim}
For every $X \in \mathcal{A}([\ell,r[)$ one has
$ \rho^+(X), \rho^-(X) \in O(\Psi(X))$
and
$\sigma^+(X), \sigma^-(X) \in O\left( 1 / |X| \right)$.
Moreover, if $T$ is minimal and $J$ is a semi-interval, then
$\textstyle{\rho^+(J) \in \Theta\left(\rho^-(J)\right) = \Theta\left(\sigma^+(J)\right) = \Theta\left(\sigma^-(J)\right) = \Theta\left( 1 /|J| \right)}$.
\end{theorem}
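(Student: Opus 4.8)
The plan is to treat the ``$+$'' quantities $\rho^+$ and $\sigma^+$ as essentially known from \cite{BoshernitzanCarroll1997} and to obtain the ``$-$'' quantities by passing to the inverse transformation. First I would record that $T^{-1}$ is again an $s$-interval exchange transformation on $[\ell,r[$ defined over the same ring $\Z[\sqrt{d}]$ (the class of $s$-interval exchange transformations is closed under inverses), that $T^{-1}$ is minimal exactly when $T$ is, and that $(T^{-1})^n=T^{-n}$. Directly from the definitions this gives $\rho^-(X)=\rho^+(X)$ and $\sigma^-(X)=\sigma^+(X)$ \emph{computed relative to $T^{-1}$}, while $\Psi(X)$ and $|X|$ depend only on $X$. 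Hence applying Theorems~2.5 and~2.6 of~\cite{BoshernitzanCarroll1997} to $T^{-1}$ yields $\rho^-(X)\in O(\Psi(X))$ and $\sigma^-(X)\in O(1/|X|)$, which together with the cited bounds for $\rho^+$ and $\sigma^+$ proves the first assertion.

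For the second assertion, since $\Theta$ is an equivalence relation it suffices to show that each of $\rho^+(J),\rho^-(J),\sigma^+(J),\sigma^-(J)$ lies in $\Theta(1/|J|)$; the chain of equalities then follows by transitivity. For $\rho^+(J)$ this is Theorem~2.8 of~\cite{BoshernitzanCarroll1997}, and applying that result to $T^{-1}$ gives $\rho^-(J)\in\Theta(1/|J|)$. The upper bound $\sigma^\pm(J)\in O(1/|J|)$ I would prove directly: for $0\le i<j\le\sigma^+(J)-1$ one has $T^i(J)\cap T^j(J)=T^i\bigl(J\cap T^{\,j-i}(J)\bigr)=\emptyset$ by minimality of $\sigma^+(J)$, so $J,T(J),\dots,T^{\sigma^+(J)-1}(J)$ are pairwise disjoint, whence $\sigma^+(J)\,|J|\le r-\ell$; the same argument for $T^{-1}$ bounds $\sigma^-(J)$.

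The remaining, genuinely arithmetic point is the lower bound $1/|J|\in O(\sigma^+(J))$, and this is the step I expect to carry the real content. Since $\sigma^+(J)=\min_{z\in J}\rho^+_{J,T}(z)$, it is enough to bound $\rho^+_{J,T}(z)$ from below. Writing $n=\rho^+_{J,T}(z)$, both $z$ and $T^n(z)$ lie in $J=[u,v[$, so $w:=T^n(z)-z$ satisfies $0<|w|<|J|$, the strict positivity because a minimal interval exchange transformation has no periodic point. Moreover $w=\sum_{i=0}^{n-1}\bigl(T^{i+1}(z)-T^i(z)\bigr)$ is a sum of $n$ translation values, so $w\in\Z[\sqrt{d}]$ and, by Proposition~\ref{pro:psiT-psi} and the subadditivity of $\Psi$, one has $\Psi(w)\le n\,u$ with $u=u(T)$. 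Writing $w=a+b\sqrt{d}$ with $(a,b)\ne(0,0)$, the norm $a^2-b^2d$ is a nonzero integer, so $|w|\,|a-b\sqrt{d}|=|a^2-b^2d|\ge 1$ and the key inequality I would establish is
\[
|w|\;\ge\;\frac{1}{|a-b\sqrt{d}|}\;\ge\;\frac{1}{(1+\sqrt{d})\,\Psi(w)}\;\ge\;\frac{1}{(1+\sqrt{d})\,u\,n}.
\]
Combined with $|w|<|J|$ this gives $n>1/\bigl((1+\sqrt{d})\,u\,|J|\bigr)$, hence $\sigma^+(J)\ge 1/\bigl((1+\sqrt{d})\,u\,|J|\bigr)$ and $1/|J|\in O(\sigma^+(J))$. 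Running the same argument for $T^{-1}$ yields $1/|J|\in O(\sigma^-(J))$, completing all four $\Theta$-equivalences.

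The main obstacle is precisely this last estimate: the upper bounds and the reductions to $T^{-1}$ are essentially formal once one checks that $T^{-1}$ is an interval exchange over the same field, whereas $1/|J|\in O(\sigma^+(J))$ genuinely requires the field to be quadratic. The mechanism is that a nonzero element of $\Z[\sqrt{d}]$ of bounded complexity cannot be arbitrarily close to $0$, which is the same phenomenon underlying Lemma~\ref{lem:distance}; over a number field of higher degree this norm argument breaks down and the statement would fail in this form.
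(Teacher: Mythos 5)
Your proof is correct. Note first that the paper does not actually prove Theorem~\ref{theo:sim}: it cites Theorems 2.5, 2.6 and 2.8 of \cite{BoshernitzanCarroll1997} for the ``$+$'' quantities and merely asserts that ``slight modifications of the proofs'' yield $\rho^-$ and $\sigma^-$. Your route differs in two ways, both improvements in rigor. First, instead of modifying proofs you apply the cited theorems to $T^{-1}$; this is legitimate and clean, since (as the paper itself observes) $s$-interval exchange transformations are closed under inversion, $T^{-1}$ exchanges the semi-intervals $J_a$ and hence is defined over the same ring $\Z[\sqrt{d}]$, it is minimal exactly when $T$ is, and $\mathcal{A}([\ell,r[)$, $\Psi$ and the Lebesgue measure do not depend on the transformation, so the implied constants remain constants. (Incidentally, the paper's displayed definition of $\rho^-$ contains a typo --- $T^m(X)$ should read $T^{-m}(X)$ --- and your reading is the intended one, consistent with the identity $\rho^-(J)=\max_{z\in J}\rho^-_{J,T}(z)+1$; for $\sigma^-$ the reduction is not even needed, since $T^{-m}(X)\cap X\neq\emptyset$ iff $T^m(X)\cap X\neq\emptyset$, so $\sigma^-(X)=\sigma^+(X)$ identically.) Second, you prove the two $\sigma(J)$-estimates from scratch: the disjointness-plus-measure-preservation argument giving $\sigma^{\pm}(J)\,|J|\leq r-\ell$ is correct and needs neither minimality nor the arithmetic hypothesis, while the lower bound via $|w|\,|a-b\sqrt{d}|=|a^2-db^2|\geq 1$, the subadditivity of $\Psi$ and Proposition~\ref{pro:psiT-psi} is exactly the Diophantine mechanism underlying Theorem 2.8 of \cite{BoshernitzanCarroll1997}, the same phenomenon the paper exploits in Lemma~\ref{lem:distance} and Corollary~\ref{cor:oM}. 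The net comparison: the paper delegates the whole statement to the reference, whereas your argument needs from it only the subset bounds (Theorems 2.5, 2.6) and the one genuinely hard estimate $\rho^+(J)\in O(1/|J|)$ from Theorem 2.8 --- the single place where minimality and the quadratic hypothesis must be combined and which you rightly do not attempt to reprove --- at the cost of carrying explicit constants such as $(1+\sqrt{d})\,u(T)$.
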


An immediate corollary of Theorem~\ref{theo:sim} is the following
\begin{corollary}
\label{cor:oM}
Let $T$ be minimal and assume that
$$
\{ T^i(z) \, | \, -m+1 \leq i \leq n-1 \} \cap J = \emptyset
$$
for some point $z \in [\ell,r[$, some semi-interval $J \subset [\ell,r[$ and some integers $m, n~\geq~1$.
Then $|J| \in O\left( 1 / \max \{ m,n \} \right)$.
\end{corollary}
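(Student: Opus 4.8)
The plan is to convert the hypothesis about a single long orbit segment of $z$ avoiding $J$ into lower bounds on the return times $\rho^+(J)$ and $\rho^-(J)$, and then to invoke Theorem~\ref{theo:sim} to turn these lower bounds into the desired upper bound on $|J|$. First I would record what the hypothesis says: since $-m+1 \le 0 \le n-1$, the point $z = T^0(z)$ itself lies outside $J$, the forward points $T(z), \ldots, T^{n-1}(z)$ lie outside $J$, and the backward points $T^{-1}(z), \ldots, T^{-(m-1)}(z)$ lie outside $J$. Hence the first forward entry of $z$ into $J$ occurs at a step $\ge n$, and the first backward entry at a step $\ge m$.

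The key step uses minimality in the form of the two covering identities $[\ell, r[ = \bigcup_{i=0}^{\rho^+(J)-1} T^i(J) = \bigcup_{i=0}^{\rho^-(J)-1} T^{-i}(J)$ recorded just before Theorem~\ref{theo:sim}. Since $z$ lies in the first union, there is an index $0 \le i_0 \le \rho^+(J)-1$ with $T^{-i_0}(z) \in J$; as $z \notin J$ this forces $i_0 \ge 1$, and the backward-avoidance above then forces $i_0 \ge m$, whence $\rho^+(J) \ge m+1$. Symmetrically, writing $z$ in the second union yields an index $0 \le i_1 \le \rho^-(J)-1$ with $T^{i_1}(z) \in J$, and forward-avoidance gives $i_1 \ge n$, so $\rho^-(J) \ge n+1$. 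Note the deliberate crossing here: backward avoidance of $z$ controls the \emph{positive} return time $\rho^+$ (via the union of forward translates of $J$), while forward avoidance controls the \emph{negative} return time $\rho^-$ (via the union of backward translates).

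Finally, because $T$ is minimal and $J$ is a semi-interval, the ``moreover'' part of Theorem~\ref{theo:sim} gives $\rho^+(J), \rho^-(J) \in O(1/|J|)$; that is, there are constants $C, C'$ with $\rho^+(J)\,|J| \le C$ and $\rho^-(J)\,|J| \le C'$. Combining with the two lower bounds yields $|J| \le C/\rho^+(J) \le C/m$ and $|J| \le C'/\rho^-(J) \le C'/n$, so $|J| \le \max\{C,C'\}/\max\{m,n\}$, which is exactly $|J| \in O(1/\max\{m,n\})$.

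I expect the only genuinely delicate point to be getting the crossing of roles right — pairing backward avoidance of $z$ with the forward-translate covering (hence with $\rho^+$) and forward avoidance with the backward-translate covering (hence with $\rho^-$) — together with the bookkeeping check that $z \notin J$, so that the relevant indices $i_0, i_1$ are strictly positive and the avoidance bounds actually apply to them. Everything else is a direct application of the covering identities for a minimal transformation and the $O$-estimates already established in Theorem~\ref{theo:sim}.
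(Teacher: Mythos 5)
Your proof is correct and follows essentially the same route as the paper: both arguments convert the avoidance hypothesis into lower bounds on $\rho^+(J)$ and $\rho^-(J)$ via the minimality covering identities (with the same crossing of backward avoidance with $\rho^+$ and forward avoidance with $\rho^-$), and then apply the $\Theta\left(1/|J|\right)$ estimate of Theorem~\ref{theo:sim} to conclude $|J| \in O\left(1/\max\{m,n\}\right)$. Your version is merely a little more explicit about the indices $i_0, i_1$ and yields the marginally sharper bounds $\rho^\pm(J) \geq m+1, n+1$, which changes nothing in the conclusion.
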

\begin{proof}
By the hypothesis, $z \notin \bigcup_{i=0}^{n-1} T^{-i}(J)$, then we have $\rho^-(J) \geq n$.
By Theorem~\ref{theo:sim}, we obtain
$ |J| \in \Theta\left( 1 / \rho^-(J) \right) \subseteq O\left( 1 / n \right)$.
Symmetrically, since $\rho^+(J) \geq m$, one has $ |J| \in O\left( 1 / m \right)$.
Then
$\textstyle{|J| \in O\left( \min \left\{ 1 / m, 1 / n \right\} \right) = O\left( 1 / \max \{ m,n \} \right)}$.
\end{proof}

\subsection{Reduced complexity of admissible semi-intervals}
\label{subsec:main}

In order to obtain Theorem~\ref{theo:quadratic}, we prove some preliminary results concerning the reduced complexity of admissible semi-intervals.

Let $T$ be an $s$-interval exchange transformation.
Recall from Section~\ref{sec:ie} that we denote by $\Sep(T) = \{ \gamma_i \, | \, 0 \leq i \leq s-1 \}$ the set of separation points.
For every $n \geq 0$ define
$\mathcal{S}_n(T) = \bigcup_{i=0}^{n-1} T^{-i} \big(\Sep(T)\big)$
with the convention $\mathcal{S}_0 = \emptyset$.

Since $\Sep(T^{-1}) = T\big(\Sep(T)\big)$, one has $\mathcal{S}_n(T^{-1}) = T^{n-1}\big(\mathcal{S}_n(T)\big)$.

Given two integers $m,n \geq 1$, we can define $\mathcal{S}_{m,n} = \mathcal{S}_m(T) \cup \mathcal{S}_n(T^{-1})$.
An easy calculation shows that $\mathcal{S}_{m,n}(T) = \bigcup_{i=-m+1}^{n} T^i\big(\Sep(T)\big).$
Observe also that $\mathcal{S}_{m,n}(T) = T^{n}\big( \mathcal{S}_{m+n}(T)\big) = T^{-m+1}\big( \mathcal{S}_{m+n}(T)\big)$.

Denote by $\mathcal{V}_{m,n}(T)$ the family of semi-intervals whose endpoints are in $\mathcal{S}_{m,n}(T)$.
Put $\mathcal{V}(T) = \bigcup_{m,n \geq 0} \mathcal{V}_{m,n}(T)$.
Every admissible semi-interval belongs to $\mathcal{V}(T)$, while the converse is not true.

\begin{theorem}
\label{theo:pi}
$\Pi(J) \in \Theta(1)$ for every semi-interval $J$ admissible for $T$.
\end{theorem}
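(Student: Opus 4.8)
The plan is to establish the two halves of $\Pi(J)\in\Theta(1)$ separately, namely $\Pi(J)\in O(1)$ and $1\in O(\Pi(J))$. The lower bound is immediate: an admissible semi-interval $J$ is a single semi-interval, so Proposition~\ref{pro:piS} applied with $n=1$ gives $\Pi(J)>1/(4\sqrt d)$, a positive constant independent of $J$. Hence the whole difficulty lies in the uniform upper bound $\Pi(J)=|J|\,\Psi(J)\in O(1)$, which I will obtain by proving $\Psi(J)\in O(1/|J|)$.

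To estimate $\Psi(J)$ I would first write the two boundary points of $J=[u,v[$ as orbit points of separation points. Since $J$ is admissible, $u\in\Div(J,T)$ and $v\in\Div(J,T)\cup\{r\}$, so $u=T^g(\gamma_i)$ with $g\in E_{J,T}(\gamma_i)$ and, unless $v=r$, $v=T^d(\gamma_j)$ with $d\in E_{J,T}(\gamma_j)$. Using the triangle inequality for $\Psi$ together with Proposition~\ref{pro:psiT-psi} (which bounds $\Psi(\gamma_i)$ and the value of $\Psi$ on each one-step displacement $T^{k+1}(z)-T^k(z)$ by a single constant $C=C(T)$, namely the constant $u(T)$ of that proposition), one gets $\Psi(u)\le(|g|+1)C$ and likewise $\Psi(v)\le(|d|+1)C$, while $\Psi(r)$ is a fixed constant. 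Thus $\Psi(J)=\max(\Psi(u),\Psi(v))\le(\max(|g|,|d|)+1)C$, and it remains only to show $|g|,|d|\in O(1/|J|)$.

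The heart of the argument is therefore to bound the exponent $g$ (and symmetrically $d$). Writing $p=\rho^+_{J,T}(\gamma_i)$ and $q=\rho^-_{J,T}(\gamma_i)$, the definition of $E_{J,T}(\gamma_i)$ gives $-q\le g<p$; moreover $g\ne -q$, since the neighbor $T^{-q}(\gamma_i)$ lies in the open interval $]u,v[$ whereas $u=T^g(\gamma_i)$ is a boundary point, so in fact $-q<g<p$ and $|g|\le\max(p,q)$. By regularity the orbit of $\gamma_i$ is injective, so among the neighbors $T^k(\gamma_i)$ with $-q<k<p$ — all of which avoid $]u,v[$ — the only one meeting $J=[u,v[$ is $u$ itself, at index $g$ (the point $v$, even if it lies on this orbit, is excluded because $J$ is right-open). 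Splitting the avoiding segment at the index $g$ produces two orbit segments, of lengths $p-1-g$ and $g-1+q$, each entirely disjoint from $J$; applying Corollary~\ref{cor:oM} to each of them bounds these lengths by $O(1/|J|)$, and summing gives $p+q\in O(1/|J|)$, whence $|g|\le\max(p,q)\in O(1/|J|)$. The same reasoning applied to $\gamma_j$ yields $|d|\in O(1/|J|)$.

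Combining the three steps, $\Psi(J)\in O(1/|J|)$, so $\Pi(J)=|J|\,\Psi(J)\in O(1)$, which together with the lower bound proves $\Pi(J)\in\Theta(1)$. I expect the main obstacle to be precisely the middle step of bounding $g$ and $d$: because the separation point $\gamma_i$ need not belong to $J$, its return times $p,q$ are \emph{not} directly controlled by the maximal return times $\rho^\pm(J)$ of Theorem~\ref{theo:sim}. The device that resolves this is the observation that the relevant orbit segment of $\gamma_i$ meets $J$ only at the single boundary point $u$, so that splitting at that index leaves two segments disjoint from $J$ on which Corollary~\ref{cor:oM} (itself a consequence of Theorem~\ref{theo:sim}) applies cleanly.
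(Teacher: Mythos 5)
Your proof is correct and follows essentially the same route as the paper's: both arguments bound $\Psi(J)$ linearly in the orbit depth of the endpoints using Proposition~\ref{pro:psiT-psi}, bound that depth by $O(1/|J|)$ via Corollary~\ref{cor:oM} applied to orbit segments of separation points that avoid $J$, and obtain the lower bound from Proposition~\ref{pro:piS} with $n=1$. The only difference is bookkeeping: the paper takes $m,n$ minimal with $J \in \mathcal{V}_{m,n}(T)$ and uses a single one-sided avoiding segment ending at an endpoint of $J$, whereas your device of splitting the neighbor segment $\{T^k(\gamma_i) \mid -q < k < p\}$ at the index $g$ treats explicitly the fact that this segment meets $J$ exactly at the boundary point $u$ --- a detail the paper's claim $\{T^j(\gamma) \mid 1 \leq j \leq M\} \cap J = \emptyset$ glosses over, since $T^M(\gamma)$ itself lies in $J$.
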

\begin{proof}
Let $m, n$ be the two minimal integers such that $J = [t,w[ \, \in \mathcal{V}_{m,n}(T)$.
Then $t,w \in \{T^m(\gamma_i) \, | \, 1 \leq i \leq s \} \cup \{T^{-n}(\gamma_i) \, | \, 1 \leq i \leq s \}$.
Suppose, for instance, $t = T^M(\gamma)$, with $M = \max \{m,n\}$ and $\gamma$ a separation point.
The other cases (namely, $t = T^{-M}(\gamma)$, $w=T^{M}(\gamma)$ or $w=T^{-M}(\gamma)$) are proved similarly.

The only semi-interval in $\mathcal{V}_{0,0}(T)$ is $[\ell,r[$ and clearly in this case the theorem is verified.

Suppose then that $J \in \mathcal{V}_{m,n}(T)$ for some nonnegative integers $m,n$ with $m+n > 0$.
We have
$\Psi(J) = \max \{ \Psi(t), \Psi(w) \} \leq Mu$ where $u$ is the constant introduced in Proposition~\ref{pro:psiT-psi}.
Moreover, by the definition of admissibility one has
$ \{ T^j(\gamma) \, | \, 1 \leq j \leq M \} \cap J = \emptyset$.
Thus, by Corollary~\ref{cor:oM} we have $|J| \in O( 1 /M )$.
Then
$ \Pi(J) = |J| \ \Psi(J) \in O(1)$.
By Proposition~\ref{pro:piS} we have $\Pi(J) > 1 / (4 \sqrt{d})$.
This concludes the proof.
\end{proof}

Denote by $\mathcal{U}_{m,n}(T)$ the family of semi-intervals partitioned by $\mathcal{S}_{m,n}(T)$.
Clearly $\mathcal{V}_{m,n}(T)$ contains $\mathcal{U}_{m,n}(T)$.
Indeed every semi-interval $J \in \mathcal{V}_{m,n}(T)$ is a finite union of contiguous semi-intervals belonging to $\mathcal{U}_{m,n}(T)$.

Note that $\mathcal{U}_{m,0}(T)$ is the family of semi-intervals exchanged by $T^m$, while $\mathcal{U}_{0,n}(T)$ is the family of semi-intervals exchanged by $T^{-n}$.

Put $\mathcal{U}(T) = \bigcup_{m,n \geq 0} \mathcal{U}_{m,n}(T)$.
Using Theorem~\ref{theo:pi} we easily deduce the following corollary, which is a generalization of Theorem 2.11 in \cite{BoshernitzanCarroll1997}.

\begin{corollary}
\label{cor:pi}
$\Pi(J) \in \Theta(1)$ for every semi-interval $J \in \mathcal{U}(T)$.
\end{corollary}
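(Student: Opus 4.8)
The plan is to deduce the corollary from Theorem~\ref{theo:pi} by proving the inclusion $\mathcal{U}(T) \subseteq \{J \mid J \text{ admissible for } T\}$. Note first that the lower bound is free: every $J \in \mathcal{U}(T)$ is a single semi-interval, so Proposition~\ref{pro:piS} gives $\Pi(J) > 1/(4\sqrt{d})$. Hence the only content is the upper bound $\Pi(J) \in O(1)$, and this is exactly what Theorem~\ref{theo:pi} asserts for admissible semi-intervals. So once I show that every element of $\mathcal{U}(T)$ is admissible, the statement $\Pi(J) \in \Theta(1)$ follows immediately, with no new complexity estimate required.

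To prove the inclusion, I would fix $m, n \geq 0$ with $J = [t,w[ \,\in \mathcal{U}_{m,n}(T)$ and unwind the definitions. By definition $J$ is one of the atoms into which the finite set $\mathcal{S}_{m,n}(T) = \bigcup_{i=-m+1}^{n} T^i\big(\Sep(T)\big)$ partitions $[\ell,r[$; thus its endpoints lie in $\mathcal{S}_{m,n}(T) \cup \{\ell,r\}$ and its interior $]t,w[$ contains no point of $\mathcal{S}_{m,n}(T)$. The key consequence is that for \emph{every} separation point $\gamma$ one has $T^i(\gamma) \notin\ ]t,w[$ for all $-m+1 \leq i \leq n$. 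Reading this for the forward indices $1 \leq i \leq n$ yields $\rho^+_{J,T}(\gamma) > n$, and reading it for the backward indices (i.e. $T^{-i}(\gamma)$ with $0 \leq i \leq m-1$) yields $\rho^-_{J,T}(\gamma) \geq m$.

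Next I would check that each endpoint is a division point. Writing $t = T^{k_0}(\gamma)$ with $\gamma \in \Sep(T)$ and $-m+1 \leq k_0 \leq n$, the two inequalities combine into
$$
-\rho^-_{J,T}(\gamma) \;\leq\; -m \;<\; -m+1 \;\leq\; k_0 \;\leq\; n \;<\; \rho^+_{J,T}(\gamma),
$$
so $k_0 \in E_{J,T}(\gamma)$ and therefore $t = T^{k_0}(\gamma) \in N_{J,T}(\gamma) \subseteq \Div(J,T)$. The same computation applied to $w$ gives $w \in \Div(J,T)$, except when $w = r$, in which case $w \in \{r\}$. Hence $t, w \in \Div(J,T) \cup \{r\}$, which is precisely the definition of $J$ being admissible for $T$, and Theorem~\ref{theo:pi} concludes the proof.

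The main obstacle is the bookkeeping of the second and third paragraphs: one must translate the purely combinatorial ``atom'' condition (interior of $J$ disjoint from $\mathcal{S}_{m,n}(T)$) into the return-time bounds $\rho^+_{J,T}(\gamma) > n$ and $\rho^-_{J,T}(\gamma) \geq m$ with the index ranges correctly aligned, and then dispatch the degenerate cases $m = 0$, $n = 0$, $t = \ell$ and $w = r$. For instance, when $t = \ell$ one uses directly that $\ell \in \Sep(T)$ and that $0 \in E_{J,T}(\ell)$ holds unconditionally (since $\rho^+ \geq 1$ and $\rho^- \geq 0$), so the boundary endpoints never cause trouble. None of these cases is hard, but each must be verified so that the membership $k_0 \in E_{J,T}(\gamma)$ is valid in every situation; after that the corollary reduces entirely to Theorem~\ref{theo:pi}.
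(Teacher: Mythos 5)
Your proof is correct and takes essentially the same route as the paper: the paper gives no written argument beyond ``Using Theorem~\ref{theo:pi} we easily deduce the following corollary,'' and your proposal supplies precisely the missing step, namely that every atom of the partition of $[\ell,r[$ by $\mathcal{S}_{m,n}(T)$ is admissible, after which Theorem~\ref{theo:pi} applies as a black box. Your bookkeeping is sound: the atom property gives $\rho^+_{J,T}(\gamma) > n$ and $\rho^-_{J,T}(\gamma) \geq m$ for every separation point $\gamma$, hence $[-m+1,n] \subseteq E_{J,T}(\gamma)$, so both endpoints lie in $\Div(J,T) \cup \{r\}$, with the degenerate cases ($t=\ell$, $w=r$, $m=0$ or $n=0$) correctly dispatched.
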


We are now able to prove Theorem~\ref{theo:quadratic}.

\begin{proofof}{ of Theorem~\ref{theo:quadratic}}
By Theorem~\ref{theo:birauzy2}, every admissible semi-interval can be obtained by a finite sequence $\chi$ of right and left Rauzy inductions.
Thus we can enumerate the family of all admissible semi-intervals.
The conclusion easily follows from Theorem~\ref{theo:bosh} and Theorem~\ref{theo:pi}.
\end{proofof}

An immediate corollary of Theorem~\ref{theo:quadratic} is the following.

\begin{corollary}
\label{cor:graph}
Let $T$ be a regular interval exchange transformation defined over a quadratic field. Then the induction graph $\mathcal{G}(T)$ and the modified induction graph $\widetilde{\mathcal{G}}(T)$ are finite.
\end{corollary}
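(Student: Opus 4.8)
The plan is to obtain this as an essentially immediate consequence of Theorem~\ref{theo:quadratic}, together with the identification of the vertices of the two graphs supplied by Theorem~\ref{theo:birauzy2}. First I would recall that, by definition, the vertices of $\mathcal{G}(T)$ are exactly the equivalence classes $[\chi(T)]$ for $\chi \in \{\varphi,\psi\}^*$. By the forward direction of Theorem~\ref{theo:birauzy2}, each such $\chi(T)$ is the transformation induced by $T$ on its domain, which is an admissible semi-interval; conversely, every transformation induced by $T$ on an admissible semi-interval arises in this form. Hence the vertex set of $\mathcal{G}(T)$ coincides with the set of equivalence classes of the family of transformations induced by $T$ over admissible semi-intervals, and Theorem~\ref{theo:quadratic} asserts precisely that this family is finite up to equivalence. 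This yields finiteness of the vertex set.

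Next I would bound the number of edges. By construction each vertex $[S]$ has at most one outgoing edge labelled $\psi$ (to $[\psi(S)]$) and at most one labelled $\varphi$ (to $[\varphi(S)]$), so its out-degree is at most $2$. Since Rauzy induction is compatible with rescaling and translation, the classes $[\psi(S)]$ and $[\varphi(S)]$ depend only on $[S]$, so these edges are well defined. With finitely many vertices and out-degree bounded by $2$, the edge set is finite, and therefore $\mathcal{G}(T)$ is finite.

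Finally, for $\widetilde{\mathcal{G}}(T)$ I would observe that similarity is coarser than equivalence: each similarity class $\langle S\rangle$ is the union of the two equivalence classes $[S]$ and $[\widetilde{S}]$. Consequently the number of similarity classes is at most the number of equivalence classes, so the vertex set of $\widetilde{\mathcal{G}}(T)$ is again finite, and the same out-degree argument bounds its edges. There is no genuine obstacle here: the whole content of the statement is carried by Theorem~\ref{theo:quadratic}, and the only points requiring (routine) care are the identification of the vertices with induced transformations through Theorem~\ref{theo:birauzy2}, and the elementary observation that bounded out-degree turns finitely many vertices into finitely many edges.
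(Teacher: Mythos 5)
Your proposal is correct and follows exactly the route the paper intends: the paper gives no written proof, declaring the corollary an immediate consequence of Theorem~\ref{theo:quadratic}, and your argument simply fills in the routine details (identifying the vertices with the induced transformations on admissible semi-intervals via Theorem~\ref{theo:birauzy2}, and noting that bounded out-degree plus finitely many vertices gives finitely many edges). Nothing in your write-up deviates from or adds to what the paper's ``immediate'' claim presupposes.
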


\begin{example}
Let $T$ be the regular interval exchange transformation of Example~\ref{ex:2alpha}.
The modified induction graph $\widetilde{\mathcal{G}}(T)$ is represented in Figure~\ref{fig:graph}.
The transformation $T$ belongs to the similarity class $\langle T_1 \rangle$ as well as transformations $S$ of Example~\ref{ex:simphi6T} and $U$ of Example~\ref{ex:phi6T}.
The transformations $\psi(T)$ and $\psi^2(T)$ of Example~\ref{ex:induced2} belongs respectively to classes $\langle T_2 \rangle$ and $\langle T_4 \rangle$, while the two last transformations of Figure~\ref{fig:illustrate}, namely $\varphi \psi(T)$ and $\varphi^2 \psi(T)$, belongs respectively to $\langle T_5 \rangle$ and $\langle T_7 \rangle$.
Finally, the left Rauzy induction sequence from $T$ to $U = \varphi^6 (T)$ corresponds to the loop $\langle T_1 \rangle \to \langle T_3 \rangle \to \langle T_4 \rangle \to \langle T_6 \rangle \to \langle T_7 \rangle \to \langle T_8 \rangle \to \langle T_1 \rangle$ in $\widetilde{\mathcal{G}}(T)$.

\begin{figure}[hbt]
\centering
\gasset{Nframe=y}
\begin{picture}(100,70)(-50,-35)

\node(1)(-50,0){$\langle T_1 \rangle$}
\node(2)(-10,15){$\langle T_3 \rangle$}
\node(3)(-10,-15){$\langle T_2 \rangle$}
\node(4)(20,0){$\langle T_4 \rangle$}
\node(5)(5,0){$\langle T_5 \rangle$}
\node(6)(20,15){}
\node(7)(20,-15){$\langle T_6 \rangle$}
\node(8)(-10,0){$\langle T_7 \rangle$}
\node(9)(-30,0){}
\node(10)(-30,15){}
\node(11)(-30,-15){$\langle T_8 \rangle$}
\node(13)(5,25){}
\node(14)(5,-25){}
\node(15)(50,0){}
\node(16)(35,15){}
\node(17)(35,-15){}
\node(18)(35,0){}

\drawedge[curvedepth=15,linewidth=0.4](1,2){}
\drawedge[curvedepth=-15](1,3){}
\drawedge[curvedepth=1,linewidth=0.4](2,4){}
\drawedge[curvedepth=0](2,5){}
\drawedge[curvedepth=-1](3,4){}
\drawedge[curvedepth=0](3,5){}
\drawedge[curvedepth=0](4,6){}
\drawedge[curvedepth=0,linewidth=0.4](4,7){}
\drawedge[curvedepth=0](5,8){}
\drawedge[curvedepth=-3](6,8){}
\drawedge[curvedepth=-3](6,9){}
\drawedge[curvedepth=3,linewidth=0.4](7,8){}
\drawedge[curvedepth=3](7,9){}
\drawedge[curvedepth=0](8,10){}
\drawedge[curvedepth=0,linewidth=0.4](8,11){}
\drawedge[curvedepth=0](9,10){}
\drawedge[curvedepth=0](9,11){}
\drawedge[curvedepth=0](10,1){}
\drawedge[curvedepth=10](10,13){}
\drawedge[curvedepth=0,linewidth=0.4](11,1){}
\drawedge[curvedepth=-10](11,14){}
\drawedge[curvedepth=0](13,2){}
\drawedge[curvedepth=15](13,15){}
\drawedge[curvedepth=0](14,3){}
\drawedge[curvedepth=-15](14,15){}
\drawedge[curvedepth=0](15,16){}
\drawedge[curvedepth=0](15,17){}
\drawedge[curvedepth=0](16,4){}
\drawedge[curvedepth=0](16,18){}
\drawedge[curvedepth=0](17,4){}
\drawedge[curvedepth=0](17,18){}
\drawedge[curvedepth=0](18,6){}
\drawedge[curvedepth=0](18,7){}

\end{picture}
\caption{Modified induction graph of the transformation $T$.}
\label{fig:graph}
\end{figure}
\end{example}

\subsection{Primitive morphic sets}
\label{subsec:morphic}

In this section we show an important property of interval exchange transformations defined over a quadratic field, namely that the related interval exchange sets are primitive morphic.
Let prove first the following result.

\begin{proposition}
\label{pro:primitive}
Let $T, \chi(T)$ be two equivalent regular interval exchange transformations with $\chi \in \left\{ \varphi, \psi \right\}^*$.
There exists a primitive morphism $\theta$ and a point $z \in D(T)$ such that the natural coding of $T$ relative to $z$ is a fixed point of $\theta$.
\end{proposition}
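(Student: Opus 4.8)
The plan is to show that the morphism $\theta$ produced by the two-sided induction is primitive and that $\Sigma_T(z_0)$ is one of its fixed points, where $z_0$ is the self-similar point attached to the pair $(T,\chi(T))$. First I would apply Theorem~\ref{theo:inductionbi} to $S=\chi(T)$: there is an endomorphism $\theta$ of $A^*$ (the composition of the elementary morphisms $\theta_1,\theta_2$, hence non-erasing) such that $\Sigma_T(z)=\theta(\Sigma_S(z))$ for every $z\in I=D(S)$. Since $T$ and $S$ are equivalent, the remark following Theorem~\ref{theo:inductionbi} furnishes a point $z_0\in D(S)\subseteq D(T)$, fixed by the similarity carrying $D(S)$ onto $D(T)$, with $\Sigma_S(z_0)=\Sigma_T(z_0)$. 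Substituting yields $\Sigma_T(z_0)=\theta(\Sigma_T(z_0))$, so $x:=\Sigma_T(z_0)$ is a fixed point of $\theta$, equal to $\theta^\omega(b_0)$ for its first letter $b_0$. Everything then reduces to proving that $\theta$ is primitive in the sense defined above.

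The next step is to control the lengths $|\theta^n(a)|$. Because $T$ and $S$ are equivalent, iterating $\chi$ reproduces at each stage the same sequence of elementary inductions, so one checks that $\Sigma_T(z)=\theta^n(\Sigma_{\chi^n(T)}(z))$ for $z$ in $I_n:=D(\chi^n(T))$. Consequently, for $z$ in the base piece of $I_n$ labelled $a$, the word $\theta^n(a)$ is exactly the $T$-coding of the orbit of $z$ up to its first return to $I_n$, so $|\theta^n(a)|$ equals that first return time. Assuming $\chi\neq\varepsilon$ (otherwise the statement is empty), the sequence $(\chi^n(T))_n$ arises by iterating single $\varphi$- and $\psi$-steps, whence $|I_n|\to 0$ by Theorem~\ref{theo:length}. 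As $T$ is minimal by Keane's Theorem~\ref{theo:keane}, Lemma~\ref{lem:distance} forces the return times to explode: any return $T^m(z)\in I_n$ with $z\in I_n$ satisfies $|T^m(z)-z|<|I_n|$, hence $m\geq N$ once $|I_n|$ is small enough. Therefore $\min_{a\in A}|\theta^n(a)|\to\infty$.

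Finally I would invoke uniform recurrence. By Proposition~\ref{pro:regularur} the set $F(T)$ is uniformly recurrent, so there is an integer $N_0$ such that every word of $F(T)$ of length at least $N_0$ contains every letter of $A$ (take the largest of the bounds given by uniform recurrence applied to each of the finitely many letters). By minimality every letter occurs in $x$, and from $x=\theta^k(x)=\theta^k(b_0)\theta^k(b_1)\cdots$ each block $\theta^k(a)$ is a factor of $x$, hence lies in $F(T)$. Choosing $k$ so large that $|\theta^k(a)|\geq N_0$ for all $a$, each $\theta^k(a)$ is a factor of length at least $N_0$ and therefore contains every letter $b$. This is exactly the assertion that $\theta$ is primitive, completing the argument.

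I expect the fixed-point part to be routine, since it is essentially delivered by the remark after Theorem~\ref{theo:inductionbi}; the real work is the primitivity statement. The subtle point there is that a single $\theta_1$ or $\theta_2$ lengthens only one letter, so primitivity is invisible at the level of $\theta$ itself. It emerges only from the combination of the geometric shrinking $|I_n|\to 0$ (which, through minimality, forces all the images $\theta^n(a)$ to grow) with the uniform recurrence of $F(T)$ (which forces sufficiently long factors to contain every letter). I would also take care to confirm the identity $\Sigma_T=\theta^n(\Sigma_{\chi^n(T)})$ used above, as it is precisely where the equivalence hypothesis $T\sim\chi(T)$ enters.
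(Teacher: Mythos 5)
Your proof is correct and follows essentially the same route as the paper: both arguments combine Theorem~\ref{theo:length} (shrinking domains), Lemma~\ref{lem:distance} (small domains force long return times), uniform recurrence via Proposition~\ref{pro:regularur}, and the fixed-point remark following Theorem~\ref{theo:inductionbi}. The only difference is bookkeeping: the paper applies the whole argument to $\chi^m(T)$ for $m$ large, so the automorphism attached to $\chi^m$ already sends every letter to a word containing all letters (primitivity with $k=1$), whereas you keep the automorphism $\theta$ attached to $\chi$ and verify primitivity through a high power $\theta^k$ --- which is why you need the identity $\Sigma_T=\theta^n(\Sigma_{\chi^n(T)})$ that you rightly flag; it holds because equivalence makes each iteration of $\chi$ repeat the same sequence of elementary cases, and the paper's variant needs the analogous fact that $T$ and $\chi^m(T)$ remain equivalent.
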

\begin{proof}
By Proposition~\ref{pro:regularur}, the set $F(T)$ is uniformly recurrent.
Thus, there exists a positive integer $N$ such that every letter of the alphabet appears in every word of length $N$ of $F(T)$.
Moreover, by Theorem~\ref{theo:length}, applying iteratively the Rauzy induction, the length of the domains tends to zero.

Consider $T' = \chi^m (T)$, for a positive integer $m$, such that $D(T')~<~\varepsilon$, where $\varepsilon$ is the positive real number for which, by Lemma~\ref{lem:distance}, the first return map for every point of the domain is ``longer'' than $N$, i.e. $T'(z) = T^{n(z)}(z)$, with $n(z) \geq N$, for every $z \in D(T')$.

By Theorem~\ref{theo:inductionbi} and the remark following it, there exists an automorphism $\theta$ of the free group and a point $z \in D(T') \subseteq D(T)$ such that the natural coding of $T$ relative to $z$ is a fixed point of $\theta$, that is $\Sigma_T (z) = \theta\left( \Sigma_{T} (z) \right)$.

By the previous argument, the image of every letter by $\theta$ is longer than $N$, hence it contains every letter of the alphabet as a factor.
Therefore, $\theta$ is a primitive morphism.
\end{proof}

\begin{theorem}
\label{theo:morphic}
Let $T$ be a regular interval exchange transformation defined over a quadratic field.
The interval exchange set $F(T)$ is primitive morphic.
\end{theorem}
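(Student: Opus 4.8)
The plan is to combine the finiteness result of Theorem~\ref{theo:quadratic} with the primitivity statement of Proposition~\ref{pro:primitive}, and then to transport the resulting fixed point back to the original transformation $T$ by means of Theorem~\ref{theo:inductionbi}. The key observation is that Proposition~\ref{pro:primitive} yields a primitive morphism only when a transformation in the induction orbit is \emph{equivalent} to one of its own Rauzy-induced descendants. So the first task is to locate such a coincidence inside the orbit of $T$, the second is to apply Proposition~\ref{pro:primitive} there, and the third is to pull the conclusion back to $F(T)$.

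First I would iterate a single induction, say the right Rauzy induction $\psi$, and consider the sequence $T, \psi(T), \psi^2(T), \ldots$. By Theorem~\ref{theo:rauzy2} together with Theorem~\ref{theo:rauzy1}, these are all regular $s$-interval exchange transformations, each induced by $T$ on a right admissible (hence admissible) semi-interval. By Theorem~\ref{theo:quadratic} (equivalently, by the finiteness of the induction graph in Corollary~\ref{cor:graph}) these transformations fall into finitely many equivalence classes, so the pigeonhole principle yields indices $i<j$ with $\psi^i(T)$ equivalent to $\psi^j(T)$. Setting $T' = \psi^i(T)$ and $\chi = \psi^{\,j-i} \in \{\varphi,\psi\}^*$, the transformation $\chi(T') = \psi^j(T)$ is equivalent to $T'$, so Proposition~\ref{pro:primitive} applies to the pair $(T',\chi(T'))$ and furnishes a primitive morphism $\theta$ together with a point $z \in D(T')$ for which $\Sigma_{T'}(z) = \theta^{\omega}(a)$ is a fixed point of $\theta$.

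It then remains to pass from $T'$ back to $T$. Since $T' = \psi^i(T)$ with $\psi^i \in \{\varphi,\psi\}^*$, Theorem~\ref{theo:inductionbi} provides an automorphism $\theta_0$ of the free group on $A$, realized as a composition of the positive substitutions $\theta_1,\theta_2$ of Proposition~\ref{pro:autoelem} (and their left analogues), such that $\Sigma_T(z) = \theta_0(\Sigma_{T'}(z))$ for the chosen $z \in D(T') \subseteq D(T)$. Because this $\theta_0$ is a genuine non-erasing monoid morphism $A^* \to A^*$, the word $\Sigma_T(z) = \theta_0\big(\theta^{\omega}(a)\big)$ is exactly of the form required by the definition of a \emph{primitive morphic} word: it is the image under the morphism $\theta_0$ of the fixed point $\theta^{\omega}(a)$ of the \emph{primitive} morphism $\theta$. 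Finally, since $T$ is regular and hence minimal, $F(T) = F(\Sigma_T(z))$, so $F(T)$ is primitive morphic by the extended definition, which is the assertion of the theorem.

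The main obstacle is the equivalence-versus-mere-reachability issue in the pigeonhole step: Proposition~\ref{pro:primitive} demands that $T'$ and $\chi(T')$ be \emph{equivalent}, so one must iterate a single fixed induction in order to force a genuine return to the same equivalence class rather than just exhibiting two classes of the finite orbit. The other delicate point is verifying that the automorphism $\theta_0$ coming from the induction is positive, i.e. a substitution not involving inverses, since only then does $\Sigma_T(z)$ literally fit the definition of a morphic image of a fixed point; this is guaranteed because $\theta_1,\theta_2$ and their left counterparts are all non-erasing morphisms of $A^*$.
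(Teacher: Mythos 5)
Your proof is correct and follows essentially the same route as the paper's: use the finiteness coming from Theorem~\ref{theo:quadratic} to find a return to the same equivalence class along the induction orbit, apply Proposition~\ref{pro:primitive} there to get a primitive morphism fixing $\Sigma_{T'}(z)$, and pull this back to $\Sigma_T(z)$ via Theorem~\ref{theo:inductionbi}. Your pigeonhole on the pure $\psi$-iterates is simply a concrete way of producing the path-plus-cycle in the induction graph that the paper invokes, and your explicit verification that the connecting automorphism $\theta_0$ is a positive (non-erasing) morphism of $A^*$ is a useful detail that the paper leaves implicit.
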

\begin{proof}
By Theorem~\ref{theo:quadratic} there exists a regular interval transformation $S$ such that we can find in the induction graph $\mathcal{G}(T)$ a path from $[T]$ to $[S]$ followed by a cycle on $[S]$.
Thus, by Theorem~\ref{theo:inductionbi} there exists a point $z \in D(S)$ and two automorphisms $\theta, \eta$ of the free group such that $\Sigma_T(z) = \theta \left( \Sigma_S (z) \right)$, with $\Sigma_S (z)$ a fixed point of $\eta$.

By Proposition~\ref{pro:primitive} we can suppose, without loss of generality, that $\eta$ is primitive.
Therefore, $F(T)$ is a primitive morphic set.
\end{proof}

\begin{example}
Let $T = T_{\lambda, \pi}$ be the transformation of Example~\ref{ex:2alpha} (see also~\ref{ex:t}).
The set $F(T)$ is primitive morphic.
Indeed the transformation $T$ is regular and the length vector $\lambda = (1-2\alpha, \alpha, \alpha)$ belongs to $\Q\left[\sqrt{5}\right]^3$.
\end{example}

\bibliographystyle{plain}
\bibliography{branching}

\end{document}